\newenvironment{proof}{{\indent  \indent \it Proof:}}{\hfill $\blacksquare$}
\begin{document}
\title{Intelligent Reflecting Surface Enabled Multi-Target Sensing}
\begin{spacing}{1.4}

\author{
	Kaitao Meng, \textit{Member, IEEE}, Qingqing Wu, \textit{Senior Member, IEEE}, Robert Schober, \textit{Fellow, IEEE}, and Wen Chen, \textit{Senior Member, IEEE}
	\thanks{{K. Meng is with the State Key Laboratory of Internet of Things for Smart City, University of Macau, Macau, 999078, China. }{(emails: \{kaitaomeng\}@um.edu.mo)}. Q. Wu is with Shanghai Jiao Tong University, Shanghai 201210, also with the State Key Laboratory of Internet of Things for Smart City, University of Macau, Macau, 999078, China. {(emails: \{qingqingwu\}@um.edu.mo) Robert Schober is with the Institute for Digital Communications, Friedrich-Alexander University Erlangen-Nürnberg (FAU), 91054 Erlangen, Germany (e-mail: robert.schober@fau.de). W. Chen is with the Department of Electronic Engineering, Shanghai Jiao Tong University, Shanghai 201210, China (email: wenchen@sjtu.edu.cn).} } 
}
\maketitle

\vspace{-6mm}
\begin{abstract}
	Besides improving communication performance, intelligent reflecting surfaces (IRSs) are also promising enablers for achieving larger sensing coverage and enhanced sensing quality. Nevertheless, in the absence of a direct path between the base station (BS) and the targets, multi-target sensing is generally very difficult, since IRSs are incapable of proactively transmitting sensing beams or analyzing target information. Moreover, the echoes of different targets reflected via the IRS-assisted virtual links arrive at the BS from the same direction. In this paper, we study a wireless system comprising a multi-antenna BS and an IRS for multi-target sensing, where the beamforming vector and the IRS phase shifts are jointly optimized to improve the sensing performance. To meet the different sensing requirements, such as a minimum received power and a minimum sensing frequency, we propose three novel IRS-assisted sensing schemes: \textit{Time division (TD) sensing}, \textit{signature sequence (SS) sensing}, and \textit{hybrid TD-SS sensing}. For TD sensing, the sensing tasks are performed in sequence over time. In contrast, the novel SS sensing scheme senses all targets simultaneously and establishes a relationship between the target directions and SSs. To strike a flexible balance between the beam pattern gain and sensing efficiency, we also propose a general hybrid TD-SS sensing scheme with target grouping, where targets belonging to the same group are sensed simultaneously via SS sensing, while the targets in different groups are assigned to orthogonal time slots. By controlling the number of groups, hybrid TD-SS sensing can provide a more flexible balance between beam pattern gain and sensing frequency. Moreover, we propose a two-layer penalty-based algorithm to solve the challenging non-convex optimization problem for the joint design of the BS beamformers, IRS phase shifts, and target grouping. Simulation results demonstrate the effectiveness of the proposed hybrid scheme in achieving a flexible trade-off between beam pattern gain and sensing frequency. Our results also reveal that the power leakage in unintended directions is larger for tighter interference constraints.
\end{abstract}   

\begin{IEEEkeywords}
	Intelligent reflecting surface, non-line-of-sight (NLoS), beamforming, multi-target sensing, signature sequence, time division.
\end{IEEEkeywords}
\newtheorem{thm}{\bf Lemma}
\newtheorem{remark}{\bf Remark}
\newtheorem{Pro}{\bf Proposition}
\newtheorem{theorem}{Theorem}

\section{Introduction}

With the development of emerging environment-aware applications such as autonomous driving, smart manufacturing, and disaster monitoring, both communication and sensing services have to meet more stringent requirements in next-generation wireless networks \cite{liu2021integrated, Mu2022NOMA}. Fortunately, benefiting from advancements in millimeter-wave (mmWave)/terahertz (THz) and multiple-input multiple-output (MIMO) technologies, future base stations (BSs) can not only provide enhanced-quality and low-latency communication but also enable high-resolution and high-accuracy sensing via narrow highly directional pencil-like beams  \cite{Liu2020JointRadar, Cui2021Integrating, LiuX2020Joint, Meng2022UAV}. However, the requirements for sensing are not the same as those for communication due to the different performance metrics and information acquisition methods. Specifically, with a proper pilot training design and channel estimation, both line-of-sight (LoS) and non-LoS (NLoS) paths can be exploited for data transmission; in contrast, for sensing, typically only LoS paths are exploited while NLoS paths are treated as unfavorable interference for target detection and parameter estimation \cite{TargetDetectionLocalization, sume2011radar}. Thus, one critical issue for sensing via multi-antenna BSs is that their coverage may be practically limited due to blocked areas caused by obstacles, especially in urban environments with many potential obstructions \cite{Solomitckii2021Radar}. 
\par
Recently, intelligent reflecting surfaces (IRSs) have been proposed as a promising technique to achieve larger communication coverage and improved transmission quality \cite{liu2020matrix, wu2019beamformingDiscrete, you2020channel, Hua2022Novel, chen2021irs}. By exploiting the adaptability of the amplitudes and/or phase shifts of the uniformly placed passive IRS elements, a controllable wireless propagation environment with increased degrees of freedom (DoFs) can be realized, thereby enhancing signal reception at a designated receiver and avoiding interference to other users\cite{Abeywickrama2020Intelligent, hua2022joint}. Besides improving communication performance, IRSs can also be exploited by multi-antenna radar BSs to increase the signal power reflected from the target; IRSs can aid in establishing artificial virtual LoS links from the BS to targets in the absence of direct paths between them \cite{Fang2021SINR, zheng2021double}. Accordingly, by leveraging their controllability, IRSs can be utilized for detection of nearby targets for security management (e.g., eavesdropper detection and unmanned aerial vehicle (UAV) monitoring) \cite{Lu2021intelligent, Lu2021Target} and localization of legitimate devices for signaling overhead reduction \cite{Zhang2021MetaLocalization}. In \cite{buzzi2021foundations}, an IRS is employed to enhance target detection performance by optimizing its phase shifts to concentrate the radiated power at specific locations. Furthermore, IRS phase shifts and BS beamformers have been jointly optimized for radar sensing systems to facilitate IRS-assisted sensing \cite{buzzi2021radar}. This joint beamforming design can also benefit integrated sensing and communication (ISAC) systems \cite{Prasobh2021Joint}. However, the systems considered in \cite{buzzi2021foundations, buzzi2021radar, Prasobh2021Joint} assume a single-target scenario, and the developed schemes may not be directly applicable to the multi-target case.  
\par
In fact, IRS-assisted multi-target sensing in blocked areas is very challenging due to the IRS's limited beam design and signal processing capabilities. In \cite{zhang2022metaradar}, the radar waveform and the IRS phase shifts were jointly optimized to improve multi-target detection accuracy in radar systems. However, this scheme can only be utilized when BS and IRS are close to each other, limiting the improvement in sensing coverage. In a recent work \cite{Aubry2021Reconfigurable}, the authors studied IRS-aided radar surveillance in NLoS scenarios, where a sequential scanning protocol was adopted to cover the NLoS area. This scanning protocol incurs low hardware cost but may cause a low pulse repetition frequency (PRF) \cite{Rambach2013Colocated} and asynchronous estimation errors for multi-target scenarios \cite{Hügler2018Radar}, \cite{Hakobyan2019High}. An IRS-assisted ISAC system was investigated in \cite{song2021joint}, where the beam pattern gain of the IRS was optimized for target sensing, while accounting for communication requirements. However, careful beam pattern gain design alone may not ensure reliable IRS-assisted target sensing in blocked areas due to the following reasons. First, direction-of-arrival (DOA) estimation has to rely on the reflected echoes received at the BS \cite{aubry2021ris}, since the IRS is incapable of proactively transmitting beams or analyzing target information. Second, since the BS-IRS link is shared by all targets, the echoes of different targets reflected by the IRS arrive from the same direction at the BS, thus making DOA estimation very challenging  \cite{Luzhou2006Radar, Stoica2007Probing}. Therefore, in practice, effective simultaneous multi-target sensing via IRS-assisted virtual LoS links is difficult.
\par 
Motivated by the above discussion, we study an IRS-assisted MIMO radar system, where one IRS and one multi-antenna BS are jointly designed to sense multiple targets in a blocked area. To meet the different sensing requirements, such as a minimum received power and a minimum sensing frequency,\footnote{Sensing frequency refers to the number of times that the same target can be sensed in one second.} we propose three novel IRS-assisted sensing schemes: \textit{Time division (TD) sensing}, \textit{signature sequence (SS) sensing}, and \textit{hybrid TD-SS sensing}. First, for \textit{TD sensing}, by sequentially sensing each target in orthogonal time slots, all IRS elements can be exploited to focus the beam pattern on one target at a time and interference between different targets is avoided. The limited sensing frequency of \textit{TD sensing} can be overcome by the second scheme, \textit{SS sensing}, which enables simultaneous multi-target sensing. Different from conventional radar systems that distinguish the targets based on the direction from which the corresponding echos arrive, we propose to modulate the sensing beams with dedicated SSs to establish a relationship between the directions/targets and the SSs. The power leakage of the radar/sensing beams is reduced by introducing suitable constraints on the beam pattern. This facilitates the simultaneous detection of multiple targets by analyzing the received echo signals at the BS, thereby achieving high sensing efficiency. However, due to the potentially limited number of radio frequency (RF) chains and the finite BS transmit power, the number of targets that can be sensed simultaneously is limited. Hence, a more general scheme, \textit{hybrid TD-SS sensing}, is proposed to divide the sensing targets into several disjoint groups, and perform \textit{SS sensing} within each group to improve the intra-group sensing efficiency and \textit{TD sensing} across different groups to avoid inter-group interference. 
\par
For the three proposed sensing schemes, the joint optimization of the beamformer at the BS, the IRS phase shifts, and the target grouping is studied to maximize the beam pattern gains in the targets' directions, while limiting the interference in other directions. However, solving the resulting optimization problem is highly non-trivial due to the closely coupled integer optimization variables and highly non-convex constraints. To overcome these challenges, we propose a two-layer penalty-based algorithm to solve the challenging non-convex optimization problem. The main contributions of this paper can be summarized as follows:
\begin{itemize}
	\item We propose three sensing schemes: \textit{TD sensing}, \textit{SS sensing}, and \textit{hybrid TD-SS sensing}. For \textit{TD sensing}, multiple targets are sensed sequentially in orthogonal time slots. For \textit{SS sensing}, we develop a novel SS beamforming scheme for IRS-assisted multi-antenna BSs, facilitating simultaneous multi-target sensing based on a one-to-one mapping between targets and sensing beams modulated with dedicated SSs. For \textit{hybrid TD-SS sensing}, a more flexible trade-off between beam pattern gain and sensing frequency is realized based on the joint design of the BS beamformers, IRS phase shifts, and target grouping.
	\item The beamformer at the BS and the IRS phase shifts are jointly optimized. For \textit{TD sensing}, the formulated optimization problem is converted to a semidefinite programming (SDP) problem and the rank-one constraint on the IRS phase shift vector is transformed to facilitate phase shift design. For \textit{SS sensing}, we prove that rank-one beamforming at the BS is optimal, and based on this, an inner approximation algorithm is proposed to obtain a locally optimal solution to the joint BS beamformer and IRS phase shift optimization problem with guaranteed convergence to a Karush-Kuhn-Tucker (KKT) point. Also, for the two-target scenario and any given phase shift configuration, we provide the optimal transmit beamforming vector in closed form.
	\item We prove that there is a fundamental trade-off between beam pattern gain and sensing frequency for the proposed \textit{hybrid TD-SS sensing} scheme. The big-M formulation is adopted to decompose the coupled integer optimization variables. Then, a two-layer penalty-based algorithm is proposed to jointly optimize the beamforming vectors at the BS, the IRS phase shifts, and the target grouping. 
	\item Finally, our simulation results verify the trade-off between beam pattern gain and sensing frequency for IRS-assisted sensing systems and validate the superiority of the proposed schemes over benchmark schemes. Our results also reveal that as the interference constraints become stricter and/or the number of targets in each group increases, the maximum beam pattern gain decreases while the power leakage in unintended directions increases. 
\end{itemize}

The remainder of this paper is organized as follows. Section \ref{SYSTEM} introduces the system model and problem formulation for IRS-assisted sensing. In Section \ref{TDandCosensing}, the {\textit{TD}} and \textit{SS sensing} schemes are proposed and analyzed. Section \ref{AnalysisGeneral} presents a penalty-based algorithm for \textit{hybrid TD-SS sensing}. Section \ref{Simulations} provides numerical results to validate the performance of the proposed sensing schemes. Section \ref{Conclusion} concludes this paper.

\textit{Notations}: $\|{\bm{x}} \|$ denotes the Euclidean norm of a complex-valued vector ${\bm{x}}$. ${\rm{diag}}({\bm{x}})$ denotes a diagonal matrix whose main diagonal elements are the elements of ${\bm{x}}$. $[{\bm{x}}]_{n}$ denotes the $n$th element of ${\bm{x}}$. For a general matrix ${\bm{X}}$, $\operatorname{rank}({\bm{X}})$, ${\bm{X}}^H$, ${\bm{X}}^T$, and $[{\bm{X}}]_{m,n}$ respectively denote its rank, conjugate transpose, transpose, and the element in the $m$th row and $n$th column. For a square matrix ${\bm{Y}}$, ${\rm{Tr}}({\bm{Y}})$ and ${\bm{Y}}^{-1}$ denote its trace and inverse, respectively, while ${\bm{Y}} \succeq 0$ indicates that ${\bm{Y}}$ is a positive semidefinite matrix. $\sigma_n({\bm{X}})$ denotes the $n$th eigenvalue of matrix ${\bm{X}}$. $\mathbb{E}(\cdot)$ denotes statistical expectation, and $\{\cdot\}$ represents a variable set. ${\rm{Re}} \left\{{\bm{X}}\right\}$ denotes the real part of matrix ${\bm{X}}$, and ${\cal{O}}(\cdot)$ is the big-O notation for computational complexity. 

\begin{figure}[t]
	\centering
	\setlength{\abovecaptionskip}{0.cm}
	\includegraphics[width=15cm]{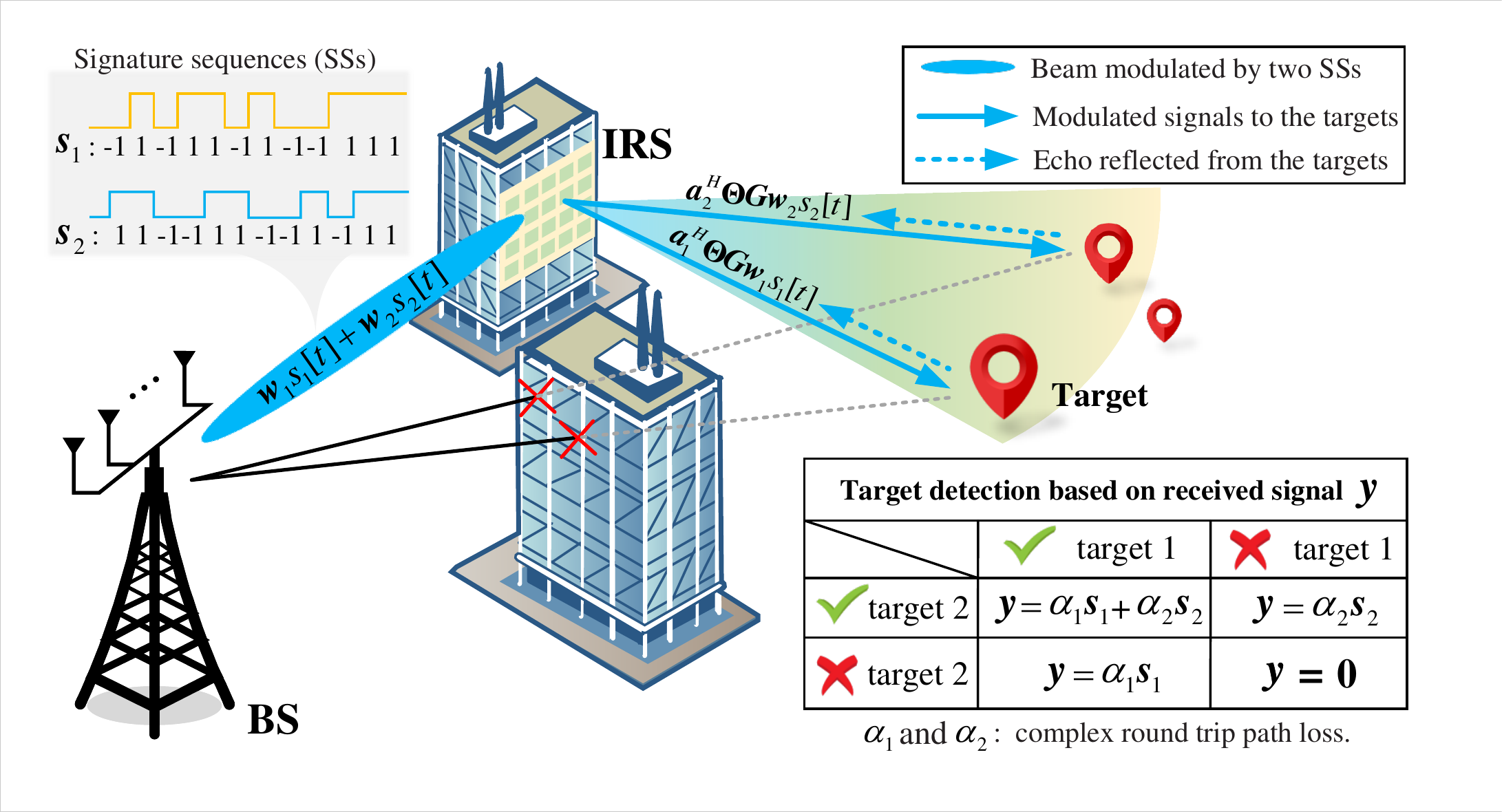}
	\caption{Illustration of IRS-assisted sensing of multiple targets based on a dedicated radar/sensing beam design.}
	\label{figure1}
\end{figure}
\section{Problem Formulation and System Model}
\label{SYSTEM}

\subsection{IRS Enabled Sensing Model}
\label{SpatialCoding}
\par
As shown in Fig. 1, we consider an IRS-assisted multi-antenna BS aiming to detect multiple targets blocked by obstacles, e.g., buildings and billboards. Without loss of generality, we assume that the BS employs a uniform linear array (ULA) having $M$ antenna elements. A uniform planar array (UPA) is equipped at the IRS. The center of the IRS is assumed to be located at the origin of the 3-dimensional (3D) Cartesian coordinate system, while adjacent IRS elements are separated by $d_x$ and $d_y$ in $x$- and $y$-direction, respectively, as shown in Fig.~\ref{figure2}. The number of reflecting IRS elements is denoted by $N = N_x \times N_y$, where $N_x$ and $N_y$ denote the number of elements along the $x$- and $y$-axis, respectively. Moreover, to reduce the design complexity and hardware cost, the reflection amplitude of the IRS elements is assumed to be equal to one \cite{wu2019beamformingDiscrete}. The reflection-coefficient matrix of the IRS is denoted by ${\bm{\Theta}} = {\rm{diag}}(e^{j \theta_{1}}, ... , e^{j \theta_{N}})$, where $j$ denotes the imaginary unit, and $\theta_{n} \in [0, 2\pi)$ is the phase shift of the $n$th IRS element, $n \in {\cal{N}} = \{ 1,\cdots,N\}$. 
\par 
\begin{figure}[t]
	\centering
	\setlength{\abovecaptionskip}{0.cm}
	\includegraphics[width=13.5	cm]{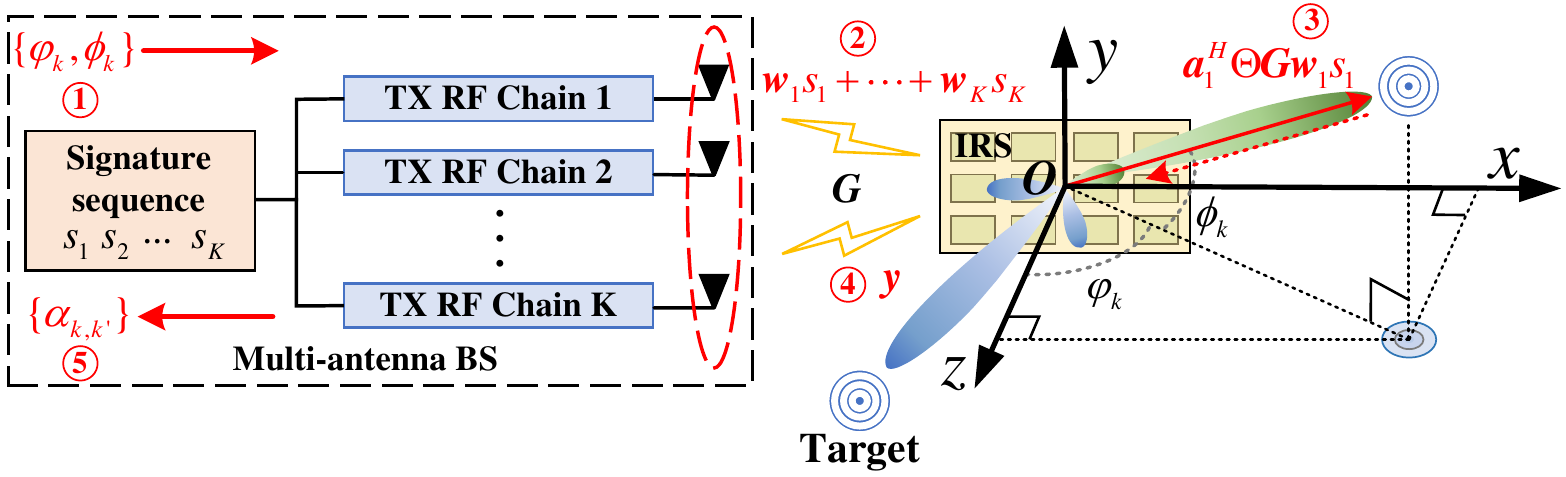}
	\caption{Illustration of SS sensing model for the considered IRS-assisted multi-target sensing scenario. The red arrows and the circles with numbers represent the inputs/outputs and the order of the signal flow, respectively.}
	\label{figure2}
\end{figure}
The directions of the prospective targets are indexed by $k \in {\cal{K}} = \{1,\cdots,K\}$. Specifically, target $k$'s direction is defined by $\{\varphi_k, \phi_k\}$, where $\varphi_k$ and $\phi_k$ are respectively the azimuth and elevation angles of the geometric path connecting the IRS and target $k$, and are assumed to be known from the specific requirements of the underlying sensing task. For target detection, directions $\{\varphi_k, \phi_k\}$ can be chosen as general directions towards a region, where targets are expected, or as the estimated directions of targets based on previous sensing results for target tracking. The transmission link between BS and IRS is characterized by matrix ${\bm{G}} \in {\mathbb{C}}^{M \times N}$, which can be estimated by the methods proposed in \cite{Hu2021Semi, Zheng2020Intelligent}. This channel is slowly varying in practice since the BS and the IRS are deployed at fixed locations. The reflect array response of the IRS in the direction of the $k$th target can be written as \cite{Wei2021Sum}
\begin{equation}
	\begin{aligned}
		{\bm{a}}_{k}
		= \left[1, \cdots, e^{ \frac{-j 2 \pi\left(N_{x}-1\right) {d}_{x} {\Phi}_k }{\lambda}}\right]^T  \otimes\left[1, \cdots, e^{ \frac{ -j 2 \pi\left(N_{y}-1\right) {d}_{y} {\Omega}_k }{\lambda}}\right]^T,
	\end{aligned}
\end{equation}
where ${\Phi}_k \triangleq \sin \left(\varphi_k\right) \cos \left(\phi_k\right)$, ${\Omega}_k \triangleq \sin \left(\varphi_k\right) \sin \left(\phi_k\right)$, $\lambda$ is the carrier wavelength, and $\otimes$ denotes the Kronecker product. 

\subsection{SS Sensing Model}
In the proposed SS sensing scheme, the MIMO BS transmits several dedicated radar beams modulated by SSs ${\bm{s}}_k = [s_k[1],\cdots,s_k[N_p]]^T$, $\forall k$, where $N_p$ denotes the number of pulses and $s_k[t]$ is the SS for sensing target $k$ in the $t$th pulse repetition interval (PRI) \cite{Xiao2022Waveform, Blunt2016Overview}. Exploiting the signatures, a one-to-one mapping between SS ${\bm{s}}_k$ and intended direction $\{\varphi_k, \phi_k\}$ can be established, as illustrated in Fig.~\ref{figure1}. Accordingly, whether there is a target in direction $k$ or not can be determined by detecting the echo corresponding to SS ${\bm{s}}_k$ at the BS; similarly, for target tracking, the parameters of target $k$ within the range-Doppler region of interest can be analyzed at the BS based on the propagation delay and frequency shift of the echoes modulated by SS ${\bm{s}}_k$. To be able to effectively distinguish the echoes corresponding to different dedicated SSs, an orthogonal codebook is utilized, i.e., ${\bm{s}^H_k}{\bm{s}_{k'}} = 0$ and ${\bm{s}^H_k}{\bm{s}_{k}} = N_p$, where $k \ne k'$, $k, k' \in {\cal{K}}$, and $|s_k[t]|^2 = 1$. Linear transmit precoding is applied at the BS, where a dedicated radar/sensing beam is allocated to each target. Hence, in time interval $t$, the complex baseband transmitted signal at the BS can be expressed as a weighted sum of radar beams, i.e.,  
\begin{equation}
	{\bm{x}}[t] = \sum\nolimits_{k = 1}^K{{{\bm{w}}_{k}} {s_{k}}[t]},
\end{equation}
where ${{\bm{w}}_{k}} \in \mathbb{C}^{M \times 1}$ is the transmit beamforming vector for target $k$. Accordingly, the beam pattern gain at the IRS for SS $k$ in direction $\{\varphi_k, \phi_k\}$ can be expressed as follows:
\begin{equation}\label{UserReceivedSignal}
	\mathcal{P}_k = \mathbb{E}\left(\left|\bm{a}^{{H}}_k \bm{\Theta} \bm{G} {{{\bm{w}}_{k}} {s_{k}}[t]}\right|^{2}\right) = \bm{a}^{{H}}_k \underbrace {\bm{\Theta}  \bm{G}{{\bm{w}}_{k}}{{\bm{w}}^H_{k}} \bm{G}^{H} \bm{\Theta}^{{H}}}_{\text{covariance matrix at the IRS}} \bm{a}_k.
\end{equation}
Furthermore, the received signal at the BS in time interval $t$ is given by
\begin{equation}\label{BeampatternGain}
	\begin{aligned}
		&{\bm{y}}[t] =\sum\nolimits_{k = 1}^K   \underbrace{\bm{G}^{H} \bm{\Theta}^{{H}} \bm{a}_{k} \beta_{k} \bm{a}^{{H}}_{k} {\bm{\Theta}  \bm{G}{{\bm{w}}_{k}}  {s_{k}}[t - \tau_{k}]}}_{\text{SS ${\bm{s}}_k$ reflected from target $k$}}  \\
		+&  \sum\nolimits_{k = 1}^K \underbrace{\sum\nolimits_{k' \in {\cal{K}} \backslash k } \bm{G}^{H} \bm{\Theta}^{{H}} \bm{a}_{k'} \beta_{k'} \bm{a}^{{H}}_{k'} {\bm{\Theta}  \bm{G}{{\bm{w}}_{k}}  {s_{k}}[t - \tau_{k'}]}}_{\text{SS ${\bm{s}}_k$ reflected from unassociated targets}} + {\bm{n}}[t],
	\end{aligned}
\end{equation}
where $\beta_{k}$ represents the reflection coefficient of the $k$th target multiplied by the round-trip path loss of the IRS-target-IRS link, $\tau_k$ is the round-trip propagation delay of signals reflected from target $k$, which is assumed to have an integer value for simplicity of presentation, ${\bm{n}}[t]$ is additive white Gaussian noise (AWGN) with covariance matrix $\sigma^2 {\bm{I}}_{M}$, and ${\bm{I}}_{M}$ is an identity matrix of size $M$. Unlike traditional beam pattern designs \cite{Liu2020JointRadar, LiuX2020Joint}, for the proposed SS sensing, the beam pattern gain of the radar beam intended for target $k$ towards other targets is considered to cause undesired interference, see (\ref{BeampatternGain}). The signal energy received from undesired directions may result in poor detection performance. To limit the interference induced by the power leakage of the dedicated radar/sensing beams in other targets' directions, e.g., due to overlapping main lobes or high side lobes of different beams, interference constraints are introduced as follows:
\begin{equation}\label{CrossCorrelationfirst}
	\sum\nolimits_{k'\in {\cal{K}} \backslash k } \mathbb{E}\left(\left|\bm{a}^{{H}}_{k'} \bm{\Theta} \bm{G} {{{\bm{w}}_{k}} {s_{k}}}\right|^{2}\right) = \sum\nolimits_{k'\in {\cal{K}} \backslash k }  \bm{a}^{{H}}_{k'} \bm{\Theta}  \bm{G}{{\bm{w}}_{k}}{{\bm{w}}^H_{k}} \bm{G}^{H} \bm{\Theta}^{{H}} \bm{a}_{k'} \le \varepsilon, \forall k,
\end{equation}
where $\varepsilon$ denotes the interference threshold.\footnote{$\varepsilon$ should be much lower than the beam pattern gain in the desired directions, and may be set one to two orders of magnitude lower than the maximum beam pattern gain.} For $\varepsilon = 0$, conventional zero forcing (ZF)-based beamforming is realized. We note that different from the cross-correlation pattern constraints in \cite{Stoica2007Probing}, for SS sensing, we adopt several dedicated radar beams to simultaneously sense multiple directions via the IRS, as illustrated in Fig.~\ref{figure1}, and maximize the beam pattern gain towards the intended directions while limiting the interference caused by the beams in other directions. As a consequence, the received signals at the BS in time interval $t$ can be modeled as
\begin{equation}\label{SimplifiedBeampatternGain}
{\bm{y}}[t] =\sum\nolimits_{k = 1}^K   \bm{G}^{H} \bm{\Theta}^{{H}} \bm{a}_{k} \beta_{k} \bm{a}^{{H}}_{k} {\bm{\Theta}  \bm{G}{{\bm{w}}_{k}}  {s_{k}}[t - \tau_{k}]}  + \tilde{\bm{n}}[t],
\end{equation}
where $\tilde{\bm{n}}[t] \in \mathbb{C}^{M \times 1}$ represents the residual interference plus noise. In this case, the optimal receiver filter ${\bm{f}}_{k}^H$ maximizing the signal power reflected from target $k$ is given by ${\bm{f}}_{k}^H = \frac{(\bm{G}^{H} \bm{\Theta}^{{H}} \bm{a}_{k} \beta_{k} \bm{a}^{{H}}_{k} {\bm{\Theta}  \bm{G}{{\bm{w}}_{k}}})^H}{\| (\bm{G}^{H} \bm{\Theta}^{{H}} \bm{a}_{k} \beta_{k} \bm{a}^{{H}}_{k} {\bm{\Theta}  \bm{G}{{\bm{w}}_{k}})} \|}$. As a result, the combined signal for target $k$ in time interval $t$ is given by 
\begin{equation}
	{\tilde{y}}_k[t] = {\bm{f}}_{k}^H \left(  \sum\nolimits_{{i} = 1}^K \bm{G}^{H} \bm{\Theta}^{{H}} \bm{a}_{i} \beta_{i} \bm{a}^{{H}}_{i} \bm{\Theta}  \bm{G}{{\bm{w}}_{i}}  {s_{i}}[t - \tau_{i}] +  \tilde{\bm{n}}[t]\right).
\end{equation}
Let $\alpha_{k,i} = {\bm{f}}_{k}^H \bm{G}^{H} \bm{\Theta}^{{H}} \bm{a}_{i} \beta_{i} \bm{a}^{{H}}_{i} \bm{\Theta}  \bm{G}{{\bm{w}}_{i}}$ and $\tilde{\bm{y}}_k = [{\tilde{y}}_k[1 + \tau_k],\cdots,{\tilde{y}}_k[N_p + \tau_k]]^T$. By multiplying $\tilde{\bm{y}}_k$ with ${\bm{s}}^H_k$, the component corresponding to target $k$ can be extracted from the received signal $\tilde{\bm{y}}_k$, i.e., 
\begin{equation}\label{TransformationReceiver}
		z_k = {\bm{s}}^H_{k} \tilde{\bm{y}}_k = \sum\nolimits_{{i} = 1}^K \alpha_{k,{i}} {\bm{s}}^H_{k} {\bm{s}}_{i} +  {\bm{f}}_{k}^H \sum\nolimits_{t = 1}^{N_p}  {s_{k}}[t]  \tilde{\bm{n}}[t + \tau_k] \overset{(a)}{=} \alpha_{k,{k}} {\bm{s}}^H_{k} {\bm{s}}_{k} +  \bar {n}_k,
\end{equation}
where $\bar {n}_k = {\bm{f}}_{k}^H  \sum\nolimits_{t = 1}^{N_p}  {s_{k}}[t] \tilde{\bm{n}}[t+\tau_k]$. In (\ref{TransformationReceiver}), ($a$) holds since ${\bm{s}^H_k}{\bm{s}_{i}} = 0$ for $k \ne i$. Similar to the key idea of code-division multiple access (CDMA) to distinguish different communication users, in the proposed sensing scheme, SSs are used to separate targets.\footnote{Since the propagation time of the IRS-target-IRS link is practically short, propagation time differences to different targets are assumed to be within one symbol interval. Furthermore, for larger propagation time differences, a large area synchronized (LAS)-CDMA code can be utilized to avoid interference within a maximum allowed time difference \cite{wei2006uplink}. However, in this case, the maximum number of targets that can be sensed will be reduced since the number of orthogonal SSs decreases.} By establishing a one-to-one mapping between target directions $\{\{\phi_k,\varphi_k\}\}_{k=1}^{K}$ and SSs $\{{\bm{s}}_k\}_{k=1}^{K}$, the echoes reflected by different targets become distinguishable. Furthermore, given the desired sensing directions, the BS beamformer and the IRS phase shifts can be jointly designed to focus the power of SS ${\bm{s}}_k$ towards the corresponding target, as shown in Fig.~2. The extracted SS ${\bm{s}}_k$ reflected from target $k$ can be exploited for target detection. For example, a binary hypothesis test may be formulated based on (\ref{TransformationReceiver}). Specifically, the target detection problem in the proposed scheme can be formulated as $K$ composite binary hypothesis tests \cite{richards2014fundamentals}. The binary hypothesis test for target $k$ is given as follows:

${\mathcal{H}}_{k}^0$ : No target in the $k$th direction;

${\mathcal{H}}_k^1$ : A target exists in the $k$th direction.

The detection model can be described by the following hypothesis testing problem \cite{Fishler2006Spatial}.
\vspace{-2mm}  
\begin{equation}
	z_k=\left\{\begin{array}{l}
		\mathcal{H}^{0}_k: \bar {n}_k  \\
		\mathcal{H}^{1}_k: \alpha_{k,k} {\bm{s}}^H_k {\bm{s}}_k + \bar {n}_k
	\end{array}\right.,
\vspace{-2mm}  
\end{equation}
Then, assuming $\bar {n}_k$ is Gaussian distributed, the optimal detector is given by $E=|z_k|^{2} \underset{\mathcal{H}^{0}_k}{\overset{\mathcal{H}^{1}_k}{\lessgtr}} \mu$ \cite{Fishler2006Spatial}, where the decision threshold $\mu$ is set to meet the desired false alarm rate. The reliability of the decision increases with increasing $|\alpha_{k,{k}}|$ and decreasing variance of $\bar {n}_k$. Therefore, the beam pattern gains of the dedicated beams in the respective target directions should be maximized while limiting the power leakage in the directions of the other targets.

\subsection{Problem Formulation}
In this paper, we aim to maximize the minimum beam pattern gains in the target directions by optimizing the IRS phase shifts and the beamforming vectors at the BS, subject to sensing interference and maximum transmit power constraints. 

\subsubsection{{TD sensing}} By performing each sensing task in a separate time slot, interference between different radar beams can be avoided and the entire beam pattern gain can be focused in the desired direction. Then, the beamforming vector and IRS phase shift optimization problem can be decoupled into $K$ sub-problems, i.e.,
\vspace{-2mm}  
\begin{subequations}\label{P1-TD}
\begin{align}
	(\rm{P1\!-\!TD}): \quad & \begin{array}{*{20}{c}}
		\label{P1-TMD-a}\mathop {\max }\limits_{{\bm{w}}_{k}, {\bm{v}}_k} \quad \mathcal{P}_k 
	\end{array}  \\ 
	\mbox{s.t.}\quad
	\label{P1-TMD-b}& \| {\bm{w}}_{k} \|^2 \le P^{\max}, \\
	\label{P1-TMD-c}& |[{\bm{v}}_k]_{n}| = 1, \forall n \in {\cal{N}},  
	\vspace{-2mm}  
\end{align} 
\end{subequations}
where $[{\bm{v}}_k]_{n} = e^{-j\theta_{k,n}}$, ${\bm{v}}_k^H = [e^{-j\theta_{k,1}},\cdots,e^{-j\theta_{k,N}}] \in \mathbb{C}^{1 \times N}$, $\mathcal{P}_k = \bm{a}^{{H}}_k \bm{\Theta}  \bm{G}{{\bm{w}}_{k}}{{\bm{w}}^H_{k}} \bm{G}^{H} \bm{\Theta}^{{H}} \bm{a}_k = {\bm{v}}^H_k {\bm{Q}}_k {{\bm{w}}_{k}} {{\bm{w}}^H_{k}} {\bm{Q}}^H_k {\bm{v}}_k $, and ${\bm{Q}}_k = {\rm{diag}}({\bm{a}}^H_k){\bm{G}} \in \mathbb{C}^{N \times M}$. For problem (\rm{P1-TD}), constraint ({\ref{P1-TMD-b}}) limits the maximum transmit power, and (\ref{P1-TMD-c}) is the unit-modulus constraint for the IRS phase shifts. 

\subsubsection{SS Sensing}
By associating SS ${\bm{s}}_k$ with desired target $k$, all sensing tasks can be performed simultaneously. Then, our objective is to maximize the minimum beam pattern gain towards the targets, which can be formulated as follows:
\vspace{-2mm}  
\begin{subequations}\label{P1-Co}
\begin{align}
	(\rm{P1\!-\!SS}): \quad \label{P1-SS-a}& \begin{array}{*{20}{c}}
		\mathop {\max }\limits_{\{{\bm{w}}_{k}\}, {\bm{v}}} \quad \mathop {\min }\limits_{k \in {\cal{K}}} \quad \mathop  \mathcal{P}_k
	\end{array}  \\ 
	\mbox{s.t.}\quad
	\label{P1-SS-b}& \sum\nolimits_{k'\in {\cal{K}} \backslash k } {\bm{v}}^H {\bm{Q}}_{k'} {\bm{w}}_{k}  {\bm{w}}^H_{k} {\bm{Q}}^H_{k'} {\bm{v}} \le \varepsilon,   \forall k \in {\cal{K}}, \\
	\label{P1-SS-c}& \sum\nolimits_{k = 1}^K \| {\bm{w}}_{k} \|^2 \le P^{\max},   \\
	\label{P1-SS-d}& |[{\bm{v}}_k]_{n}| = 1, \forall n \in {\cal{N}}, 
	\vspace{-2mm}  
\end{align} 
\end{subequations}
where ${\bm{v}}^H = [e^{-j\theta_{1}},\cdots,e^{-j\theta_{N}}]$. In problem (\rm{P1-SS}), constraint (\ref{P1-SS-b}) ensures that the sum interference of the radar signal indented for target $k$ does not exceed the given threshold $\varepsilon$ in the directions of the other targets. 

It is challenging to optimally solve non-convex problems (\rm{P1-TD}) and (\rm{P1-SS}) since the optimization variables are coupled in the objective function and the constraints. To tackle this issue, in the next section, the properties of both problems are investigated first, and then, efficient algorithms are proposed to obtain high-quality solutions. 

\section{TD and SS Sensing}
\label{TDandCosensing}
\subsection{TD Sensing}
\label{TDSensingMechanism}
In this case, the targets are sensed separately via TD, and maximum ratio transmission (MRT) is the optimal beamforming scheme \cite{Lu2021Aerial}, i.e., ${\bm{w}}^*_{k} = \frac{\sqrt{P^{\max}}{\bm{v}}^H_k {\bm{Q}}_{k}}{{\|{\bm{v}}^H_k {\bm{Q}}_{k}\|}}$. Thus, the optimal reflection-coefficient vector ${\bm{v}}_k$ can be obtained by maximizing the equivalent channel gain $\|{\bm{v}}^H_k {\bm{Q}}_{k}\|$, i.e., problem (P1-TD) simplifies to (after dropping index $k$)
\vspace{-2mm}  
\begin{equation}\label{P1-TDSub}
	({\rm{P1-TD.1}}): \quad  \begin{array}{*{20}{c}}
		\mathop {\max }\limits_{{\bm{v}}} \quad \|{\bm{v}}^H {\bm{Q}}\|
	\end{array} \quad \mbox{s.t.}\quad  |[{\bm{v}}]_{n}| = 1, \forall n \in {\cal{N}}. 
\vspace{-2mm}  
\end{equation}
Since ${\bm{V}} = {\bm{v}}{\bm{v}}^H$, satisfying ${\bm{V}} \succeq 0$ and ${\rm{rank}}({\bm{V}}) = 1$. Problem (\rm{P1-TD.1}) is equivalent to the following problem:
\vspace{-2mm}  
\begin{subequations}\label{P1-TDSub2}
\begin{align}
	(\rm{P1-TD.2}): \quad \label{P1-TDSub2-a}& \begin{array}{*{20}{c}}
		\mathop {\max }\limits_{{\bm{V}}} \quad {\rm{Tr}}\left({\bm{Q}}{\bm{Q}}^H {\bm{V}} \right)
	\end{array} \\ 
	\mbox{s.t.}\quad
	\label{P1-TDSub2-b}&  |[{\bm{V}}]_{n,n}| = 1, \forall n \in {\cal{N}}, \\
	\label{P1-TDSub2-c}&  {\rm{rank}}({\bm{V}}) = 1, {\bm{V}} \succeq 0.
	\vspace{-2mm}  
\end{align} 
\end{subequations}
Since $\operatorname{Tr}({\bm{V}}) = \sum\nolimits_{n=1}^N \sigma_n({\bm{V}}) \ge \mathop {\max}\limits_n \sigma_n({\bm{V}}) = \|{\bm{V}}\|_{2}$, by adopting the difference-of-convex-functions method \cite{Yang2020Federated}, the non-convex rank-one constraint in (\ref{P1-TDSub2-c}) can be equivalently expressed as 
\vspace{-2mm}  
\begin{equation}\label{RelaxingRankoneConstraintsA}
	\operatorname{Tr}({\bm{V}})-\|{\bm{V}}\|_{2} = 0,
	\vspace{-2mm}  
\end{equation}
where $\|{\bm{V}}\|_2$ represents the spectral norm of matrix ${\bm{V}}$. Then, the 
non-convex constraint (\ref{RelaxingRankoneConstraintsA}) can be added into the objective function of problem (P1-TD.2) as a penalty term \cite{bertsekas1997nonlinear}, i.e.,
\vspace{-2mm}  
\begin{subequations}\label{P1-TDSub3}
	\begin{align}
		(\rm{P1-TD.3}): \quad \label{P1-TDSub3-a}& \begin{array}{*{20}{c}}
			\mathop {\max }\limits_{{\bm{V}}} \quad {\rm{Tr}}\left({\bm{Q}}{\bm{Q}}^H {\bm{V}} \right) - \frac{1}{\eta}\left(\operatorname{Tr}({\bm{V}})-\|{\bm{V}}\|_{2}\right)
		\end{array} \\ 
		\mbox{s.t.}\quad
		\label{P1-TDSub3-b}&  |[{\bm{V}}]_{n,n}| = 1, \forall n \in {\cal{N}},  {\bm{V}} \succeq 0,
		\vspace{-2mm}  
	\end{align} 
\end{subequations}
where $\eta$ denotes the penalty coefficient. When $\frac{1}{\eta} \to \infty$, the solution of (P1-TD.3) satisfies the equality constraint (\ref{RelaxingRankoneConstraintsA}), and hence, is rank-one and also solves (P1-TD.2). Since $\|{\bm{V}}\|_{2}$ is convex with respect to (w.r.t.) ${\bm{V}}$, the first-order Taylor approximation of $\|{\bm{V}}\|_{2}$ is given as follows:
\vspace{-2mm}  
\begin{equation}\label{RelaxingRankoneConstraintsB}
	\|{\bm{V}}\|_{2} \ge \operatorname{Tr}\left(\boldsymbol{u}_{\max }\left({\bm{V}}^{(r)}\right)\right. \left. \boldsymbol{u}_{\max }^{H}\left({\bm{V}}^{(r)}\right)\left({\bm{V}}-{\bm{V}}^{(r)}\right)\right)-\left\|{\bm{V}}^{(r)}\right\|_{2} = {\cal{U}}^{(r)}_{ub}({\bm{V}}),
	\vspace{-2mm}  
\end{equation}
where ${\bm{V}}^{(r)}$ is a given point for the first-order Taylor approximation in the $r$th iteration and $\boldsymbol{u}_{\max }\left({\bm{V}}^{(r)}\right)$ denotes the eigenvector corresponding to the largest eigenvalue of matrix ${\bm{V}}^{(r)}$. As a result, the non-convex term $\|{\bm{V}}\|_{2}$ in the objective function of problem (P1-TD.3) can be approximated by ${\cal{U}}_{ub}^{(r)}({\bm{V}})$, and the approximated  problem can be updated and solved in an iterative manner by existing convex optimization solvers such as CVX \cite{Michael2014cvx}. Specifically, we propose a two-layer algorithm. In the outer layer, the value of the penalty coefficient $\eta$ is gradually reduced via $\eta^{(r+1)} \to e \eta^{(r)}$, where $e$ ($0 < e < 1$) is a scaling factor. In the inner layer, for the given penalty coefficient, the objective function is non-decreasing in each iteration. When $\frac{1}{\eta}$ increases to a sufficiently large value, equality constraint (\ref{RelaxingRankoneConstraintsA}) is ultimately satisfied.

\newtheorem{myDef}{\bf Definition}{
	\begin{myDef}
		(\textit{Maximum Sensing Frequency}): The maximum sensing frequency for a given target is given by $F_{s} = \frac{1}{K (\delta N_p + T_w)}$, where $\delta$ is the PRI and $T_w$ denotes the guard period between two consecutive sensing targets. Then, the dwell time spend on one target is $\delta N_p + T_w$.
	\end{myDef}
}

Intuitively, \textit{TD sensing} can asymptotically provide a squared-power gain in terms of the beam pattern gain in the desired direction \cite{wu2019beamformingDiscrete}. However, the sensing period, $K (\delta N_p + T_w)$, may be too large for time-sensitive tasks, especially, when the number of targets $K$ is large. In this case, it takes a long time to scan all the targets, and then the long time interval between two consecutive scans may result in outdated sensing information. This issue can be overcome by \textit{SS sensing}, which facilitates concurrent sensing of all targets.

\subsection{SS Sensing}
\label{SSMechanism}
\par  
The main challenges for solving problem (P1-SS) are the coupling between the beamforming vectors and the phase shifts in non-convex constraint (\ref{P1-SS-b}) and unit-modulus constraint (\ref{P1-SS-d}). To tackle this issue, we introduce auxiliary variables for decoupling the beamforming vectors and the phase shifts. Subsequently, problem (P1-SS) is solved by solving a series of simplified sub-problems. To this end, let ${\bm{W}}_{k} =  {\bm{w}}_{k}  {\bm{w}}^H_{k}$, with ${\rm{rank}}({\bm{W}}_{k}) = 1$ and ${\bm{W}}_{k} \succeq 0$, $\forall k$. Then, problem (P1-SS) can be equivalently transformed to
\vspace{-2mm}  
\begin{subequations}\label{P1-Co.1}
	\begin{align}
	(\rm{P1-SS.1}): \quad \label{P1-Co.1-a}& \begin{array}{*{20}{c}}
		\mathop {\max }\limits_{\{{\bm{W}}_{k}\}, \{{\bm{V}}\}, R} \quad \mathop R
	\end{array} \\ 
	\mbox{s.t.}\quad
	\label{P1-Co.1-b} & {\rm{Tr}}\left({{\bm{W}}_{k}} {\bm{Q}}^H_k {\bm{V}} {\bm{Q}}_k   \right) \ge R, \forall k \in {\cal{K}}, \\
	\label{P1-Co.1-c} & \sum\nolimits_{k'\in {\cal{K}} \backslash k }  {\rm{Tr}}\left({{\bm{W}}_{k}} {\bm{Q}}^H_{k'} {\bm{V}} {\bm{Q}}_{k'}  \right)  \le \varepsilon, \forall k \in {\cal{K}},  \\
	\label{P1-Co.1-d} &  \sum\nolimits_{k = 1} ^{K}  {\rm{Tr}} \left({{\bm{W}}_{k}}\right) \le P^{\max},  \\
	\label{P1-Co.1-e} & {\rm{rank}}({\bm{W}}_{k}) = 1, \forall k \in {\cal{K}},\\
	\label{P1-Co.1-f} & {\rm{rank}}({\bm{V}}) = 1,  \\
	\label{P1-Co.1-g} & {\bm{W}}_{k} \succeq 0, {\bm{V}} \succeq 0, \forall k \in {\cal{K}}. 
	\vspace{-2mm}  
\end{align} 
\end{subequations}
Solving problem (\rm{P1-SS.1}) is still challenging due to the non-convex rank-one constraints. To deal with this issue, we first relax the rank-one constraint of beamforming matrix ${\bm{W}}_k$, and denote the resulting problem by (P1-SS-SDR). Then, we prove that an optimal rank-one solution of problem (\rm{P1-SS.1}) can always be obtained based on the optimal solution of problem (P1-SS-SDR) without the rank-one constraints of matrices $\{{\bm{W}}_{k}\}$. The details are given as follows.
\begin{Pro}\label{RankOneCosensing}
	There always exists an optimal rank-one solution $\bar{\bm{W}}_{k}$ of problem (\rm{P1-SS.1}) that can be obtained based on the optimal solution $\{ \{{\bm{W}}^*_{k}\}, {\bm{V}}^*\}$ of problem (P1-SS-SDR), i.e.,
	\begin{equation}\label{ConstructionRank1Solution}
		\bar{\bm{w}}_{k} = ({\bm{v}}^{*H} {\bm{Q}}_k {\bm{W}}_{k} {\bm{Q}}^H_k {\bm{v}}^{*} )^{-1/2} {\bm{W}}^*_{k} {\bm{Q}}^H_k {\bm{v}}^{*}, \quad \bar{\bm{W}}_{k} = \bar{\bm{w}}_{k} \bar{\bm{w}}^H_{k}, \quad \bar{\bm{V}} = {\bm{V}}^*, \quad \forall k \in {\cal{K}}. 
	\end{equation}
\end{Pro}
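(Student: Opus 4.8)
The plan is to verify that the explicitly constructed tuple $\{\{\bar{\bm{W}}_k\}, \bar{\bm{V}} = {\bm{V}}^*, R^*\}$ is feasible for (P1-SS.1) — including the rank-one constraints — and attains the same objective value $R^*$ as the SDR optimum. Since (P1-SS.1) differs from (P1-SS-SDR) only by the additional rank-one constraints (\ref{P1-Co.1-e}) on $\{{\bm{W}}_k\}$, its optimal value cannot exceed that of (P1-SS-SDR); exhibiting a rank-one feasible point that achieves the SDR value therefore certifies that it is optimal for (P1-SS.1).

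I would first introduce the shorthand $p_k = {\bm{Q}}_k^H {\bm{v}}^*$ (where ${\bm{v}}^*$ is the rank-one factor of ${\bm{V}}^* = {\bm{v}}^* {\bm{v}}^{*H}$) and the scalar $c_k = {\bm{v}}^{*H} {\bm{Q}}_k {\bm{W}}^*_k {\bm{Q}}_k^H {\bm{v}}^* = p_k^H {\bm{W}}^*_k p_k \ge 0$, so that $\bar{\bm{w}}_k = c_k^{-1/2} {\bm{W}}^*_k p_k$ and $\bar{\bm{W}}_k = c_k^{-1} {\bm{W}}^*_k p_k p_k^H {\bm{W}}^*_k$ is manifestly rank-one, satisfying (\ref{P1-Co.1-e}), while $\bar{\bm{V}} = {\bm{V}}^*$ inherits (\ref{P1-Co.1-f}) and PSD-ness directly. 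Using ${\bm{Q}}_k^H {\bm{V}}^* {\bm{Q}}_k = p_k p_k^H$ and the cyclic property of the trace, the key computation collapses every quadratic form to a scalar in $p_k$: the beam pattern gain is ${\rm{Tr}}(\bar{\bm{W}}_k {\bm{Q}}_k^H {\bm{V}}^* {\bm{Q}}_k) = p_k^H \bar{\bm{W}}_k p_k = c_k^{-1}(p_k^H {\bm{W}}^*_k p_k)^2 = c_k = {\rm{Tr}}({\bm{W}}^*_k {\bm{Q}}_k^H {\bm{V}}^* {\bm{Q}}_k) \ge R^*$. Hence the objective is preserved and constraint (\ref{P1-Co.1-b}) holds with the same $R^*$.

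The main obstacle is showing that the rank-one ``projection'' does not inflate the interference and power budgets, (\ref{P1-Co.1-c}) and (\ref{P1-Co.1-d}). For the interference towards direction $k'$, writing $q_{k'} = {\bm{Q}}_{k'}^H {\bm{v}}^*$ gives ${\rm{Tr}}(\bar{\bm{W}}_k {\bm{Q}}_{k'}^H {\bm{V}}^* {\bm{Q}}_{k'}) = c_k^{-1}|q_{k'}^H {\bm{W}}^*_k p_k|^2$. The crucial observation is that ${\bm{W}}^*_k \succeq 0$ induces a semi-inner product $\langle x,y\rangle = x^H {\bm{W}}^*_k y$, so Cauchy--Schwarz yields $|q_{k'}^H {\bm{W}}^*_k p_k|^2 \le (q_{k'}^H {\bm{W}}^*_k q_{k'})(p_k^H {\bm{W}}^*_k p_k) = c_k\, q_{k'}^H {\bm{W}}^*_k q_{k'}$; dividing by $c_k$ shows the per-direction interference of $\bar{\bm{W}}_k$ is no larger than that of ${\bm{W}}^*_k$, and summing over $k' \ne k$ preserves (\ref{P1-Co.1-c}). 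For the power, I would bound ${\rm{Tr}}(\bar{\bm{W}}_k) = \|\bar{\bm{w}}_k\|^2 = c_k^{-1} p_k^H ({\bm{W}}^*_k)^2 p_k$; substituting $y_k = ({\bm{W}}^*_k)^{1/2} p_k$ turns this into the Rayleigh quotient $\frac{y_k^H {\bm{W}}^*_k y_k}{y_k^H y_k} \le \lambda_{\max}({\bm{W}}^*_k) \le {\rm{Tr}}({\bm{W}}^*_k)$, so $\sum_k {\rm{Tr}}(\bar{\bm{W}}_k) \le \sum_k {\rm{Tr}}({\bm{W}}^*_k) \le P^{\max}$.

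Collecting these facts, $\{\{\bar{\bm{W}}_k\}, {\bm{V}}^*, R^*\}$ is a rank-one feasible point of (P1-SS.1) with objective $R^*$, hence optimal, which proves the claim. One minor case to state carefully is the degenerate situation $c_k = 0$, i.e. ${\bm{W}}^*_k {\bm{Q}}_k^H {\bm{v}}^* = 0$: then the SDR beam gain for target $k$ already vanishes, forcing $R^* = 0$, and any rank-one $\bar{\bm{W}}_k$ within the power budget (in particular $\bar{\bm{W}}_k = \bm{0}$) satisfies all constraints, so the construction extends trivially.
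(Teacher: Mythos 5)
Your proof is correct and follows the same overall strategy as the paper's Appendix A: verify that the explicitly constructed rank-one point is feasible for (P1-SS.1) and attains the SDR objective value. The execution differs in two worthwhile ways. For the interference constraint, you invoke the Cauchy--Schwarz inequality for the semi-inner product $\langle x,y\rangle = x^H {\bm{W}}^*_k y$ induced by ${\bm{W}}^*_k \succeq 0$, whereas the paper routes the same bound through ${\rm{Tr}}({\bm{A}}{\bm{B}}) \le {\rm{Tr}}({\bm{A}}){\rm{Tr}}({\bm{B}})$ justified by von Neumann's trace inequality (with one of the trace factors equal to $1$ by the normalization); since the relevant matrices are rank-one times PSD, the two inequalities coincide here, but your version is more elementary and makes it transparent that the per-direction interference can only shrink. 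More substantively, you actually prove the transmit-power constraint via the Rayleigh-quotient bound ${\rm{Tr}}(\bar{\bm{W}}_k) = y_k^H {\bm{W}}^*_k y_k / y_k^H y_k \le \lambda_{\max}({\bm{W}}^*_k) \le {\rm{Tr}}({\bm{W}}^*_k)$, a step the paper merely asserts as ``readily proved,'' and you also dispose of the degenerate case ${\bm{v}}^{*H}{\bm{Q}}_k {\bm{W}}^*_k {\bm{Q}}_k^H {\bm{v}}^* = 0$, where the construction in (\ref{ConstructionRank1Solution}) is undefined and the paper is silent. In short, your argument is a more self-contained and complete version of the same proof.
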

\begin{proof}
	Please refer to Appendix A.
\end{proof} 

Based on Proposition \ref{RankOneCosensing}, the relaxed problem (P1-SS-SDR) without rank-one constraint (\ref{P1-Co.1-e}) is equivalent to the original problem ({P1-SS.1}), since the optimality of the solution is not compromised. However, the beamforming and phase shift vectors are still closely coupled in non-convex constraints ({\ref{P1-Co.1-b}}) and ({\ref{P1-Co.1-c}}). To address this issue, the inner approximation technique is adopted \cite{marks1978general}. To this end, the term ${\rm{Tr}}\left({{\bm{W}}_{k}}  {\bm{Q}}^H_k {\bm{V}} {\bm{Q}}_k   \right)$ in ({\ref{P1-Co.1-b}}) is rewritten as follows:
\vspace{-2mm}  
\begin{equation}\label{ContraintsCosensingPowerTarget}
	{\rm{Tr}}\left({{\bm{W}}_{k}}  {\bm{Q}}^H_k {\bm{V}} {\bm{Q}}_k   \right)  = \frac{1}{2}\left\| {{\bm{W}}_{k}}  +  {\bm{Q}}^H_k {\bm{V}} {\bm{Q}}_k  \right\|^2_F  - \frac{1}{2} \left\| {{\bm{W}}_{k}} \right\|_F^2  - \frac{1}{2} \left\|{\bm{Q}}^H_k {\bm{V}} {\bm{Q}}_k \right\|_F^2.
	\vspace{-2mm}  
\end{equation}
Note that the term $\frac{1}{2}\left\| {{\bm{W}}_{k}}  +  {\bm{Q}}^H_k {\bm{V}} {\bm{Q}}_k  \right\|^2_F$ in ({\ref{ContraintsCosensingPowerTarget}}) is non-concave with respect to ${{\bm{W}}_{k}}$ and ${\bm{V}}$. Therefore, we lower bound this term by adopting a first-order Taylor approximation. Specifically, in the $r$th iteration, we have
\vspace{-2mm}  
\begin{equation}\label{ExpressionTrace1}
	\begin{aligned}
		\left\| {{\bm{W}}_{k}}  +  {\bm{Q}}^H_k {\bm{V}} {\bm{Q}}_k  \right\|^2_F \ge & - \left\| {{\bm{W}}^{(r)}_{k}}  +  {\bm{Q}}^H_k {\bm{V}}^{(r)} {\bm{Q}}_k  \right\|^2_F \\  
		&+ 2{\rm{Re}}\left\{{\rm{Tr}}\left( {{\left( {{{{\bm{W}}_{k}}^{(r)}}  + {{\bm{Q}}^H_k {\bm{V}}^{(r)} {\bm{Q}}_k}} \right)}^H} \left({\bm{W}}_{k} + {\bm{Q}}^H_k {\bm{V}} {\bm{Q}}_k\right)\right)\right\}. 
		\vspace{-2mm}   
	\end{aligned}
\end{equation}
Then, a lower bound for the term in (\ref{ContraintsCosensingPowerTarget}) is given by
\begin{equation}\label{LowerBoundTrace}
	\begin{aligned}
		{\rm{Tr}}\left({{\bm{W}}_{k}}  {\bm{Q}}^H_k {\bm{V}} {\bm{Q}}_k   \right)  \ge& - \frac{1}{2}\left\| {{\bm{W}}^{(r)}_{k}}  +  {\bm{Q}}^H_k {\bm{V}}^{(r)} {\bm{Q}}_k  \right\|^2_F  - \frac{1}{2} \left\| {{\bm{W}}_{k}} \right\|_F^2  - \frac{1}{2} \left\|{\bm{Q}}^H_k {\bm{V}} {\bm{Q}}_k \right\|_F^2 \\
		& + {\rm{Re}}\left\{{\rm{Tr}}\left( {{{\left( {{{{\bm{W}}^{(r)}_{k}}} + {{\bm{Q}}^H_k {\bm{V}}^{(r)} {\bm{Q}}_k}} \right)}^H}{ \left( {\bm{W}}_{k} + {\bm{Q}}^H_k {\bm{V}} {\bm{Q}}_k \right) } } \right)\right\} \\
		& \triangleq {{\cal{F}}^{(r)}_{lb}\left({{\bm{W}}_{k}}  {\bm{Q}}^H_k {\bm{V}} {\bm{Q}}_k   \right)}.
	\end{aligned}
\end{equation}
Similarly, for constraint (\ref{P1-Co.1-c}), for the given points ${{\bm{W}}_{k}^{(r)}}$ and ${\bm{V}}^{(r)}$, we have
\begin{equation}
	-\left\| {{\bm{W}}_{k}} \right\|_F^2 \le \left\|{{\bm{W}}_{k}^{(r)}}\right\|_{F}^{2} - 2 {\rm{Re}}\left\{ \operatorname{Tr}\left(\left({{\bm{W}}^{(r)}_{k}}\right)^{H} {{\bm{W}}_{k}}\right)\right\},
\end{equation}
\begin{equation}
	- \left\|{\bm{Q}}^H_k {\bm{V}} {\bm{Q}}_k \right\|_F^2 \le \left\|{\bm{Q}}^{H}_{k} {\bm{V}}^{(r)} {\bm{Q}}_{k}\right\|_{F}^{2} - 2 {\rm{Re}}\left\{\operatorname{Tr}\left(\left({\bm{Q}}_{k} {\bm{Q}}^{H}_{k} {\bm{V}}^{(r)} {\bm{Q}}_{k} {\bm{Q}}^{H}_{k}\right)^{H} {\bm{V}}\right)\right\}.
\end{equation}
Then, the term $\operatorname{Tr}\left({{\bm{W}}_{k}} {\bm{Q}}^H_{k'} {\bm{V}} {\bm{Q}}_{k'}\right)$ in constraint (\ref{P1-Co.1-c}) satisfies
\begin{equation}\label{UpperBoundTrace}
	\begin{aligned}
		\operatorname{Tr}\left({{\bm{W}}_{k}} {\bm{Q}}^H_{k'} {\bm{V}} {\bm{Q}}_{k'}\right) \le &   \frac{1}{2}\left\| {{\bm{W}}_{k}}  +  {\bm{Q}}^H_{k'} {\bm{V}} {\bm{Q}}_{k'}  \right\|^2_F  + \frac{1}{2}\left\|{{\bm{W}}^{(r)}_{k}}\right\|_{F}^{2}-{\rm{Re}}\left\{ \operatorname{Tr}\left(\left({{\bm{W}}^{(r)}_{k}}\right)^{H} {{\bm{W}}_{k}}\right)\right\} \\
		&   + \frac{1}{2}\left\|{\bm{Q}}_{k'}^{H} {\bm{V}}^{(r)} {\bm{Q}}_{k'}\right\|_{F}^{2} - {\rm{Re}}\left\{\operatorname{Tr}\left(\left({\bm{Q}}_{k'} {\bm{Q}}^{H}_{k'} {\bm{V}}^{(r)} {\bm{Q}}_{k'} {\bm{Q}}^{H}_{k'}\right)^{H} {\bm{V}}\right)\right\} \\ 
		& \triangleq {\cal{F}}^{(r)}_{ub}\left({{\bm{W}}_{k}} {\bm{Q}}^H_{k'} {\bm{V}} {\bm{Q}}_{k'}\right).
	\end{aligned}
\end{equation}
After respectively replacing the terms ${\rm{Tr}}\left({{\bm{W}}_{k}}  {\bm{Q}}^H_k {\bm{V}} {\bm{Q}}_k   \right)$ and $\operatorname{Tr}\left({{\bm{W}}_{k}} {\bm{Q}}^H_{k'} {\bm{V}} {\bm{Q}}_{k'}\right)$ in constraints (\ref{P1-Co.1-b}) and (\ref{P1-Co.1-c}) by those in (\ref{LowerBoundTrace}) and (\ref{UpperBoundTrace}), a similar penalty-based approach as in (\ref{RelaxingRankoneConstraintsA})-(\ref{RelaxingRankoneConstraintsB}) is adopted to handle the non-convex rank-one constraint of ${\bm{V}}$ in ({\ref{P1-Co.1-f}}). Then, for any local points ${\bm{W}}_{k,l}^{(r)}$ and ${\bm{V}}^{(r)}$ in the $r$th iteration, the beamforming and phase shift matrices are optimized as follows:
\begin{subequations}\label{P1-Co.2}
\begin{align}
	(\rm{P1-SS.2}): \quad \label{P1-Co.2-a}&	 \mathop {\max }\limits_{\{{\bm{W}}_{k}\}, \{{\bm{V}}\}, R} \quad \mathop R - \frac{1}{\eta}{\cal{U}}^{(r)}_{ub}({\bm{V}}) \\ 
	\mbox{s.t.}\quad
	& (\rm{\ref{P1-Co.1-d}}), \nonumber \\
	\label{P1-Co.2-b}&  {{\cal{F}}^{(r)}_{lb}\left({{\bm{W}}_{k}}  {\bm{Q}}^H_k {\bm{V}} {\bm{Q}}_k   \right)} \ge R, \forall k \in {\cal{K}}, \\
	\label{P1-Co.2-c}& \sum\nolimits_{k'\in {\cal{K}} \backslash k }  {\cal{F}}^{(r)}_{ub}\left({{\bm{W}}_{k}} {\bm{Q}}^H_{k'} {\bm{V}} {\bm{Q}}_{k'}\right)  \le \varepsilon, \forall k \in {\cal{K}}, \\
	\label{P1-Co.2-d}& {\bm{W}}_{k} \succeq 0, {\bm{V}} \succeq 0, \forall k \in {\cal{K}}.
\end{align} 
\end{subequations}

The above optimization problem can be solved in the $r$th iteration of the inner approximation procedure by convex optimization solvers such as CVX \cite{Michael2014cvx}. The objective function in (\ref{P1-Co.2}) is non-decreasing in each iteration since the local points ${\bm{W}}_{k,l}^{(r)}$ and ${\bm{V}}^{(r)}$ are feasible for problem (\rm{P1-SS.2}), and its corresponding objective value is upper bounded due to the limited transmit power $P^{\max}$. Thus, the proposed algorithm is guaranteed to converge to a locally optimal KKT solution of problem (P1-SS.1) according to \cite{tao1997convex}.

To further explore IRS-assisted SS sensing, we analyze the special case of two targets in the next subsection, where unlike for the general case, the optimal BS beamforming vector for problem (P1-SS) can be found in closed form with low complexity.

\subsection{Two-Target SS Sensing}
For $K=2$ targets,\footnote{The two-target case is relevant in some practical applications, e.g., for vehicle detection, only the entrance and the exit of the road need to be monitored in real-time; for target tracking, the tracker and the object to be tracked are typically jointly sensed.} let matrix ${\bm{H}}_k^{\bot}$ be the subspace orthogonal to an arbitrary vector ${\bm{h}}_k$, i.e., ${{\bm{h}}_k^{H}} {\bm{H}}_k^{\bot} = {\bm{0}}$ and ${\bm{H}}_k^{\bot}$ is given by 
\vspace{-2mm} 
\begin{equation}\label{NullSpace}
	{\bm{H}}_k^{\bot}=\left(\bm{I}_{M}-{\bm{h}}_k\left({\bm{h}}^{H}_k {\bm{h}}_k\right)^{-1} {\bm{h}}^{H}_k\right) \in \mathbb{C}^{M \times M}.
	\vspace{-2mm} 
\end{equation}
Then, let ${\bm{h}}_{k,k'}^{\bot} = {\bm{H}}_{k'}^{\bot} {\bm{h}}_k = {\bm{h}}_k - \frac{{\bm{h}}^{H}_{k'} {\bm{h}}_k}{{\bm{h}}^{H}_{k'} {\bm{h}}_{k'}}{\bm{h}}_{k'} \in \mathbb{C}^{M \times 1}$, i.e., ${\bm{h}}_{k,k'}^{\bot}$ is orthogonal to vector ${\bm{h}}_{k'}$, and ${\bm{h}}_{k,k'}^{\bot}$ lies in the space spanned by vectors ${\bm{h}}_k$ and ${\bm{h}}_{k'}$. 

\begin{Pro}\label{TargetSensingStructure}
	When $K = 2$, for given phase shift matrix ${\bm{\Theta}}$ and transmit power $p_k = \|{\bm{w}}_k\|^2$, the optimal beamforming vector of problem (P1-SS) can be expressed as 
	\vspace{-2mm} 
	\begin{equation}\label{BeamformingStructure}
		{\bm{w}}_k^* = \left\{ {\begin{array}{*{20}{c}}
				{\sqrt{p_k}\frac{{\bm{h}}_{k}}{\|{\bm{h}}_{k}\|},}&{ \frac{{{{p_{k}}|{\bm{h}}^H_{k'}{\bm{h}}_{k}|^2}}}{\|{\bm{h}}^H_{k}\|^2} \le  \varepsilon}\\
				{\rho_1  {\bm{h}}_k  + \rho_2 \frac{{\bm{h}}_{k,k'}^{\bot}}{\|{\bm{h}}_{k,k'}^{\bot}\|} {\cos \psi _{h_{k'}^{\bot},h_k}},}&{{\rm{otherwise}}}
		\end{array}} \right.,  k, k' \in \{1,2\},
	\vspace{-2mm} 
	\end{equation}
	where ${\bm{h}}_k = {\bm{G}}^H {\bm{\Theta}}^H_l {\bm{a}}_k $, ${\bm{h}}_{k,k'}^{\bot} = {\bm{H}}_{k'}^{\bot} {\bm{h}}_k$, $\arccos \angle \psi_{H_{k'}^{\bot},h_k} = \frac{\left({\bm{h}}^H_k{\bm{h}}_{k,k'}^{\bot}\right)^H}{|{\bm{h}}^H_k{\bm{h}}_{k,k'}^{\bot}|}$, $\rho_1 = \frac{\sqrt{\varepsilon}}{|{\bm{h}}^H_{k'} {\bm{h}}_k|}$, and $\rho_2 = \frac{{ - \rho_1\left| {{\bm{h}}_k^H{\bm{h}}_{k,k'}^{\bot} } \right| + \sqrt {{{\rho_1^2}}{{\left| {{\bm{h}}_k^H{\bm{h}}_{k,k'}^{\bot} } \right|}^2} - {{\left\| {{\bm{h}}_{k,k'}^{\bot} } \right\|}^2}\left( {{\rho_1^2}{{\left\| {{\bm{h}}_k^H} \right\|}^2} - p_k} \right)} }}{{{{\left\| {{\bm{h}}_{k,k'}^{\bot} } \right\|}^2}}}$.
\end{Pro}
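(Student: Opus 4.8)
The plan is to exploit the structural simplification that occurs once $\bm{\Theta}$ and the per-beam powers $p_k=\|\bm{w}_k\|^2$ are held fixed. Writing $\bm{h}_k=\bm{G}^H\bm{\Theta}^H\bm{a}_k$, the beam pattern gain becomes $\mathcal{P}_k=|\bm{h}_k^H\bm{w}_k|^2$, and for $K=2$ the interference constraint (\ref{P1-SS-b}) for beam $k$ reduces to the single term $|\bm{h}_{k'}^H\bm{w}_k|^2\le\varepsilon$, where $k'$ is the other target. Since $\mathcal{P}_k$ and its associated constraint depend only on $\bm{w}_k$, the $\min_k$ objective of (P1-SS) decouples, and for fixed $p_k$ it is maximized by independently maximizing each $\mathcal{P}_k$. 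Thus the task reduces, for each $k$, to
\begin{equation}
\max_{\bm{w}_k}\ |\bm{h}_k^H\bm{w}_k|^2\quad\text{s.t.}\quad \|\bm{w}_k\|^2=p_k,\ |\bm{h}_{k'}^H\bm{w}_k|^2\le\varepsilon .
\end{equation}

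First I would observe that any component of $\bm{w}_k$ orthogonal to both $\bm{h}_k$ and $\bm{h}_{k'}$ affects neither $|\bm{h}_k^H\bm{w}_k|$ nor $|\bm{h}_{k'}^H\bm{w}_k|$ while consuming power, so at optimality $\bm{w}_k\in\mathrm{span}\{\bm{h}_k,\bm{h}_{k'}\}$, a subspace for which $\{\bm{h}_{k'},\bm{h}_{k,k'}^{\bot}\}$ is an orthogonal basis. The case split in (\ref{BeamformingStructure}) then follows from testing the unconstrained maximizer: by Cauchy--Schwarz the power-only optimum is the MRT vector $\sqrt{p_k}\,\bm{h}_k/\|\bm{h}_k\|$, whose leakage is $p_k|\bm{h}_{k'}^H\bm{h}_k|^2/\|\bm{h}_k\|^2$; when this does not exceed $\varepsilon$, MRT is feasible and therefore optimal, yielding the first branch.

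For the complementary case I would first argue that the interference constraint is active at the optimum --- otherwise a small rotation of $\bm{w}_k$ toward $\bm{h}_k$ would strictly increase the objective while preserving the power budget and feasibility. Parametrizing $\bm{w}_k=a\,\bm{h}_{k'}/\|\bm{h}_{k'}\|+b\,\bm{h}_{k,k'}^{\bot}/\|\bm{h}_{k,k'}^{\bot}\|$ and using $\bm{h}_{k,k'}^{\bot}\perp\bm{h}_{k'}$, the active constraint fixes $|a|=\sqrt{\varepsilon}/\|\bm{h}_{k'}\|$ and the power budget fixes $|b|^2=p_k-\varepsilon/\|\bm{h}_{k'}\|^2$. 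Expanding $\bm{h}_k^H\bm{w}_k$ in this basis leaves only the phases of $a$ and $b$ free, and choosing them so the two contributions add in phase (equality in the triangle inequality) maximizes the objective. Re-expressing the phase-aligned vector in the $\{\bm{h}_k,\bm{h}_{k,k'}^{\bot}\}$ representation then identifies $\rho_1=\sqrt{\varepsilon}/|\bm{h}_{k'}^H\bm{h}_k|$ as the coefficient whose $\bm{h}_{k'}$-component delivers exactly leakage $\varepsilon$, while solving the resulting quadratic $\|\bm{w}_k\|^2=p_k$ for the coefficient along $\bm{h}_{k,k'}^{\bot}$ (using $\bm{h}_k^H\bm{h}_{k,k'}^{\bot}=\|\bm{h}_{k,k'}^{\bot}\|^2$) gives the stated $\rho_2$ and the phase term $\cos\psi_{H_{k'}^{\bot},h_k}$.

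I expect the main obstacle to be the final re-parametrization bookkeeping rather than any conceptual difficulty: one must verify that the phase-aligned optimum expressed in the orthogonal basis $\{\bm{h}_{k'},\bm{h}_{k,k'}^{\bot}\}$ coincides exactly with the $\{\bm{h}_k,\bm{h}_{k,k'}^{\bot}\}$ form of (\ref{BeamformingStructure}), and that the positive root of the quadratic for the $\bm{h}_{k,k'}^{\bot}$-coefficient reproduces the stated $\rho_2$. The conceptual pillars --- decoupling, restriction to $\mathrm{span}\{\bm{h}_k,\bm{h}_{k'}\}$, activeness of the leakage constraint, and constructive phase alignment --- are straightforward; the matching of the closed-form constants $\rho_1$, $\rho_2$ and the alignment phase is a routine but careful computation once this geometry is in place.
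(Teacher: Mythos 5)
Your proposal is correct and follows essentially the same route as the paper's Appendix B: restrict $\bm{w}_k$ to $\mathrm{span}\{\bm{h}_k,\bm{h}_{k'}\}$, split on whether MRT already satisfies the leakage bound, enforce the interference and power constraints with equality, align phases to achieve equality in the triangle inequality, and solve the resulting quadratic for the coefficient along $\bm{h}_{k,k'}^{\bot}$. The only (cosmetic) difference is that you parametrize in the orthogonal basis $\{\bm{h}_{k'},\bm{h}_{k,k'}^{\bot}\}$ and convert back, whereas the paper works directly with the ansatz $\rho_1\bm{h}_k+\rho_2\chi\bm{h}_{k,k'}^{\bot}$, which makes $\rho_1=\sqrt{\varepsilon}/|\bm{h}_{k'}^H\bm{h}_k|$ immediate from the active leakage constraint.
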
   
\begin{proof}
	Please refer to Appendix B.
\end{proof}

Proposition \ref{TargetSensingStructure} shows that the optimal transmit beamforming vector lies in the space spanned by effective channel vectors ${\bm{h}}_k$ and ${\bm{h}}_{k,k'}^{\bot}$. In particular, when $\varepsilon = 0$, the optimal beamforming vector simplifies to ${\bm{w}}_k^* = \sqrt{p_k} \frac{{\bm{h}}_{k,k'}^{\bot}}{\|{\bm{h}}_{k,k'}^{\bot}\|}$.

By plugging the optimal beamforming vector from Proposition \ref{TargetSensingStructure} from problem (P1-SS), the beam pattern gain for direction $k$ is given by ${\cal{P}}_k = \frac{{\| {{\bm{h}}_k^H} \|^2}}{{\| {{\bm{h}}_{k'}} \|^2}}\left( \Xi{\sqrt {\varepsilon }  {{ +  \sqrt{1-\Xi^2}}}\sqrt {{p_k}{{\| {\bm{h}}_{k'} \|}^2} - \varepsilon } } \right)^2$ when $ \frac{{{{p_{k}}|{\bm{h}}^H_{k'}{\bm{h}}_{k}|^2}}}{\|{\bm{h}}^H_{k}\|^2} >  \varepsilon$, where $k \in \{1,2\}$ and $\Xi = \frac{{\bm{h}}^H_k{\bm{h}}_{k'}}{\|{\bm{h}}^H_k\|\|{\bm{h}}_{k'}\|}$; otherwise, ${\cal{P}}_k = {p_k} \left\| {\bm{h}}_k \right\|^2$. Intuitively, the beam pattern gain of each target direction increases monotonically with the transmit power of the corresponding beam, and thus, for the two-target case, at the optimal solution of problem (P1-SS), ${\cal{P}}^*_1 = {\cal{P}}^*_2$. Due to the closely coupled variables, we adopt the alternating optimization (AO) technique to solve this problem. Specifically, in the $r$th iteration, for given phase shift vector ${\bm{v}}^{(r)}$, the optimal beam power $p^{(r+1)}_k$ can be obtained with a 1-D binary search based on Proposition \ref{TargetSensingStructure}, thereby yielding the optimal beamforming vector for a given phase shift vector. For given beamforming vector $\{{\bm{w}}^{(r)}_k\}$, the phase shift optimization problem can be simplified as 
\begin{subequations}\label{P1-CoSub2}
\begin{align}
	(\rm{P1-SS.3}): \quad \label{P1-CoSub2-a} &	\mathop {\max }\limits_{{\bm{V}}} \quad \mathop R \\ 
	\mbox{s.t.}\quad
	\label{P1-CoSub2-b}& {\rm{Tr}} \left({\bm{Q}}_k {{\bm{w}}^*_{k}} {{\bm{w}}^{*H}_{k}} {\bm{Q}}^H_k  {\bm{V}}\right) \ge R, \forall k \in {\cal{K}}, \\
	\label{P1-CoSub2-c}& {\rm{Tr}} \left({\bm{Q}}_{k'} {{\bm{w}}^*_{k}} {{\bm{w}}^{*H}_{k}} {\bm{Q}}^H_{k'}  {\bm{V}}\right) \le \varepsilon, \forall k, k' \in {\cal{K}}, \\
	\label{P1-CoSub2-d}& |[{\bm{V}}]_{n,n}| = 1, \forall n \in {\cal{N}}, \\
	\label{P1-CoSub2-e}	& {\rm{rank}}\left( {\bm{V}} \right) = 1, {\bm{V}} \succeq 0. 
\end{align} 
\end{subequations}
By adopting a similar relaxation method as in Section \ref{TDSensingMechanism} (c.f. (\ref{RelaxingRankoneConstraintsA})-(\ref{RelaxingRankoneConstraintsB})), problem (\ref{P1-CoSub2}) can be solved by existing convex optimization solvers. The details are omitted considering the page limitation. Then, a sub-optimal solution of problem (\rm{P1-SS}) for the two-target case can be obtained by determining the optimal beamforming vector ${\bm{w}}_k$ and the local-optimal reflection-coefficient vector ${\bm{v}}$ in an alternating manner. The details of the proposed AO algorithm are provided in {\bf{Algorithm} \ref{AOAlgorithm}}.

\begin{algorithm}[t]
	\small
	\caption{AO Algorithm}
	\label{AOAlgorithm}
	\begin{algorithmic}[1]
		\STATE {\bf{Initialize}}  $\{{\bm{w}}_{k}\}$, $\{p_k\}$, and ${\bm{v}}$, iteration number $r = 1$, convergence accuracy $\epsilon_1$, objective value $V^{(r)*}$
		\REPEAT 
		\STATE For given ${\bm{v}}^{(r)}$, obtain $\{p_k^{(r+1)}\}$ by a 1-D binary search based on the obtained structure of the optimal beamforming vectors $\{{\bm{w}}_k^*\}$ in Proposition \ref{TargetSensingStructure}
		\STATE For given ${\bm{v}}^{(r)}$ and $\{p_k^{(r+1)}\}$, obtain $\{{\bm{w}}_k^{(r+1)}\}$ according to Proposition \ref{TargetSensingStructure}
		\STATE For given $\{{\bm{w}}_k^{(r+1)}\}$, obtain ${\bm{v}}^{(r+1)}$ by solving problem (\rm{P1-SS.3})
		\STATE Update the objective value $V^{(r+1)*}$ of problem (P1-SS.1) according to the obtained $\{{\bm{w}}^{(r+1)}_{k}\}$, $\{p_k^{(r+1)}\}$, and ${\bm{v}}^{(r+1)}$
		\STATE $r = r + 1$
		\UNTIL $\left|V^{(r)*} - V^{(r-1)*}\right| \le \epsilon_1$
	\end{algorithmic}
\end{algorithm}

Intuitively, for SS sensing, the beam pattern gain for each target decreases monotonically as the number of targets increases, since the total power is divided between different directions. To enable a flexible trade-off between the beam pattern gain and the sensing frequency, in the next section, we propose a more general IRS-based sensing scheme with controllable sensing power and sensing period.

\section{Proposed Hybrid Sensing Scheme}
\label{AnalysisGeneral}
In this section, we propose a \textit{hybrid TD-SS sensing} scheme, where the targets in the same group are sensed simultaneously via \textit{SS sensing} while targets in different groups are assigned to orthogonal time slots. By adjusting the number of groups, the proposed hybrid scheme can provide a flexible balance between beam pattern gain and sensing frequency. In practice, the number of groups can be set and updated based on the required minimum sensing frequency and measurements of the sensing performance. The considered system model and problem formulation are presented in Section \ref{HybridModel}, followed by the proposed penalty-based algorithm to jointly optimize the transmit beamforming vectors, IRS phase shifts, and target grouping in Section \ref{PenaltyAlgorithm}.

\subsection{Hybrid Sensing Model and Problem Formulation}
\label{HybridModel}
\begin{figure*}[t]
	\centering
	\setlength{\abovecaptionskip}{0.cm}
	\subfigure[Transmited and received signals over time.]
	{	
		\label{figure3a}
		\includegraphics[width=7.8cm]{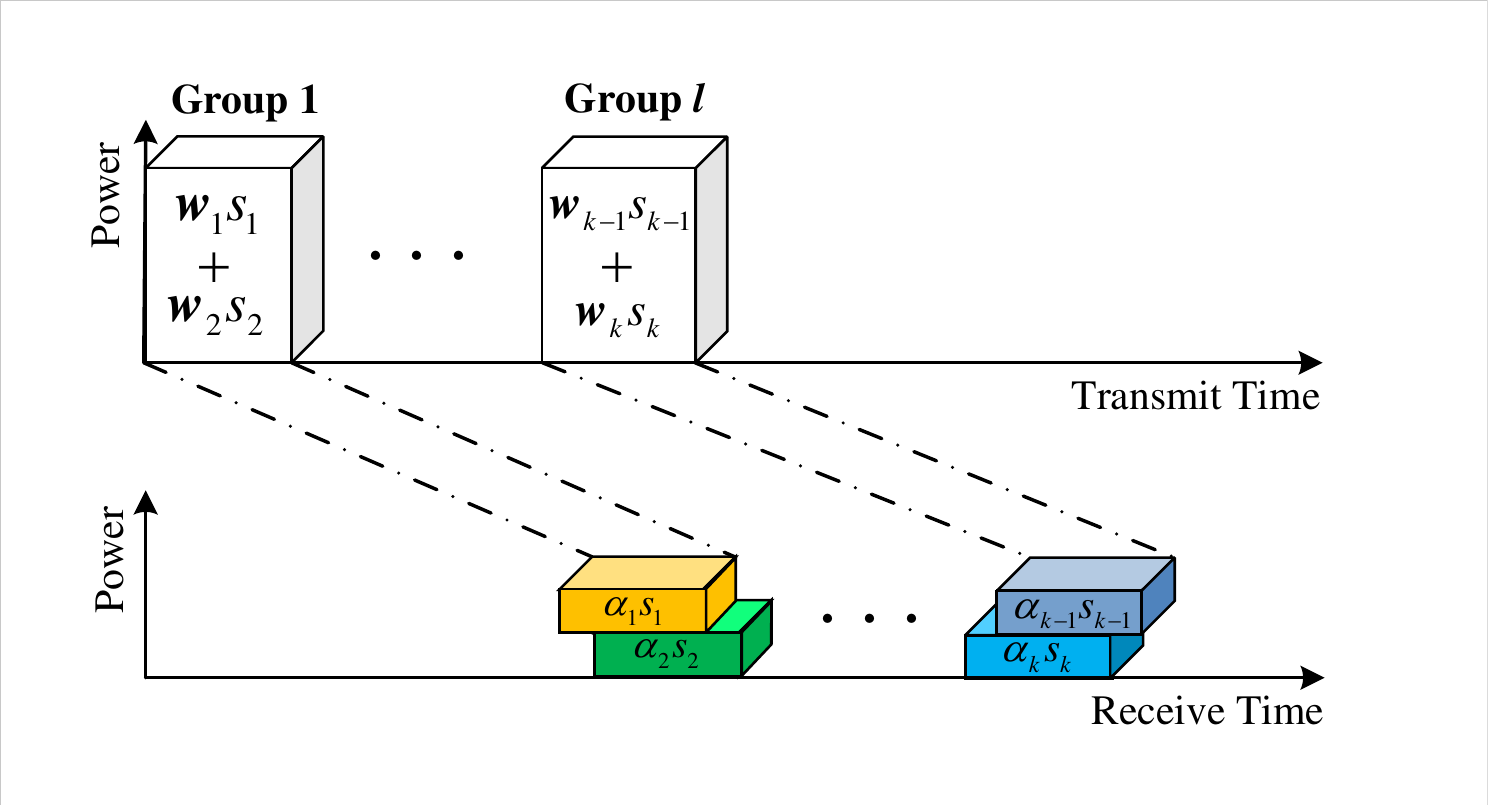}
	}
	\subfigure[Equavilent received signals from target directions.]
	{	
		\label{figure3b}
		\includegraphics[width=7.1cm]{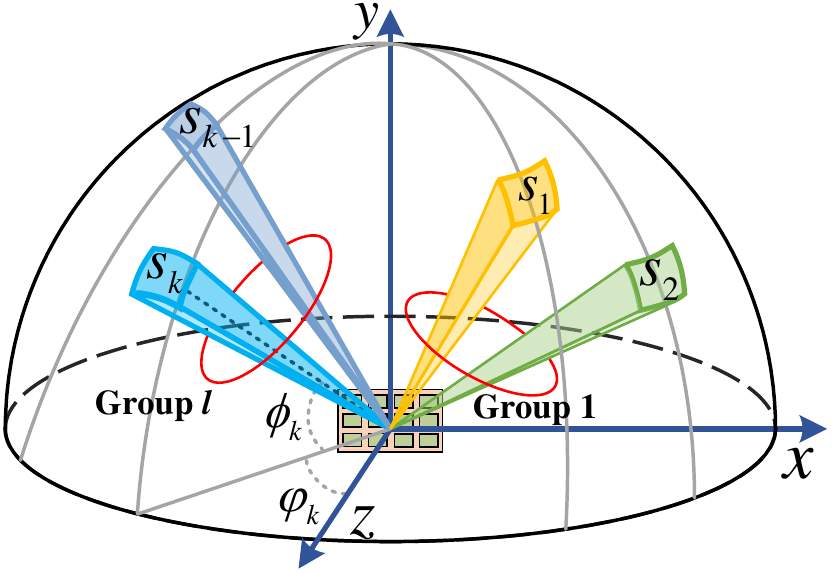}
	}	
	\caption{Illustration of the proposed \textit{hybrid TD-SS sensing} scheme for the case with $K/L = 2$.}
	\label{figure3}
\end{figure*}
To achieve a flexible trade-off between beam pattern gain and sensing frequency, the $K$ sensing directions/targets are partitioned into $L$ groups and the set of potential targets in the $l$th group is denoted by ${\cal{K}}_l$, $l \in {\cal{L}} =  \{1,\cdots,L\}$. Accordingly, the targets in different groups are sensed in a time-division manner as in \textit{TD sensing}, while the targets in the same group are sensed simultaneously as in \textit{SS sensing}. The \textit{hybrid TD-SS sensing} scheme is illustrated in Fig.~\ref{figure3} for $K/L = 2$, where two dedicated radar/sensing beams are jointly transmitted in each orthogonal time slot, and the reflected signals belonging to different groups do not interfere with each other. Two dedicated beams modulated with SSs $\{{\bm{s}}_{i},{\bm{s}}_{j}\}$ are activated in the $l$th time slot, i.e., $i,j \in {\cal{K}}_l$, where \textit{SS sensing} is adopted, as shown in Fig.~\ref{figure3b}. Then, the duration of a detection epoch is given by $L (\delta N_p + T_w)$, which is $L/K$ times the duration of a detection epoch for \textit{TD sensing}. If the BS senses the target in direction $k$ in the $l$th  group, we let $c_{k,l} = 1$, otherwise, $c_{k,l} = 0$. Also, each direction should be sensed once during the entire sensing epoch. Based on the above discussion, the following condition holds:
\begin{equation}\label{SensingTimeConditionA}
	\sum\nolimits_{l = 1}^{L} {{c_{k,l}}} = 1, \forall k \in {\cal{K}}. 
\end{equation}
For the $l$th sensing group, at time $t$, the complex baseband transmitted signal at the BS can be expressed as follows:
\begin{equation} 
	{\bm{x}}_l[t] = \sum\nolimits_{k \in {\cal{K}}_l}{{{\bm{w}}_{k,l}} {s_{k}}}[t],
\end{equation}
where ${{\bm{w}}_{k,l}} \in \mathbb{C}^{M \times 1}$ is the beamforming vector for target $k$ in group $l$. For sensing the targets in group $l$, the reflection-coefficient matrix of the IRS is denoted by ${\bm{\Theta}}_l = {\rm{diag}}(e^{j \theta_{l,1}}, ... , e^{j \theta_{l,N}})$. As a result, during the dwell time of sensing target $k$ for SS ${\bm{s}}_k$, the beam pattern gain of the IRS in direction $\{\varphi_k, \phi_k\}$ is given by
\begin{equation}
	\mathcal{P}_k  = \sum\nolimits_{l = 1}^L c_{k,l} {\bm{v}}^H_l {\bm{Q}}_k {{\bm{w}}_{k,l}}{{\bm{w}}^H_{k,l}} {\bm{Q}}^H_k {\bm{v}}_l,
\end{equation}
where ${\bm{v}}^H_l = [e^{-j\theta_{l,1}},\cdots,e^{-j\theta_{l,N}}]$. For the proposed \textit{hybrid TD-SS sensing} scheme, the transmit beamformer at the BS, the IRS phase shifts, and the target grouping are jointly optimized for maximization of the beam pattern gain, i.e.,
\begin{subequations}\label{P2}
\begin{align}	
	(\rm{P2}): \quad \label{P2-a}&	\mathop {\max }\limits_{\{{\bm{w}}_{k,l}\}, \{{\bm{v}}_l\}, \{c_{k,l}\}} \quad \mathop {\min }\limits_{k \in {\cal{K}}} \quad \mathop  {{\mathcal{P}}_k} \\ 
	\mbox{s.t.}\quad
	\label{P2-b}& \sum\nolimits_{k'\in {\cal{K}} \backslash k } \sum\nolimits_{l = 1}^L c_{k',l} c_{k,l}  {\bm{v}}^H_l {\bm{Q}}_{k'} {\bm{w}}_{k,l}  {\bm{w}}^H_{k,l} {\bm{Q}}^H_{k'} {\bm{v}}_l \le \varepsilon, \forall k \in {\cal{K}},   \\
	\label{P2-c}&  \sum\nolimits_{k = 1} ^{K} c_{k,l}\| {\bm{w}}_{k,l} \|^2 \le P^{\max},  \forall l \in {\cal{L}}, \\
	\label{P2-d}& |[{\bm{v}}_l]_{n}| = 1, \forall n \in {\cal{N}}, l \in {\cal{L}},  \\
	\label{P2-e}& \sum\nolimits_{l = 1}^{L} {{c_{k,l}}} = 1, \forall k \in {\cal{K}},  \\
	\label{P2-f}& c_{k,l} \in \{0,1\},  \forall k \in {\cal{K}}, l \in {\cal{L}}. 
\end{align} 
\end{subequations}
In problem (\rm{P2}), (\ref{P2-b}) specifies that the power leakage of the beam intended for target $k$ in other directions should be lower than the tolerable threshold $\varepsilon$. The total transmit power and the reflection-coefficient vector are constrained by (\ref{P2-c}) and (\ref{P2-d}), respectively. Constraint (\ref{P2-e}) ensures that each target is sensed once during one sensing epoch. To gain more insights, a useful and fundamental trade-off between beam pattern gain and sensing frequency is rigorously proved in Theorem \ref{SensingPowerVersusFrequency}. 
\begin{theorem}\label{SensingPowerVersusFrequency}
	For $L$ groups, the optimal beam pattern gain in problem (\rm{P2}) $\mathop {\min }\limits_{k \in {\cal{K}}}  \mathop  {{\mathcal{P}}^*_k} (L)$ meets the  inequality $\mathop {\min }\limits_{k \in {\cal{K}}} \mathop  {{\mathcal{P}}^*_k} (L+1)  \ge \mathop {\min }\limits_{k \in {\cal{K}}} \mathop  {{\mathcal{P}}^*_k}(L)$.
\end{theorem}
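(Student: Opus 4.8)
The plan is to establish the monotonicity in Theorem \ref{SensingPowerVersusFrequency} by a feasibility-embedding argument: I will show that from any optimal solution of problem (P2) with $L$ groups one can construct a \emph{feasible} solution of (P2) with $L+1$ groups that attains the same objective value. Since (P2) with $L+1$ groups is a maximization, its optimum $\min_{k \in {\cal{K}}} {\mathcal{P}}^*_k(L+1)$ is then no smaller than this constructed value, which equals $\min_{k \in {\cal{K}}} {\mathcal{P}}^*_k(L)$, yielding the claim. The key structural fact I would exploit first is that the beam pattern gain is a \emph{per-target} quantity: because constraint (\ref{P2-e}) forces each target $k$ to lie in exactly one group $l(k)$, we have ${\mathcal{P}}_k = {\bm{v}}^H_{l(k)} {\bm{Q}}_k {\bm{w}}_{k,l(k)} {\bm{w}}^H_{k,l(k)} {\bm{Q}}^H_k {\bm{v}}_{l(k)}$, which depends only on target $k$'s own beamformer and the phase shift of its assigned group, and \emph{not} on which other targets share that group.

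Next I would specify the construction. If $L \ge K$, each target can already occupy its own group, so the bound holds trivially with equality. For $L < K$, constraint (\ref{P2-e}) together with the pigeonhole principle guarantees that in the optimal $L$-group solution at least one group, say ${\cal{K}}_{l_0}$, contains two or more targets; I pick any $k_0 \in {\cal{K}}_{l_0}$ and move it into a brand-new group indexed $L+1$. Concretely, I keep all beamformers ${\bm{w}}_{k,l}$ and phase shifts ${\bm{v}}_l$ of the original groups unchanged, reset the assignment via $c_{k_0,l_0} = 0$ and $c_{k_0,L+1} = 1$, and copy the relevant vectors by taking ${\bm{v}}_{L+1} = {\bm{v}}_{l_0}$ and ${\bm{w}}_{k_0,L+1} = {\bm{w}}_{k_0,l_0}$.

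I would then verify feasibility and value preservation. By the per-target observation, ${\mathcal{P}}_{k_0}$ is unchanged (same beamformer and phase shift), and ${\mathcal{P}}_k$ is unchanged for every $k \ne k_0$; hence $\min_{k \in {\cal{K}}} {\mathcal{P}}_k$ is preserved exactly. For the interference constraint (\ref{P2-b}): target $k_0$ is now alone in group $L+1$, so its left-hand side is $0 \le \varepsilon$, while for the remaining targets in ${\cal{K}}_{l_0}$ the sum over co-grouped interferers loses precisely the $k' = k_0$ term and therefore only decreases, so (\ref{P2-b}) continues to hold; all other groups are untouched. For the power constraint (\ref{P2-c}), each of groups $l_0$ and $L+1$ now carries a sub-sum of the original group-$l_0$ power budget, so both remain below $P^{\max}$. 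The unit-modulus constraint (\ref{P2-d}) holds since ${\bm{v}}_{L+1}$ is a copy of a feasible ${\bm{v}}_{l_0}$, and (\ref{P2-e})--(\ref{P2-f}) hold by construction. Thus the constructed point is feasible for the $(L+1)$-group problem with objective equal to $\min_{k \in {\cal{K}}} {\mathcal{P}}^*_k(L)$, completing the argument.

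I do not expect a genuinely hard step here; the only subtlety is the verification that \emph{all} constraints are not merely preserved but in fact relaxed by the split, which hinges entirely on the separability of ${\mathcal{P}}_k$ and of the power and interference terms with respect to the target being moved. An even shorter route, which I would note as an alternative, is that the $(L+1)$-group feasible set already contains the $L$-group feasible set: one simply appends an empty group (setting $c_{k,L+1}=0$ for all $k$, with ${\bm{w}}_{k,L+1} = {\bm{0}}$ and any unit-modulus ${\bm{v}}_{L+1}$), which leaves every ${\mathcal{P}}_k$ and every constraint intact, so maximizing over the larger set immediately gives the inequality.
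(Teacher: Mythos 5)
Your proposal is correct and follows essentially the same route as the paper's proof: both construct a feasible $(L+1)$-group solution from the optimal $L$-group one — either by appending an empty group or by moving one target from a multi-target group into the new group while keeping all beamformers and phase shifts fixed — verify that every constraint is preserved (indeed loosened), and conclude by optimality of the $(L+1)$-group problem. Your version is, if anything, a more careful write-up of the same feasibility-embedding argument, with the per-target separability of $\mathcal{P}_k$ made explicit.
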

\begin{proof}
	The optimal beamforming vectors, phase shifts, and target groupings for $L$ groups are denoted by ${\bm{X}}^* = \{\{{\bm{w}}^*_{k,l}\}, \{{\bm{v}}^*_l\}, \{c^*_{k,l}\}\}$, respectively, and the corresponding maximum beam pattern gain in (\ref{P2-a}) is denoted by $\mathop {\min }\limits_{k \in {\cal{K}}} \mathop  {{\mathcal{P}}^*_k}(L)$. If an empty group without target is added, indexed by $L+1$, it can be readily proved that the optimal solution ${\bm{X}}^*$ for $L$ groups also satisfies constraints (\ref{P2-b})-(\ref{P2-f}) for the newly constructed $L+1$ groups of targets.
	Based on task allocation $\{c^*_{k,l}\}$, another feasible solution $\{\hat c_{k,l}\}$ can be constructed as follows. Let $l' = \arg \mathop {\min }\limits_l \mathop {\min }\limits_{k \in {{\cal{K}}_l}} {{\mathcal{P}}_k}$, which represents the group containing the target with the minimum beam pattern gain among all targets. A certain target $\bar k$ in group $l'$ can be moved to group $L+1$ while satisfying constraints (\ref{P2-e})-(\ref{P2-f}), i.e., 
	\begin{equation}
	 \hat c_{\bar k,l} = 0, \quad \hat c_{\bar k,L+1} = 1, \quad \text{and} \quad \hat c_{k,l} = c^*_{k,l}, \forall k \ne \bar k.
	\end{equation}
 	In this case, with the optimal solution $\{{\bm{w}}^*_{\bar k,l}\}, \{{\bm{v}}^*_l\}$, the beam pattern gain of target $\bar k$ is ${{\mathcal{P}}^*_{\bar k}}$, which satisfies ${{\mathcal{P}}^*_{\bar k}} \ge \mathop {\min }\limits_{k \in {\cal{K}}} \mathop  {{\mathcal{P}}^*_k}(L)$. Also, the interference constraints in group $l'$ become looser, i.e.
 	\begin{equation}
 		\begin{aligned}
 			&\sum\nolimits_{k'\in {\cal{K}}_{l'} \backslash \{k, \bar k\} } \sum\nolimits_{l = 1}^L \bar c_{k',l} \bar c_{k,l}  {\bm{v}}^{*H}_l {\bm{Q}}^*_{k'} {\bm{w}}^*_{k,l}  {\bm{w}}^{*H}_{k,l} {\bm{Q}}^H_{k'} {\bm{v}}^*_l \\
 			\le& \sum\nolimits_{k'\in {\cal{K}}_{l'} \backslash k } \sum\nolimits_{l = 1}^L \bar c_{k',l} \bar c_{k,l}  {\bm{v}}^{*H}_l {\bm{Q}}_{k'} {\bm{w}}^{*}_{k,l}  {\bm{w}}^{*H}_{k,l} {\bm{Q}}^H_{k'} {\bm{v}}^{*}_l \le \varepsilon
 		\end{aligned}
 	\end{equation}
     Also, more transmit power is available to further improve the beam pattern of the other targets in group $l'$. Hence, the optimal beam pattern gain under the constructed solution $\{\hat c_{k,l}\}$ is not smaller than $\mathop {\min }\limits_{k \in {\cal{K}}} \mathop  {{\mathcal{P}}^*_k} (L)$. On the other hand, it is not difficult to verify that the optimal beam pattern gain for \textit{TD sensing} is an upper bound for that for \textit{hybrid TD-SS sensing}, i.e., the optimal beam pattern gain for such a hybrid scheme cannot be larger than that for $L = K$. Based on the above analysis, the optimal beam pattern gain in (\ref{P2-a}) is monotonically non-increasing as the number of groups decreases, i.e., $\mathop {\min }\limits_{k \in {\cal{K}}} \mathop  {{\mathcal{P}}^*_k} (L+1) \ge \mathop {\min }\limits_{k \in {\cal{K}}} \mathop  {{\mathcal{P}}^*_k}(L)$.
\end{proof}

According to Theorem \ref{SensingPowerVersusFrequency}, since the sensing frequency $F_s$ is a non-increasing function with respect to the number of groups $L$ ($F_s = {1 \mathord{\left/	{\vphantom {1 {L{ \delta N_p}}}} \right.	\kern-\nulldelimiterspace} {L{ (\delta N_p + T_w)}}}$), the optimal beam pattern gain in problem (\rm{P2}) is a monotonically non-increasing function with respect to the sensing frequency $F_s$. Therefore, Theorem \ref{SensingPowerVersusFrequency} reveals the monotonic relationship between the sensing frequency and the beam pattern gain for the proposed \textit{hybrid TD-SS sensing} scheme. Intuitively, when the number of targets in a group increases, the maximum beam pattern gain decreases since the transmit power has to be divided into more directions while the interference constraints will become more stringent. Solving problem (\rm{P2}) is non-trivial due to the integer variables which are closely coupled with the beamforming vector and the IRS phase shifts. To handle this issue, by decomposing the constraints and decoupling the integer variables, we propose a two-layer penalty-based optimization algorithm, whose details are provided next.

\subsection{Penalty-based Algorithm for Solving {\rm{(P2)}}}
\label{PenaltyAlgorithm}
Due to the closely coupled variables and non-convex constraints, it is challenging to optimally solve problem (\rm{P2}). To tackle the non-convexity of constraint (\ref{P2-d}), we first relax it into a convex form and then recover ${\bm{v}}_l$ accordingly, i.e., \footnote{We note that large amplification factors are generally beneficial for maximizing the beam pattern gain. Hence, the optimal solution tends to yield $|[v_l]_n| = 1$ even for the relaxed constraint $|[v_l]_n| \le 1$.}
\vspace{-2mm}  
\begin{equation}\label{UnitModule}
	|[{\bm{v}}_l]_{n}| \le 1, \forall n \in {\cal{N}}, l \in {\cal{L}}.
	\vspace{-2mm}  
\end{equation}
To facilitate the beamforming and phase shift design, problem (\rm{P2}) is rewritten as follows:
\vspace{-2mm}  
\begin{subequations}\label{P2.1}
\begin{align}		
	({\rm{P2.1}}): \quad  \label{P2.1-a} &	\mathop {\max }\limits_{\{{\bm{W}}_{k,l}\}, \{{\bm{V}}_l\}, \{c_{k,l}\}, R} \quad \mathop R  \\ 
	\mbox{s.t.}\quad
	& (\rm{\ref{P2-e}}),(\rm{\ref{P2-f}}),(\ref{UnitModule}), \nonumber \\
	\label{P2.1-b}& \sum\nolimits_{l = 1}^L c_{k,l} {\bm{v}}^H_l {\bm{Q}}_k {{\bm{W}}_{k,l}} {\bm{Q}}^H_k {\bm{v}}_l \ge R, \forall k \in {\cal{K}},  \\
	\label{P2.1-c}& \sum\nolimits_{k'\in {\cal{K}} \backslash k } \sum\nolimits_{l = 1}^L c_{k',l} c_{k,l}  {\bm{v}}^H_l {\bm{Q}}_{k'} {\bm{W}}_{k,l} {\bm{Q}}^H_{k'} {\bm{v}}_l  \le \varepsilon, k \in {\cal{K}},  \\
	\label{P2.1-d}&  \sum\nolimits_{k = 1} ^{K} c_{k,l} {\rm{Tr}} \left({\bm{W}}_{k,l}\right) \le P^{\max},  \forall l \in {\cal{L}},  \\
	\label{P2.1-e}& {\rm{rank}}({\bm{W}}_{k,l}) = 1, \forall k \in {\cal{K}}, l \in {\cal{L}}, \\
	\label{P2.1-f}& {\bm{W}}_{k,l} \succeq 0, \forall k \in {\cal{K}}, l \in {\cal{L}}, 
	\vspace{-2mm}  
\end{align} 
\end{subequations}
where ${\bm{W}}_{k,l} =  {\bm{w}}_{k,l}  {\bm{w}}^H_{k,l}$. Let ${\tilde{\bm{W}}_{k,l}} = c_{k,l} {{\bm{W}}_{k,l}}$ and $\hat{\bm{W}}_{k,k',l} = c_{k',l} {\tilde{\bm{W}}_{k,l}}$. To further decompose the product term $c_{k',l} {\tilde{\bm{W}}_{k,l}}$, the big-M formulation is adopted \cite{griva2009linear, Sun2018Robust} by introducing the following convex constraints:
\vspace{-2mm}  
\begin{equation}\label{StrackConstraintsA}
	{\tilde{\bm{W}}_{k,l}} \preceq c_{k,l} P^{\max} {\bm{I}}_M; \quad {\tilde{\bm{W}}_{k,l}} \succeq 0, 
\end{equation}
\begin{equation}\label{StrackConstraintsB}
	\hat{\bm{W}}_{k,k',l} \preceq c_{k',l} P^{\max} {\bm{I}}_M; \quad {\hat{\bm{W}}_{k,k',l}} \succeq 0 
\end{equation}
\begin{equation}\label{StrackConstraintsC}
	{\hat{\bm{W}}_{k,k',l}} \preceq \tilde{\bm{W}}_{k,l}; \quad \hat{\bm{W}}_{k,k',l} \succeq \tilde{\bm{W}}_{k,l} - (1-c_{k',l})P^{\max} {\bm{I}}_{M}.
	\vspace{-2mm}  
\end{equation}
It can be readily proved that ${{\bm{W}}_{k,l}} = {\tilde{\bm{W}}_{k,l}}$ if $c_{k,l} = 1$, and ${\hat{\bm{W}}_{k,k',l}} = {\tilde{\bm{W}}_{k,l}}$ if $c_{k',l} = 1$; while ${{\bm{W}}_{k,l}} = {\tilde{\bm{W}}_{k,l}} = {\bm{0}}$ if $c_{k,l} = 0$, and ${\hat{\bm{W}}_{k,k',l}} = {\bm{0}}$ if $c_{k',l} = 0$. Hence, beamforming matrices $\{{\bm{W}}_{k,l}\}$ and target grouping variables $\{c_{k,l}\}$ can be recovered from the solution $\{c_{k,l}, {\tilde{\bm{W}}_{k,l}}, {\hat{\bm{W}}_{k,k',l}}\}$ with guaranteed uniqueness. Thus, problem (\ref{P2.1}) can be transformed as follows:
\vspace{-2mm}  
\begin{subequations}\label{P2.2}
\begin{align}
	({\rm{P2.2}}): \quad \label{P2.2-a} &	\mathop {\max }\limits_{\{\tilde{\bm{W}}_{k,l}\}, \{\hat{\bm{W}}_{k,k',l}\}, \{{\bm{v}}_l\}, \{c_{k,l}\}, R} \quad \mathop R \\ 
	\mbox{s.t.}\quad
	& (\rm{\ref{P2-e}}), (\rm{\ref{P2-f}}), (\ref{UnitModule}), (\ref{StrackConstraintsA})-(\ref{StrackConstraintsC}), \nonumber \\
	\label{P2.2-b}& \sum\nolimits_{l = 1}^L  {\bm{v}}^H_l {\bm{Q}}_k {\tilde{\bm{W}}_{k,l}} {\bm{Q}}^H_k {\bm{v}}_l \ge R, \forall k \in {\cal{K}}, \\
	\label{P2.2-c}& \sum\nolimits_{k'\in {\cal{K}} \backslash k } \sum\nolimits_{l = 1}^L  {\bm{v}}^H_l {\bm{Q}}_{k'} \hat{\bm{W}}_{k,l} {\bm{Q}}^H_{k'} {\bm{v}}_l  \le \varepsilon, \forall k \in {\cal{K}},  \\
	\label{P2.2-d}&  \sum\nolimits_{k = 1} ^{K}  {\rm{Tr}} \left({\tilde{\bm{W}}_{k,l}}\right) \le P^{\max},  \forall l \in {\cal{L}},\\
	\label{P2.2-e}& {\rm{rank}}(\tilde{\bm{W}}_{k,l}) = 1, {\rm{rank}}(\hat{\bm{W}}_{k,k',l}) = 1 \forall k,k' \in {\cal{K}}, l \in {\cal{L}},  \\
	\label{P2.2-f}& \tilde{\bm{W}}_{k,l} \succeq 0, \hat{\bm{W}}_{k,k',l} \succeq 0, \forall k,k' \in {\cal{K}}, l \in {\cal{L}}.
	\vspace{-2mm}  
\end{align} 
\end{subequations} 
It is still challenging to solve problem (\ref{P2.2}) due to non-convex constraints ({\ref{P2.2-b}}), ({\ref{P2.2-c}}), ({\ref{P2.2-e}}), and integer constraint ({\ref{P2-f}}). To deal with this issue, we relax the rank-one constraints to obtain a semidefinite relaxation (SDR) version of problem ({\rm{P2.2}}), denoted by (SDR2.2). In the following, we prove that the optimal solution with rank-one matrices $\{\tilde{\bm{W}}_{k,l}\}$ and $\{\hat{\bm{W}}_{k,k',l}\}$ is also optimal for (P2.2). 
\begin{Pro}\label{RankOne}
	There always exists an optimal solution of problem (\ref{P2.2}) satisfying ${\rm{rank}}(\bar{\bm{W}}_{k,l}) = 1$ and ${\rm{rank}}(\bar{\bm{W}}_{k,k',l}) = 1$ that can be obtained based on the optimal solution of problem (SDR2.2), i.e.,
	\vspace{-2mm}  
	\begin{equation}\label{ConstructionSolution}
		\begin{aligned}
			&\bar{\bm{w}}_{k,l} = \left\{ {\begin{array}{*{20}{c}}
					{({\bm{v}}^{*H}_l {\bm{Q}}_k \tilde{\bm{W}}^*_{k,l} {\bm{Q}}^H_k {\bm{v}}^{*}_l )^{-1/2} \tilde{\bm{W}}^*_{k,l} {\bm{Q}}^H_k {\bm{v}}^{*}_l,}&{ c^*_{k,l} = 1}\\
					{{\bm{0}},}&{{\rm{otherwise}}}
			\end{array}} \right., \\
			&\bar{\bm{W}}_{k,l} = \bar{\bm{w}}_{k,l} \bar{\bm{w}}^H_{k,l}; \quad \bar{c}_{k,l} = {c}^*_{k,l}, \quad \bar{\bm{v}}_l = {\bm{v}}^*_l, \quad \forall k \in {\cal{K}}, l \in {\cal{L}}, 
			\vspace{-2mm}  
		\end{aligned}		
	\end{equation}
	\begin{equation}\label{ConstructionSolution2}
		\bar{\bm{w}}_{k,k',l} = \left\{ {\begin{array}{*{20}{c}}
				{\bar{\bm{w}}_{k,l},}&{ c^*_{k',l} = 1}\\
				{{\bm{0}},}&{{\rm{otherwise}}}
		\end{array}} \right.,  \bar{\bm{W}}_{k,k',l} = \bar{\bm{w}}_{k,k',l} \bar{\bm{w}}^H_{k,k',l},  
	\end{equation}
	where $\{ \{\tilde{\bm{W}}^*_{k,l}\}, \{{\hat{\bm{W}}^*_{k,k',l}}\}, \{{\bm{v}}^*_l\}, \{c^*_{k,l}\} \}$ denotes an optimal solution of (SDR2.2).
\end{Pro}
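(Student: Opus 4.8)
The plan is to follow the same spectral-reconstruction argument used for Proposition \ref{RankOneCosensing}, now applied group-by-group with the grouping variables and big-M couplings held fixed. First I would relax the rank-one constraints (\ref{P2.2-e}) to obtain (SDR2.2) and take its optimal solution $\{\{\tilde{\bm{W}}^*_{k,l}\}, \{\hat{\bm{W}}^*_{k,k',l}\}, \{{\bm{v}}^*_l\}, \{c^*_{k,l}\}\}$. Keeping $\{{\bm{v}}^*_l\}$ and $\{c^*_{k,l}\}$ unchanged, I would construct the candidate rank-one matrices exactly as in (\ref{ConstructionSolution})--(\ref{ConstructionSolution2}): for every index with $c^*_{k,l}=1$, set $\bar{\bm{w}}_{k,l} = \gamma_{k,l}^{-1/2}\tilde{\bm{W}}^*_{k,l}{\bm{Q}}^H_k{\bm{v}}^*_l$ with $\gamma_{k,l} = {\bm{v}}^{*H}_l{\bm{Q}}_k\tilde{\bm{W}}^*_{k,l}{\bm{Q}}^H_k{\bm{v}}^*_l$, and zero otherwise (note that for $c^*_{k,l}=0$ the big-M relation (\ref{StrackConstraintsA}) already forces $\tilde{\bm{W}}^*_{k,l}={\bm{0}}$, so the construction is consistent); then set $\bar{\bm{W}}_{k,l}=\bar{\bm{w}}_{k,l}\bar{\bm{w}}^H_{k,l}$ and define $\bar{\bm{W}}_{k,k',l}$ through (\ref{ConstructionSolution2}). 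The goal is to show this candidate is feasible for (P2.2) and attains the same objective value, which then certifies it as optimal since (SDR2.2) is a relaxation of (P2.2).

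The core of the verification consists of three scalar relations, all for fixed $({\bm{v}}^*_l, c^*_{k,l})$. For the beam pattern gain constraint (\ref{P2.2-b}), a direct computation gives ${\bm{v}}^{*H}_l{\bm{Q}}_k\bar{\bm{W}}_{k,l}{\bm{Q}}^H_k{\bm{v}}^*_l = |{\bm{v}}^{*H}_l{\bm{Q}}_k\bar{\bm{w}}_{k,l}|^2 = \gamma_{k,l}$, so the left-hand side of (\ref{P2.2-b}) is exactly preserved and the optimal $R$ is unchanged. For the interference constraint (\ref{P2.2-c}), I would invoke the Cauchy--Schwarz inequality with respect to the inner product induced by the positive semidefinite matrix $\tilde{\bm{W}}^*_{k,l}$, yielding $|{\bm{v}}^{*H}_l{\bm{Q}}_{k'}\tilde{\bm{W}}^*_{k,l}{\bm{Q}}^H_k{\bm{v}}^*_l|^2 \le (\,{\bm{v}}^{*H}_l{\bm{Q}}_{k'}\tilde{\bm{W}}^*_{k,l}{\bm{Q}}^H_{k'}{\bm{v}}^*_l\,)\,\gamma_{k,l}$; dividing by $\gamma_{k,l}$ shows that the rank-one reduction does not increase any interference term, so (\ref{P2.2-c}) continues to hold. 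For the power constraint (\ref{P2.2-d}), writing $\tilde{\bm{W}}^*_{k,l}=\sum_i\lambda_i{\bm{u}}_i{\bm{u}}^H_i$ and $p_i=|{\bm{u}}^H_i{\bm{Q}}^H_k{\bm{v}}^*_l|^2$, I would express $\|\bar{\bm{w}}_{k,l}\|^2 = (\sum_i\lambda_i^2 p_i)/(\sum_i\lambda_i p_i)$, which is a weighted average of the $\lambda_i$ and is therefore at most $\sum_i\lambda_i = {\rm{Tr}}(\tilde{\bm{W}}^*_{k,l})$, preserving per-group feasibility of (\ref{P2.2-d}).

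It then remains to check that the big-M couplings (\ref{StrackConstraintsA})--(\ref{StrackConstraintsC}) survive the reconstruction. Since $\{c^*_{k,l}\}$ is retained, for $c^*_{k,l}=1$ the trace bound just established gives $\bar{\bm{W}}_{k,l}\preceq\|\bar{\bm{w}}_{k,l}\|^2{\bm{I}}_M\preceq P^{\max}{\bm{I}}_M$, matching (\ref{StrackConstraintsA}), while for $c^*_{k,l}=0$ all associated matrices are zero by construction. The definition (\ref{ConstructionSolution2}), which sets $\bar{\bm{W}}_{k,k',l}$ equal to $\bar{\bm{W}}_{k,l}$ when $c^*_{k',l}=1$ and to the zero matrix otherwise, makes (\ref{StrackConstraintsB}) and (\ref{StrackConstraintsC}) hold with equality or trivially, reproducing the original product relationship $\hat{\bm{W}}_{k,k',l}=c_{k',l}\tilde{\bm{W}}_{k,l}$. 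The grouping constraints (\ref{P2-e})--(\ref{P2-f}) and the relaxed unit-modulus constraint (\ref{UnitModule}) are unaffected since $\{c^*_{k,l}\}$ and $\{{\bm{v}}^*_l\}$ are unchanged, and (\ref{P2.2-e}) holds by construction.

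The main obstacle I anticipate is not any single inequality --- each mirrors Appendix A --- but rather the bookkeeping of the two coupled matrix families $\{\tilde{\bm{W}}_{k,l}\}$ and $\{\hat{\bm{W}}_{k,k',l}\}$ together with the indicator variables: one must verify that the single rank-one vector $\bar{\bm{w}}_{k,l}$ simultaneously satisfies the beam-gain, interference, power, and all three big-M relations consistently across every group, so that the reconstructed point is genuinely feasible for all of (P2.2) and not merely for a subset of its constraints. Once this joint feasibility is confirmed and the objective value is shown to be preserved, optimality of the rank-one point follows immediately from the relaxation chain.
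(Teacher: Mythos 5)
Your proposal is correct and follows essentially the same route as the paper, which constructs the rank-one candidate via (\ref{ConstructionSolution})--(\ref{ConstructionSolution2}) and verifies feasibility by the same argument as Appendix~A (the paper explicitly defers to that proof); your only deviation is using the Cauchy--Schwarz inequality in the $\tilde{\bm{W}}^*_{k,l}$-induced inner product where the paper invokes von Neumann's trace inequality, and the two yield the identical bound on the interference terms. In fact your write-up is more complete than the paper's, since you explicitly check the big-M couplings (\ref{StrackConstraintsA})--(\ref{StrackConstraintsC}) and the trace-preservation bound $\|\bar{\bm{w}}_{k,l}\|^2 \le {\rm{Tr}}(\tilde{\bm{W}}^*_{k,l})$, both of which the paper leaves implicit.
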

\begin{proof}
	Similar as in the proof for Proposition \ref{RankOneCosensing}, it can be verified that the solution constructed in (\ref{ConstructionSolution}) always satisfies the target grouping constraints (\ref{P2-e}) and (\ref{P2-f}), the beam pattern constraints (\ref{P2.2-b}), (\ref{P2.2-c}), and the transmit power constraint (\ref{P2.2-d}). Also, ${\rm{rank}}({\tilde{\bm{W}}^*_{k,l}}) = {\rm{rank}}({\hat{\bm{W}}^*_{k,k',l}})$ since ${\tilde{\bm{W}}^*_{k,k',l}} = {\hat{\bm{W}}^*_{k,k',l}}$ when $c^*_{k',l} = 1$. As the detailed proof is similar to that in Appendix A, it is omitted here due to the page limitation.
\end{proof} 

Furthermore, to relax the non-convex constraints regarding the integer variables, we rewrite constraint ({\ref{P2-f}}) as follows:
\vspace{-2mm}  
\begin{equation}\label{BinaryLinearity}
	c_{k,l}( 1 - c_{k,l}) \le 0; \quad 0 \le c_{k,l} \le 1, \forall k \in {\cal{K}}, l \in {\cal{L}}.
	\vspace{-2mm}  
\end{equation}
We can readily show that the $c_{k,l}$ satisfying the above two constraints must be either 1 or 0, which confirms the equivalence of ({\ref{P2-f}}) and (\ref{BinaryLinearity}). Then, the constraints $c_{k,l}( 1 - c_{k,l}) \le 0$, $\forall k \in {\cal{K}}, l \in {\cal{L}}$ in ({\ref{BinaryLinearity}}) are added to the objective function in (SDR2.2) as a penalty term, and problem (SDR2.2) is recast as 
\vspace{-2mm}  
\begin{subequations}\label{P2.3}
\begin{align}
	({\rm{P2.3}}): \quad \label{P2.3-a}&	\mathop {\max }\limits_{\{\tilde{\bm{W}}_{k,l}\}, \{\hat{\bm{W}}_{k,k',l}\}, \{{\bm{v}}_l\}, \{c_{k,l}\}, R} \quad \mathop R - \frac{1}{\eta}\sum\nolimits_{k = 1}^K \sum\nolimits_{l = 1}^L c_{k,l}( 1 - c_{k,l})  \\ 
	\mbox{s.t.}\quad
	& (\rm{\ref{P2-e}}), (\ref{UnitModule}), (\ref{StrackConstraintsA})-(\ref{StrackConstraintsC}), (\rm{{\ref{P2.2-b}}})-(\rm{{\ref{P2.2-d}}}), (\rm{{\ref{P2.2-f}}}),  \nonumber \\
	\label{P2.3-b} & c_{k,l} \in [0,1], \forall k \in {\cal{K}}, l \in {\cal{L}},
	\vspace{-2mm}  
\end{align} 
\end{subequations} 
where $\eta > 0$ is the penalty coefficient used to penalize the violation of equality constraint ({\ref{P2-f}}). It can be readily verified that the solution of problem (\ref{P2.3}) satisfies the inequality constraint in (\ref{BinaryLinearity}) (i.e., the $\{c_{k,l}\}$ are binary values), when $\frac{1}{\eta} \to \infty$. Accordingly, in the inner layer, all optimization variables are partitioned into two blocks, where transmit covariance matrices $\{\tilde{\bm{W}}_{k,l}, \hat{\bm{W}}_{k,k',l}\}$ and target grouping variables $c_{k,l}$ are allocated to one block, while IRS phase shift vectors ${\bm{v}}_l$ are optimized in the other block. Then, the AO method is applied to iteratively optimize the variables in the two blocks.

\subsubsection{Optimization of Transmit Covariance Matrices and Target Grouping}
For given IRS phase shifts, the successive convex approximation (SCA) technique is adopted to obtain the beamforming vector and target grouping. For a given feasible point $c_{k,l}^{(r)}$, we have
\vspace{-2mm}  
\begin{equation}\label{BenaryLangrauage}
	c_{k,l}( 1 - c_{k,l}) \le c_{k,l} -2 c_{k,l} c_{k,l}^{(r)} + \left(c_{k,l}^{(r)}\right)^{2} = {\cal{G}}^{(r)}_{ub}(c_{k,l}), \forall k \in {\cal{K}}, l \in {\cal{L}}.
	\vspace{-2mm}  
\end{equation}
By replacing the corresponding terms $c_{k,l}( 1 - c_{k,l})$ in the objective function of problem (\ref{P2.3}), it can be transformed to 
\vspace{-2mm}  
\begin{subequations}\label{P2.3.1}
	\begin{align}
	(\rm{P2.3.1}): \quad \label{P2.3.1a} & \mathop {\max }\limits_{\{\tilde{\bm{W}}_{k,l}\}, \{\hat{\bm{W}}_{k,k',l}\}, \{c_{k,l}\}, R} \quad \mathop R - \frac{1}{\eta}\sum\nolimits_{k = 1}^K \sum\nolimits_{l = 1}^L {\cal{G}}^{(r)}_{ub}(c_{k,l}) \\ 
	\mbox{s.t.}\quad
	& (\rm{\ref{P2-e}}), (\ref{StrackConstraintsA})-(\ref{StrackConstraintsC}), (\rm{{\ref{P2.2-b}}})-(\rm{{\ref{P2.2-d}}}), (\rm{{\ref{P2.2-f}}}), (\rm{\ref{P2.3-b}}). \nonumber 
	\vspace{-2mm}  
\end{align} 
\end{subequations} 
It is not difficult to show that the objective function and the constraints of problem (\rm{P2.3.1}) are all convex, i.e., the above problem can be solved by existing convex optimization tools such as CVX \cite{Michael2014cvx}.

\subsubsection{Optimization of Phase Shift}
For given transmit covariance matrix and target grouping, the phase shift problem can be formulated as follows:
\vspace{-2mm}  
\begin{subequations}\label{P2.3.2}
	\begin{align}
	(\rm{P2.3.2}): \quad \label{P2.3.2a} & \mathop {\max }\limits_{ \{{\bm{v}}_l\}, R} \quad \mathop R \quad \mbox{s.t.}\quad  (\ref{UnitModule}),  (\rm{{\ref{P2.2-b}}}), (\rm{{\ref{P2.2-c}}}). 
	\vspace{-2mm}  
\end{align} 
\end{subequations} 
In constraint ({\ref{P2.2-b}}), in the $r$th iteration, ${\bm{v}}^H_l {\bm{Q}}_k {\tilde{\bm{W}}_{k,l}} {\bm{Q}}^H_k {\bm{v}}_l$ can be linearized based on the first-order Taylor expansion at a given point ${\bm{v}}_l^{(r)}$, yielding the following inequality
\vspace{-2mm}  
\begin{equation}\label{LowerBoundPhase}
	{\bm{v}}^H_l {\bm{Q}}_k {\tilde{\bm{W}}_{k,l}} {\bm{Q}}^H_k {\bm{v}}_l \ge -({\bm{v}}^H_l)^{(r)} {\bm{Q}}_k {\tilde{\bm{W}}_{k,l}} {\bm{Q}}^H_k {\bm{v}}_l^{(r)} + 2 {\rm{Re}}\left( ({\bm{v}}^H_l)^{(r)} {\bm{Q}}_k {\tilde{\bm{W}}_{k,l}} {\bm{Q}}^H_k {\bm{v}}_l\right).
	\vspace{-2mm}  
\end{equation}
After replacing the corresponding terms in ({\ref{P2.2-b}}) by (\ref{LowerBoundPhase}), optimization problem (\ref{P2.3.2}) in the $r$th iteration can be solved by convex optimization solvers such as CVX \cite{Michael2014cvx}. Finally, to satisfy the unit-modulus constraint in (\ref{P2-d}), let $[\hat{\bm{v}}_l]_{n} = [\tilde{\bm{v}}_l]_{n}/|[\tilde{\bm{v}}_l]_{n}|$, where $\tilde{\bm{v}}_l$ denotes the phase shift vector obtained by solving problem (\ref{P2.3.2}).

\subsubsection{Outer Layer Iteration}
In the outer layer, the value of the penalty coefficient $\eta$ is gradually reduced by updating $\eta^{(r+1)} \to e \eta^{(r)}$, where $e$ ($0 < e < 1$) is a scaling factor. In the inner layer, with the given penalty coefficient, the objective function in (\ref{P2.3-a}) is non-decreasing in each AO iteration. In the outer layer, $\eta$ is initialized to a sufficiently large value and then gradually decreased so that inequality constraint (\ref{BinaryLinearity}) is ultimately satisfied. As such, this penalty-based framework is guaranteed to converge to a locally optimal point of problem (P2.3) based on Appendix B in \cite{Cai2017Joint}. The details of the proposed penalty-based algorithm are shown in {\bf{Algorithm} \ref{PenaltyBasedAlgorithm}}.

The complexity of the proposed penalty-based algorithm can be analyzed as follows. In the inner layer, the main complexity is caused by the computation of $\{\tilde{\bm{W}}_{k,l}\}$, $\{\hat{\bm{W}}_{k,k',l}\}$, and $\{c_{k,l}\}$, i.e., ${\cal{O}}\left((2KLM^2 + K^2LM^2 + 1)^{3.5}\right)$ \cite{zhang2019securing}, where $2KLM^2 + K^2LM^2 + 1$ is the number of variables in problem (\rm{P2.3.1}). Similarly, the complexity of solving problem (\rm{P2.3.2}) is ${\cal{O}}\left((LN^2+1)^{3.5}\right)$ \cite{zhang2019securing}, where $LN^2+1$ denotes the number of variables in problem (\rm{P2.3.2}). Thus, the complexity of each iteration in {\bf{Algorithm} \ref{PenaltyBasedAlgorithm}} is ${\cal{O}}\left( (2KLM^2 + K^2LM^2 + 1)^{3.5} + (LN^2+1)^{3.5}\right)$, and its overall complexity depends on the number of iterations required for reaching convergence in the outer layer.

\begin{algorithm}[t]
	\small
	\caption{Penalty-Based Algorithm}
	\label{PenaltyBasedAlgorithm}
	\begin{algorithmic}[1]
		\STATE {\bf{Initialize}}  $\{\tilde{\bm{W}}_{k,l}\}, \{\hat{\bm{W}}_{k,k',l}\}, \{c_{k,l}\}$, and $\{{\bm{v}}_l\}$, iteration number $r = 1$, convergence accuracies $\epsilon_2$ and $\epsilon_3$, objective value $V^{(r)*}$
		\REPEAT 
		\REPEAT 
		\STATE For given $\{{\bm{v}}^{(r)}_l\}$, obtain $\{\tilde{\bm{W}}^{(r+1)}_{k,l}\}, \{\hat{\bm{W}}^{(r+1)}_{k,k',l}\}, \{c^{(r+1)}_{k,l}\}$ by solving problem (\rm{P2.3.1})
		\STATE For given $\{\tilde{\bm{W}}^{(r+1)}_{k,l}\}, \{\hat{\bm{W}}^{(r+1)}_{k,k',l}\}, \{c^{(r+1)}_{k,l}\}$, obtain $\{{\bm{v}}^{(r+1)}_l\}$ by solving problem (\rm{P2.3.2})
		\STATE Update the objective value $V^{(r+1)*}$ of problem (P2.3) according to the obtained $\{\tilde{\bm{W}}^{(r+1)}_{k,l}\}, \{\hat{\bm{W}}^{(r+1)}_{k,k',l}\}, \{c^{(r+1)}_{k,l}\}$, and $\{c^{(r+1)}_{k,l}\}$
		\STATE $r = r + 1$
		\UNTIL $\left|V^{(r)*} - V^{(r-1)*}\right| \le \epsilon_2$
		\STATE $\eta \to e \eta$
		\UNTIL	the constraint violation in (\ref{BinaryLinearity}) is below a threshold $\epsilon_3$.
		\STATE Obtain $\{\bar {\bm{w}}_{k,l}^*\}$ based on Proposition {\ref{RankOne}}.
		\STATE Recover the optimal beamforming vector ${\bm{w}}_{k,l}$ based on $\{\bar {\bm{w}}^*_{k,l}\}$ and $\{c^*_{k,l}\}$.
	\end{algorithmic}
\end{algorithm}

\begin{table}[h]
	\small
	\centering
	\caption{System Parameters}
	\label{tab1}
	\begin{IEEEeqnarraybox}[\IEEEeqnarraystrutmode\IEEEeqnarraystrutsizeadd{2pt}{1pt}]{v/c/v/c/v/v/c/v/c/v/v/c/v/c/v}
		\IEEEeqnarrayrulerow\\
		&\mbox{Parameter}&&\mbox{Value}&&&\mbox{Parameter}&&\mbox{Value}&&&\mbox{Parameter}&&\mbox{Value}&\\
		\IEEEeqnarraydblrulerow\\
		\IEEEeqnarrayseprow[3pt]\\
		&K&&[1,16]&&&L&&\{1,2,4,8,16\}&&&M&&8&\IEEEeqnarraystrutsize{0pt}{0pt}\\
		\IEEEeqnarrayseprow[3pt]\\
		\IEEEeqnarrayrulerow\\
		\IEEEeqnarrayseprow[3pt]\\  
		&N&& 64 &&&N_x&&8&&&N_y&&8&\IEEEeqnarraystrutsize{0pt}{0pt}\\
		\IEEEeqnarrayseprow[3pt]\\
		\IEEEeqnarrayrulerow\\
		\IEEEeqnarrayseprow[3pt]\\  
		&\varepsilon&&5 \times 10^{-6}&&&P^{\max}&&1 \; \mbox{W}&&&\delta&&0.1 \; \mbox{ms}&\IEEEeqnarraystrutsize{0pt}{0pt}\\
		\IEEEeqnarrayseprow[3pt]\\
		\IEEEeqnarrayrulerow\\
		\IEEEeqnarrayseprow[3pt]\\
		&N_p&&20&&&T_w&&8 \; \mbox{ms} &&&\sigma^2&&-70 \; \mbox{dBm}&\IEEEeqnarraystrutsize{0pt}{0pt}\\
		\IEEEeqnarrayseprow[3pt]\\
		\IEEEeqnarrayrulerow\\
		\IEEEeqnarrayseprow[3pt]\\
		&\epsilon_1&& {10}^{-3} &&&\epsilon_2&& {10}^{-3} &&& \epsilon_3 && {10}^{-6} &\IEEEeqnarraystrutsize{0pt}{0pt}\\
		\IEEEeqnarrayseprow[3pt]\\
		\IEEEeqnarrayrulerow
	\end{IEEEeqnarraybox}  
\end{table} 

\section{Simulation Results}
\label{Simulations}
\par 
In this section, Monte Carlo simulation results are provided for characterizing the performance of the proposed sensing schemes and for gaining insight into the design and implementation of IRS-assisted multi-target sensing systems. In our simulations, we adopt the algorithm developed for TD sensing for $K/L=1$ (c.f. Section \ref{TDSensingMechanism}), the algorithm developed for SS sensing for $L=1$ (c.f. Section \ref{SSMechanism}), and otherwise the penalty-based algorithm for TD-SS sensing (c.f. {\bf{Algorithm} \ref{PenaltyBasedAlgorithm}}). The main system parameters are listed in Table \ref{tab1}. We compare the proposed schemes with three benchmark schemes: \footnote{The per iteration complexity of the algorithms employed for the benchmark schemes is ${\cal{O}}\left( (2KLM^2 + K^2LM^2 + 1)^{3.5} + K(N^2L/K+1)^{3.5}\right)$ for IRSD and ${\cal{O}}\left( (2KLM^2 + K^2LM^2 + 1)^{3.5} + (LN^2+1)^{3.5}\right)$ for IRSB and NIC. The overall complexity also depends on the number of iterations required for convergence. }
\begin{itemize}
	\item {\bf{IRS division (IRSD)}}: The elements at the IRS are divided into $K$ groups, and the IRS elements in each group are optimized for maximization of the beam pattern gain of the corresponding target. 
	\item {\bf{IRS beamforming (IRSB)}}: Four adjacent IRS elements are assigned the same phase shift value to reduce the overhead of IRS phase shift control, where the corresponding solution is obtained based on the proposed penalty-based algorithm.
	\item {\bf{No interference constraints (NIC)}}: Motivated by \cite{song2021joint}, the beam pattern is optimized without interference constraints.
\end{itemize}
\begin{figure*}[t]
	\centering
	\setlength{\abovecaptionskip}{0.cm}
	\subfigure[$K/L = 1$.]
	{	
		\label{figure6a}
		\includegraphics[width=6.7cm]{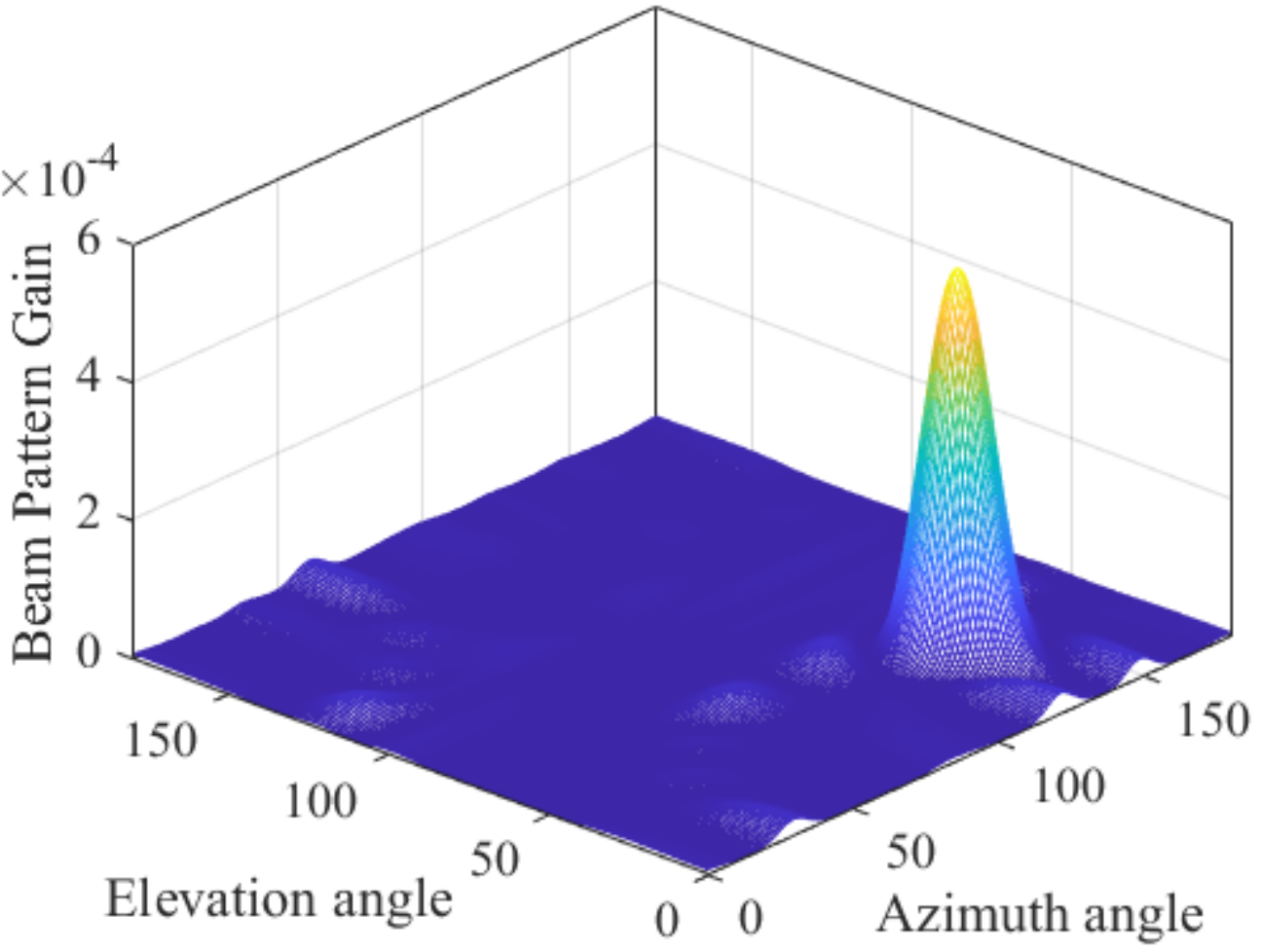}
	}\hspace{15mm}
	\subfigure[$K/L = 2$.]
	{	
		\label{figure6b}
		\includegraphics[width=6.7cm]{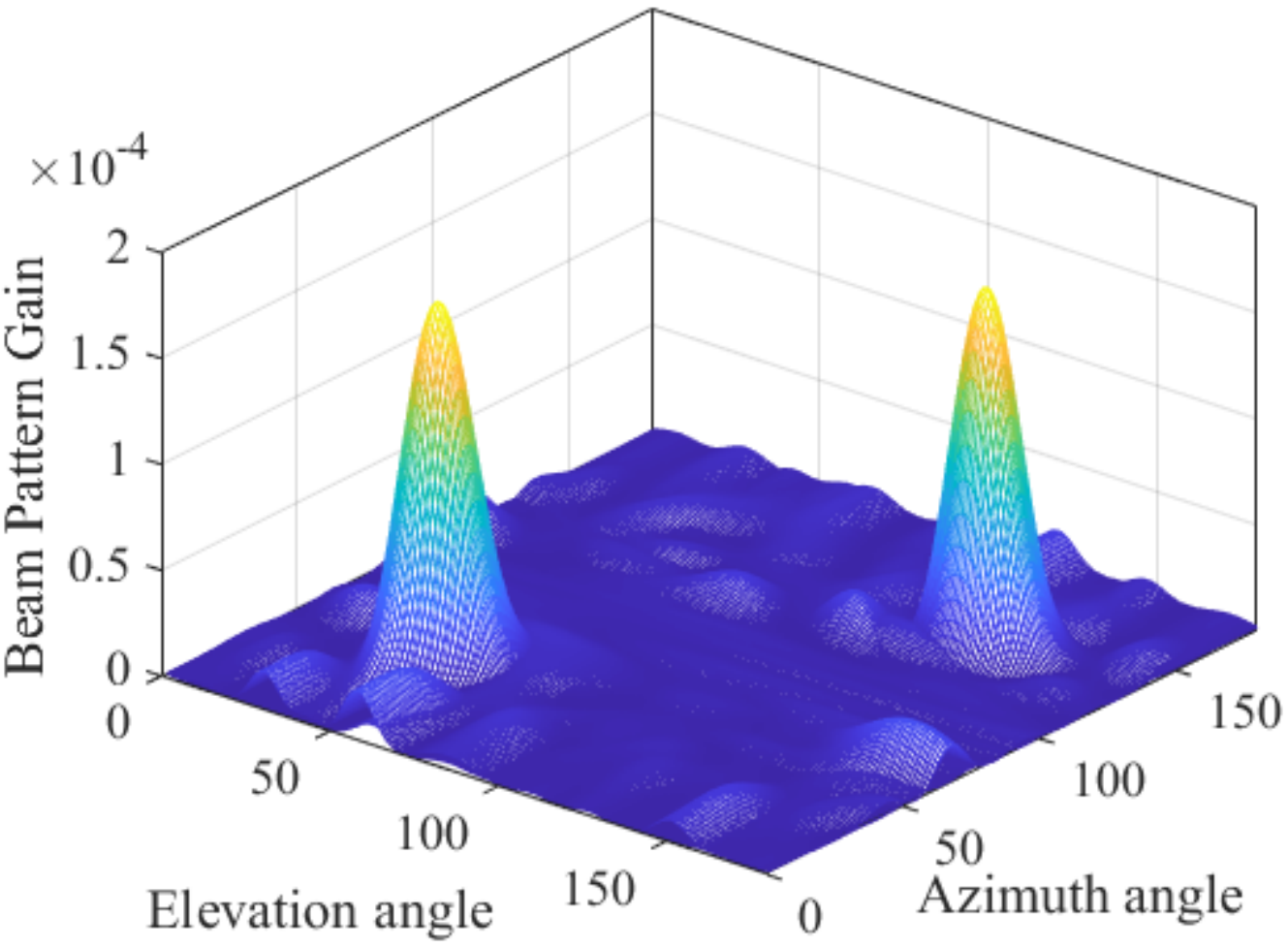}
	}	\\
	\subfigure[$K/L = 4$.]
	{	
		\label{figure6c}
		\includegraphics[width=6.7cm]{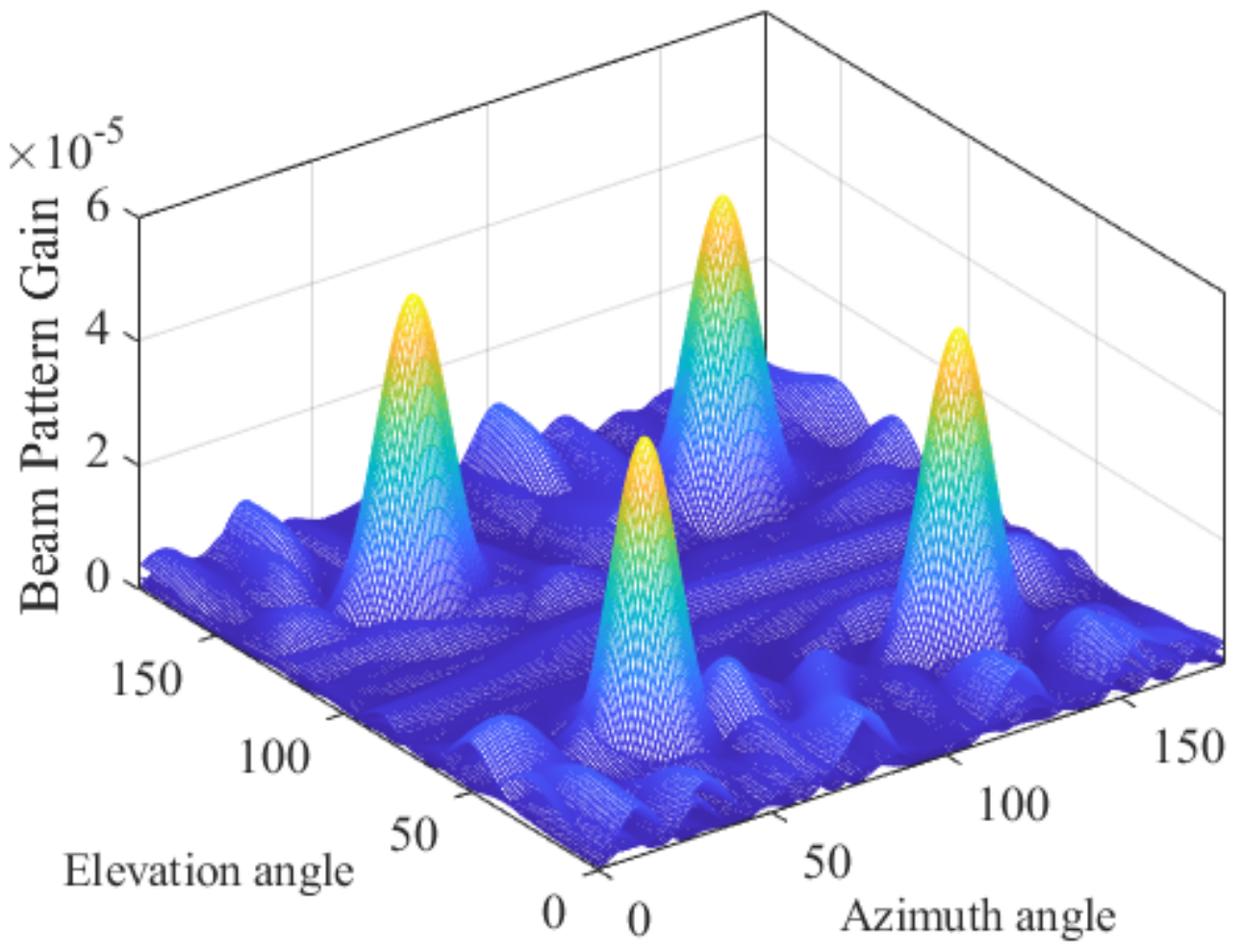}
	}\hspace{15mm}
	\subfigure[$K/L = 8$.]
	{	
		\label{figure6d}
		\includegraphics[width=6.7cm]{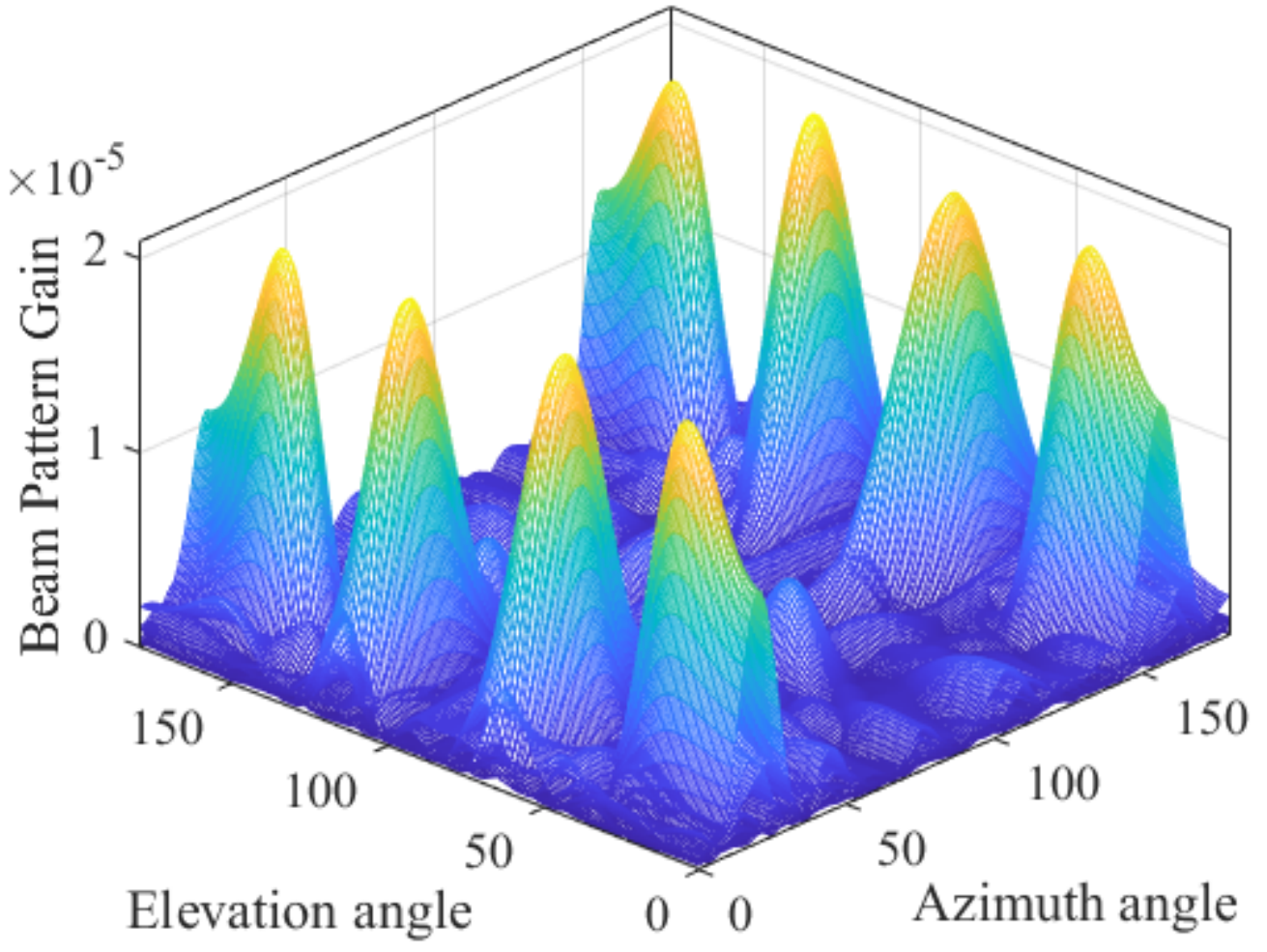}
	}	
	\caption{3D beam pattern gain of the proposed schemes.}
	\label{figure6}
\end{figure*}
\begin{figure*}[t]
	\centering
	\setlength{\abovecaptionskip}{0.cm}
	\subfigure[$K/L = 3$ with $\varepsilon = 5 \times 10^{-5}$.]
	{	
		\label{figure5a}
		\includegraphics[width=6.7cm]{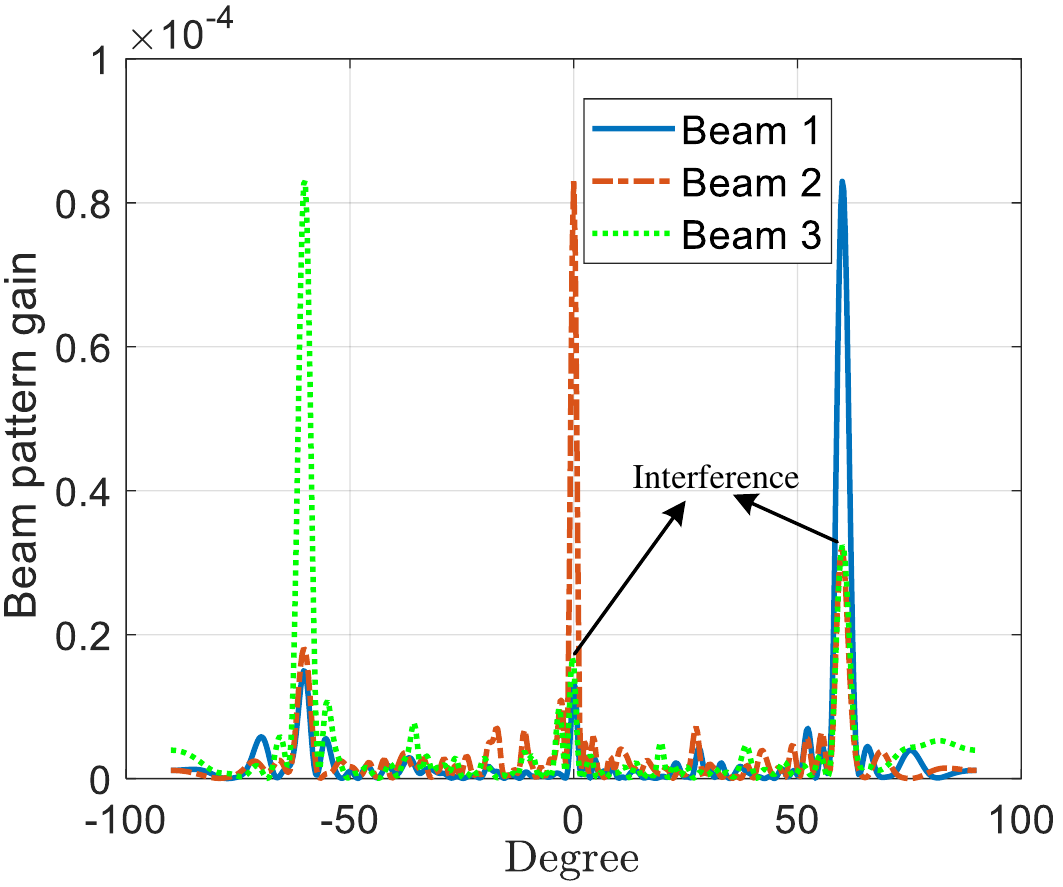}
	}\hspace{15mm}
	\subfigure[$K/L = 3$ with $\varepsilon = 5 \times 10^{-6}$.]
	{	
		\label{figure5b}
		\includegraphics[width=6.7cm]{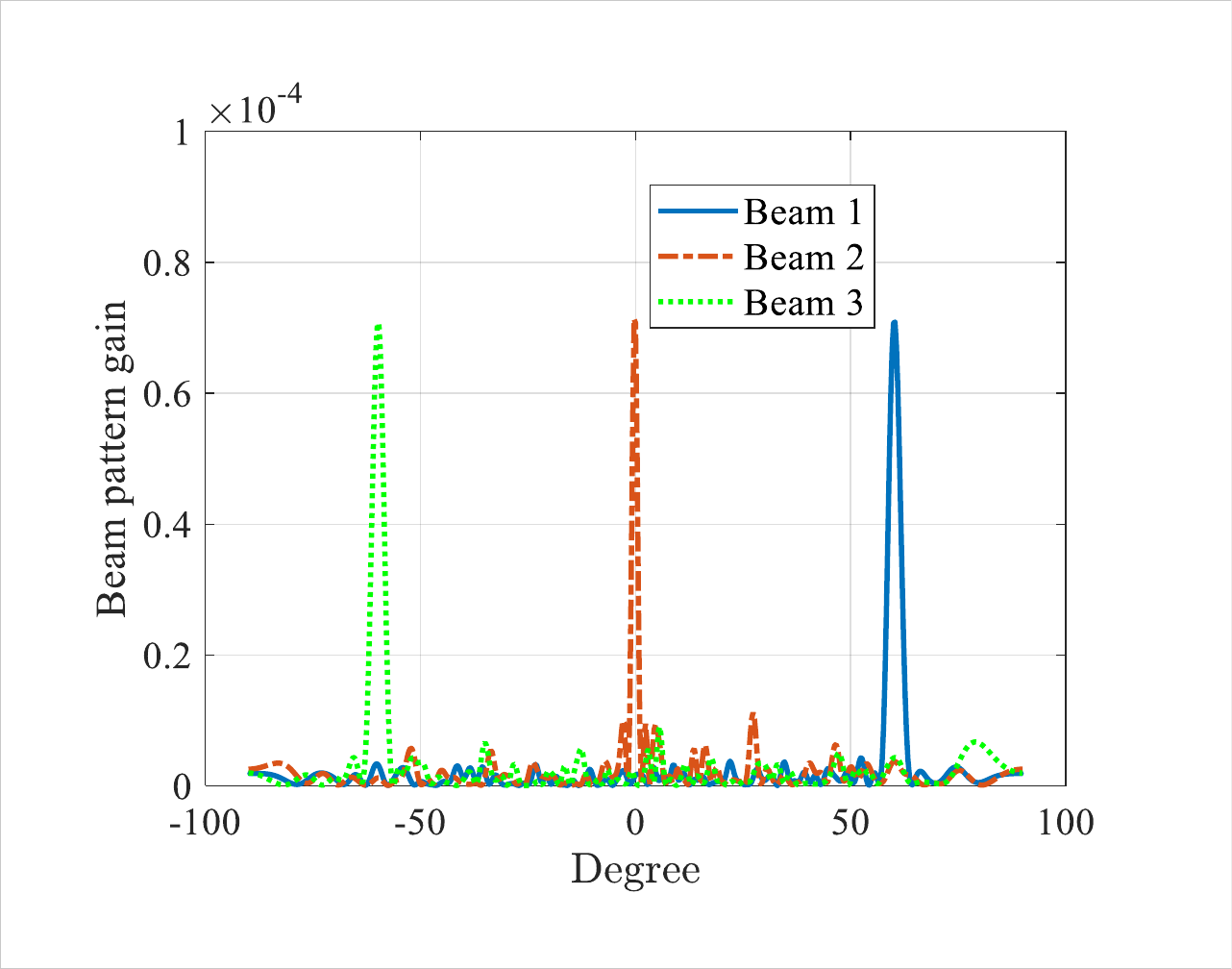}
	} \\
	\subfigure[$K/L = 4$ with $\varepsilon = 5 \times 10^{-5}$.]
	{	
		\label{figure5c}
		\includegraphics[width=6.7cm]{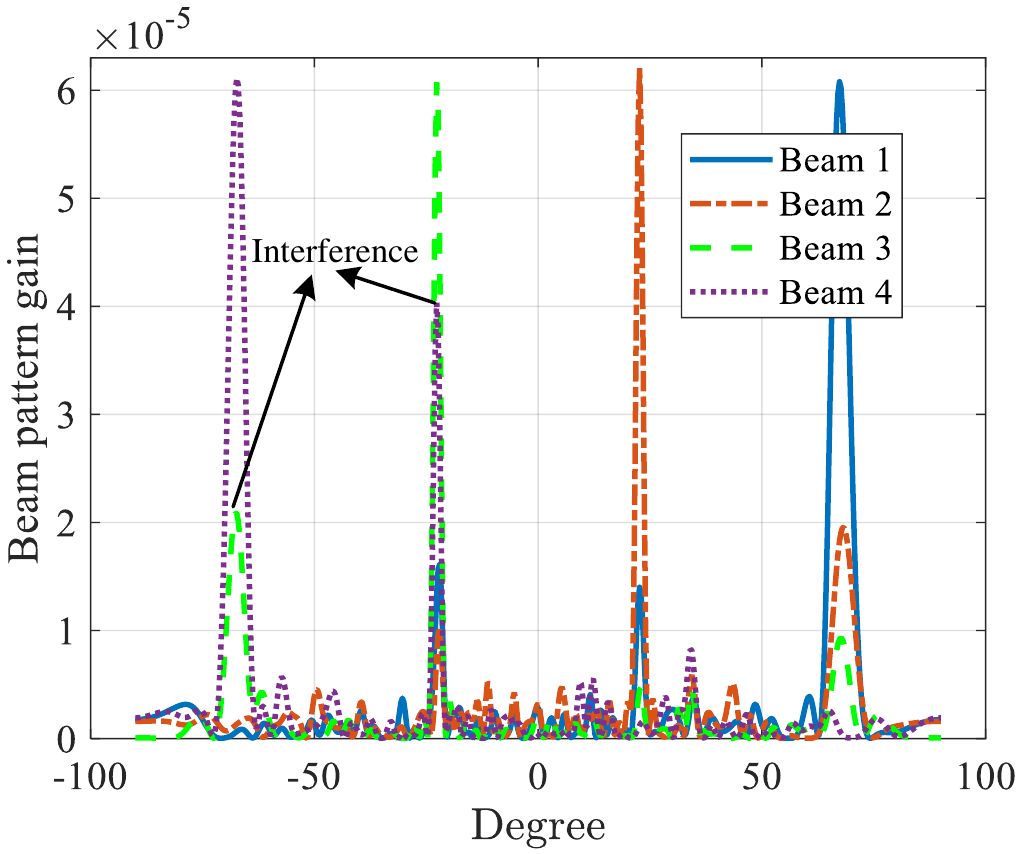}
	}\hspace{15mm}
	\subfigure[$K/L = 4$ with $\varepsilon = 5 \times 10^{-6}$.]
	{	
		\label{figure5d}
		\includegraphics[width=6.7cm]{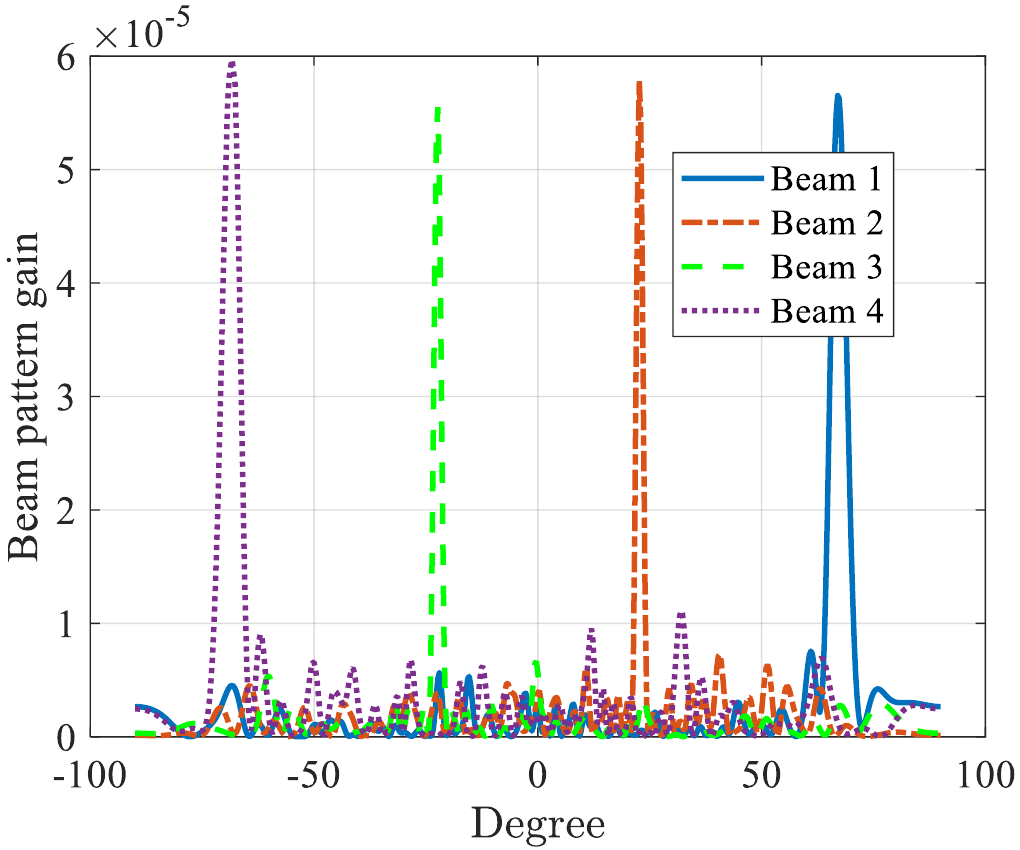}
	}		
	\caption{Beam pattern gain of the proposed schemes for an IRS equipped with a uniform linear array.}
	\label{figure5}
\end{figure*}

\subsection{Beam Pattern Gain}
\subsubsection{Uniform Planar Array}
The 3D beam pattern gains for a uniform planar array are shown in Fig.~\ref{figure6}, where $M = 8$ and $K/L \in \{1,2,4,8\}$. Due to the transmit power and interference constraints, the beam pattern gain decreases for the desired directions as the number of targets in each group increases, while the beam power leaking into side lobes also increases. Besides, when the number of targets in the same group is small, the power of the dedicated beam is more concentrated in the desired direction and relatively symmetrical to the center of the target direction, as can be observed in Figs.~\ref{figure6a} and \ref{figure6b}; when the number of targets is large, the beam pattern gain becomes less symmetrical, as can be seen in Fig.~\ref{figure6d}. 

\subsubsection{Uniform Linear Array} 
In Fig.~\ref{figure5}, the 2-dimensional (2D) beam pattern gain is shown to further illustrate the interference between the different beams for different interference thresholds $\varepsilon$ and different numbers of targets in one group. It is observed from Fig.~\ref{figure5} that for a given number of targets in a group, as the interference threshold $\varepsilon$ decreases from $\varepsilon = 5 \times 10^{-5}$ to $\varepsilon = 5 \times 10^{-6}$, the maximum beam pattern gain is reduced by about $15.6\%$ for $K/L=3$, due to the tighter interference constraints. Furthermore, for more stringent constraints on the mutual interference between different beams, the feasible region of the beamforming vectors and IRS phase shifts becomes more restricted, which may also lead to unequal beam pattern gains for different targets, as can be seen in Fig.~\ref{figure5d}. Moreover, for a given interference constraint, the energy leakage into other directions is larger for four targets in one group compared to three targets in one group. 

\subsection{Trade-off between Beam Pattern Gain and Sensing Frequency}

Fig.~\ref{figure7} investigates the performance trade-off between beam pattern gain, sensing frequency, and transmit power for $M = K = 16$. When $F_s = 6$ Hz ($K = L$), there is only one target in each group, i.e., \textit{TD sensing} is employed, and the benchmark IRSD scheme is equivalent to the proposed scheme. When $F_s = 100$ Hz ($L=1$), all targets are sensed simultaneously, i.e., \textit{SS sensing} is used. It is observed from Fig.~\ref{figure7a} that, as expected from Theorem \ref{SensingPowerVersusFrequency}, the maximum beam pattern gain gradually decreases as the sensing frequency $F_s$ increases. Also, the beam pattern gain achieved by the proposed scheme compared to the benchmark IRSD scheme increases as the sensing frequency increases, since for multi-target sensing scenarios, splitting IRS elements cannot provide a large beam pattern gain and cannot effectively suppress the mutual interference. On the other hand, the beam pattern gain of the benchmark IRSB scheme exceeds that of the benchmark IRSD scheme for large sensing frequencies. Besides, when the sensing frequency increases, the beam pattern gain degrades slightly faster for a higher maximum transmit power $P^{\max}$ than for a lower one. The main reason for this is that a higher transmit power not only improves the effective beam power but also inevitably increases the interference in other directions, thereby resulting in a relatively more limited feasible region for the beamforming vectors and IRS phase shifts. Similarly, it can be seen from Fig.~\ref{figure7b} that the beam pattern gain does not increase linearly with maximum transmit power $P^{\max}$, mainly because larger powers cause stronger side-lobe interference. Moreover, the beam pattern gain increases more slowly when there are more targets in each group as compared to the case when there are fewer targets in each group, due to the tighter interference constraints. 

\begin{figure*}[t]
	\centering
	\setlength{\abovecaptionskip}{0.cm}
	\subfigure[Beam pattern gain versus sensing frequency.]
	{	
		\label{figure7a}
		\includegraphics[width=6.7cm]{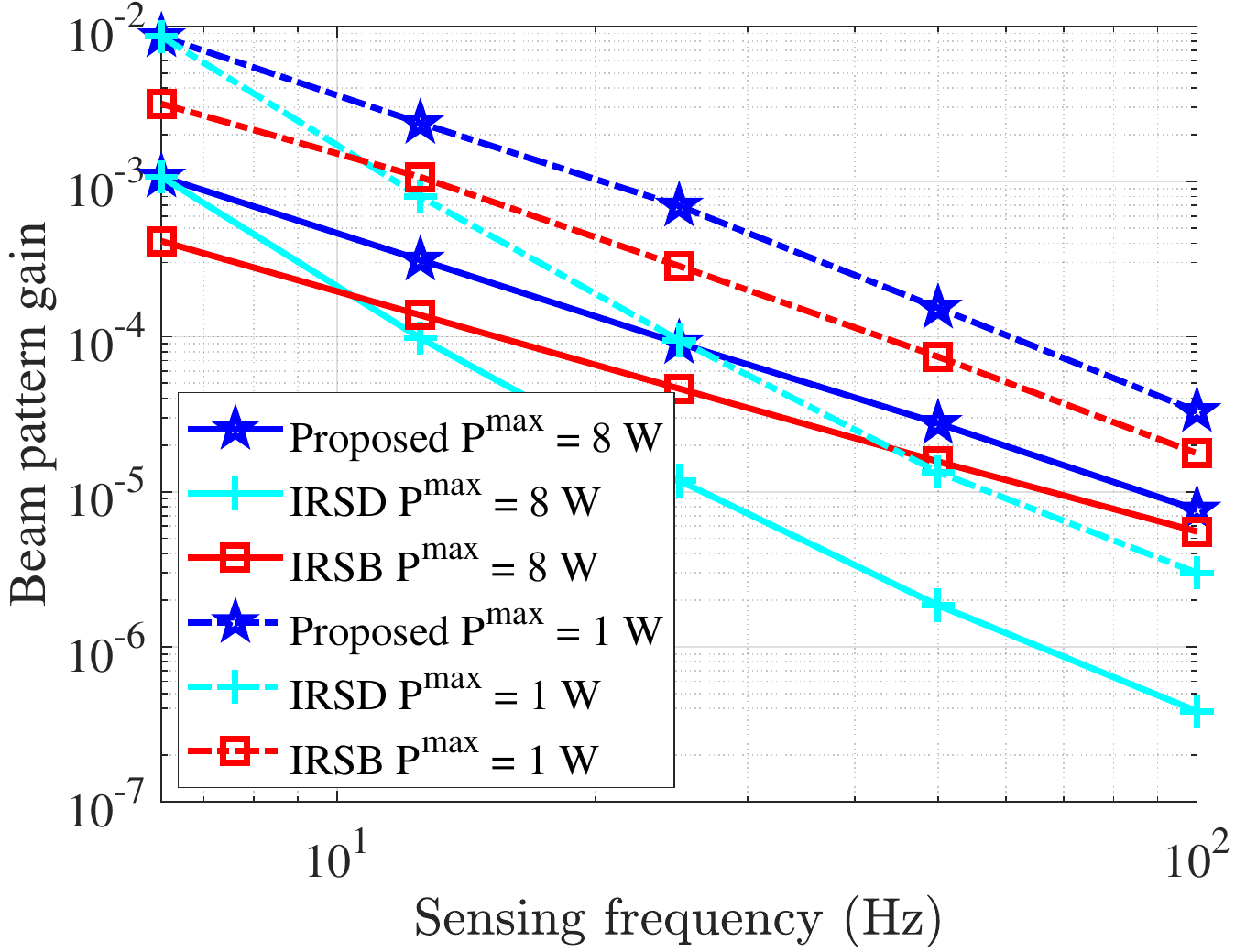}
	}\hspace{15mm}
	\subfigure[Beam pattern gain versus transmit power.]
	{	
		\label{figure7b}
		\includegraphics[width=6.7cm]{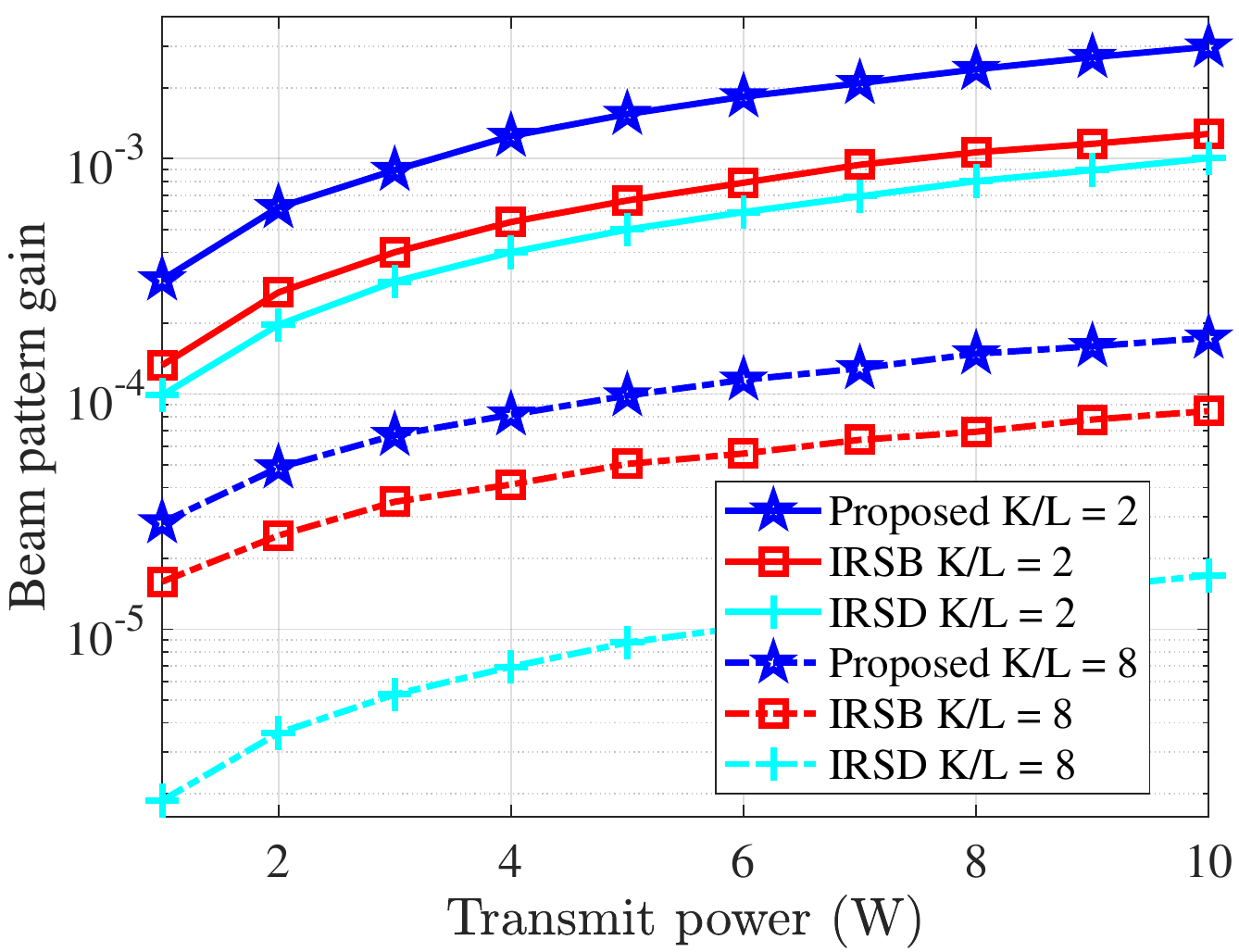}
	}
	\caption{Beam pattern gain versus sensing frequency and transmit power.}
	\label{figure7}
\end{figure*}

\begin{figure*}[t]
	\centering
	\setlength{\abovecaptionskip}{0.cm}
	\subfigure[Beam pattern gain versus number of antennas.]
	{	
		\label{figure8a}
		\includegraphics[width=6.7cm]{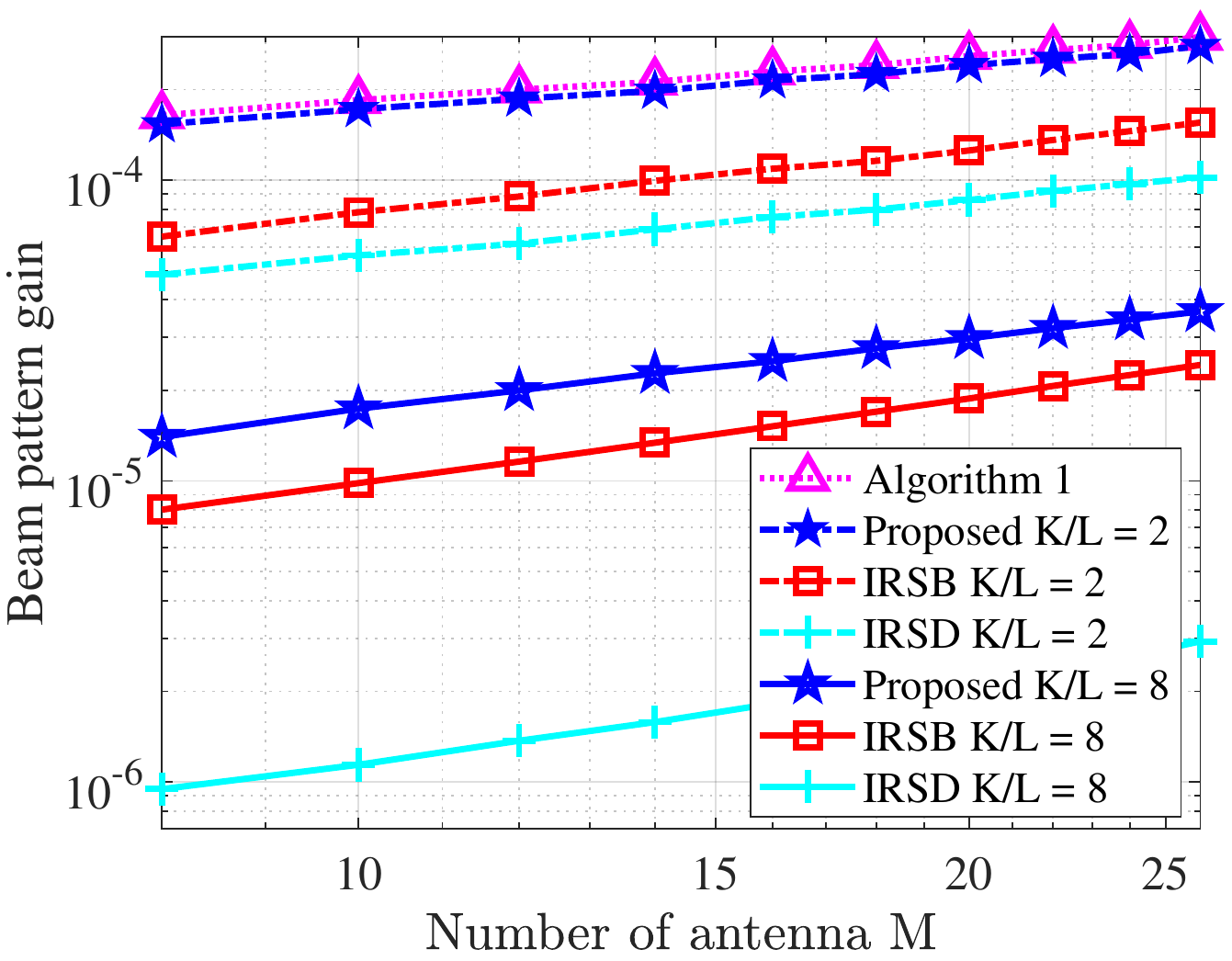}
	}\hspace{15mm}
	\subfigure[Beam pattern gain versus number of IRS elements.]
	{	
		\label{figure8b}
		\includegraphics[width=6.7cm]{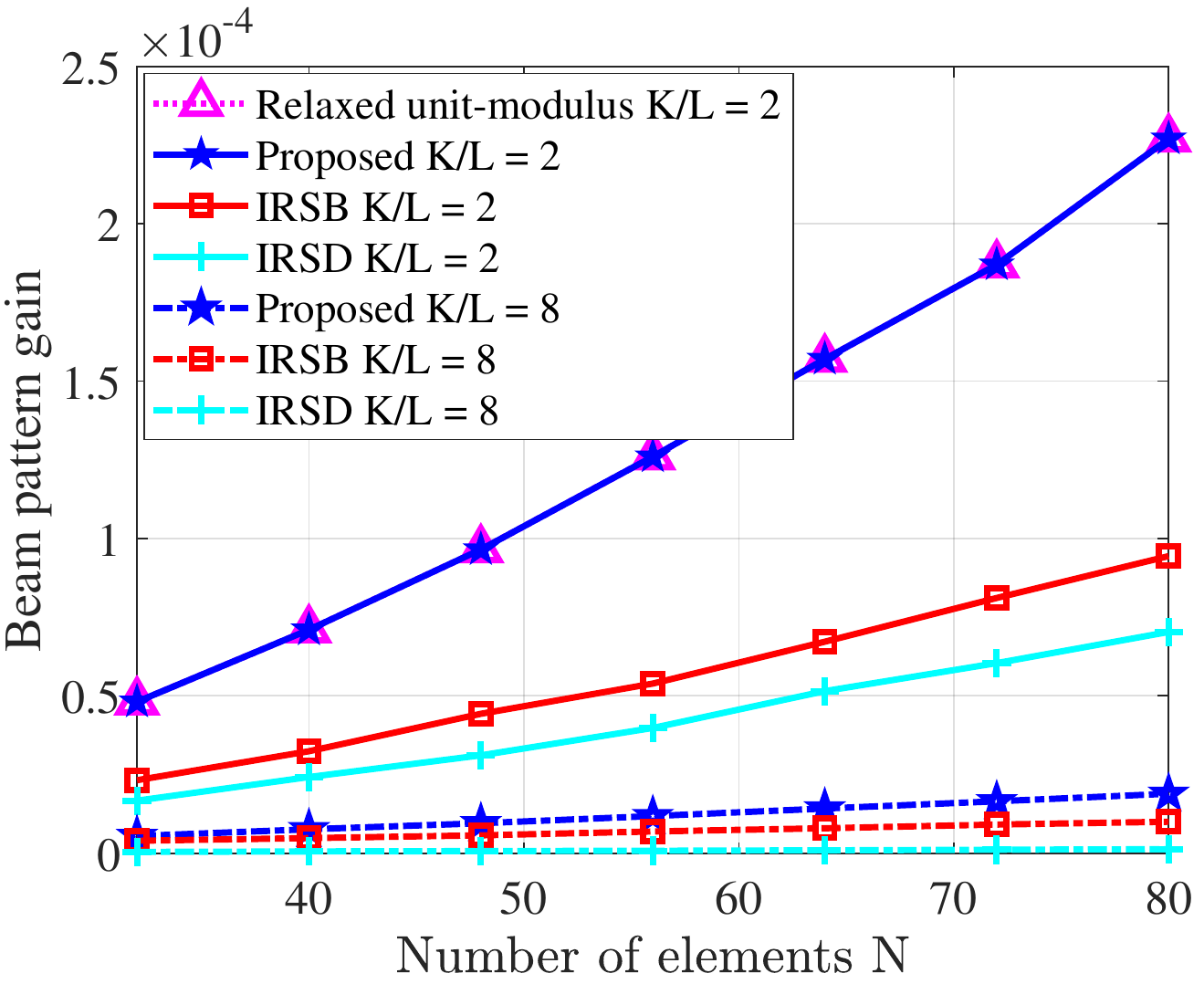}
	}	
	\caption{Beam pattern gain versus numbers of antennas and IRS elements.}
	\label{figure8}
\end{figure*}

\subsection{Impact of Numbers of Antennas and IRS Elements}
In Fig.~\ref{figure8}, we show the beam pattern gains as functions of the number of antennas and the number of IRS elements, respectively. In Fig.~\ref{figure8a}, it can be observed that the beam pattern gain of the three considered schemes increases approximately linearly with the number of antennas. Moreover, the beam pattern gain achieved by the proposed hybrid sensing scheme over the benchmark IRSD scheme for $K/L=8$ is about 4.6 times higher than that for $K/L = 2$, since the beam pattern gain is reduced greatly due to the excessive division of IRS elements in IRSD. Besides, it can be found that as the number of antennas increases, the beam pattern gain of the three considered schemes increases slightly faster if more targets are included in each group compared to the case with fewer targets per group, because it is more difficult to guarantee the interference constraints with a smaller number of antennas. For two targets per group, the beam pattern gain achieved by Algorithm 1, which is based on the derived closed-form beamforming vector in (27), is slightly higher than that of the penalty-based algorithm. Also, as shown in Fig.~\ref{figure8b}, as the number of IRS elements increases, the beam pattern gain of the considered schemes for $K/L = 2$ targets in a group increases faster than that for $K/L = 8$, since the beam pattern gain can be concentrated on fewer designated targets.  
\begin{figure}[!t]
	\begin{minipage}[t]{0.5\linewidth}
		\centering
		\includegraphics[width=6.7cm]{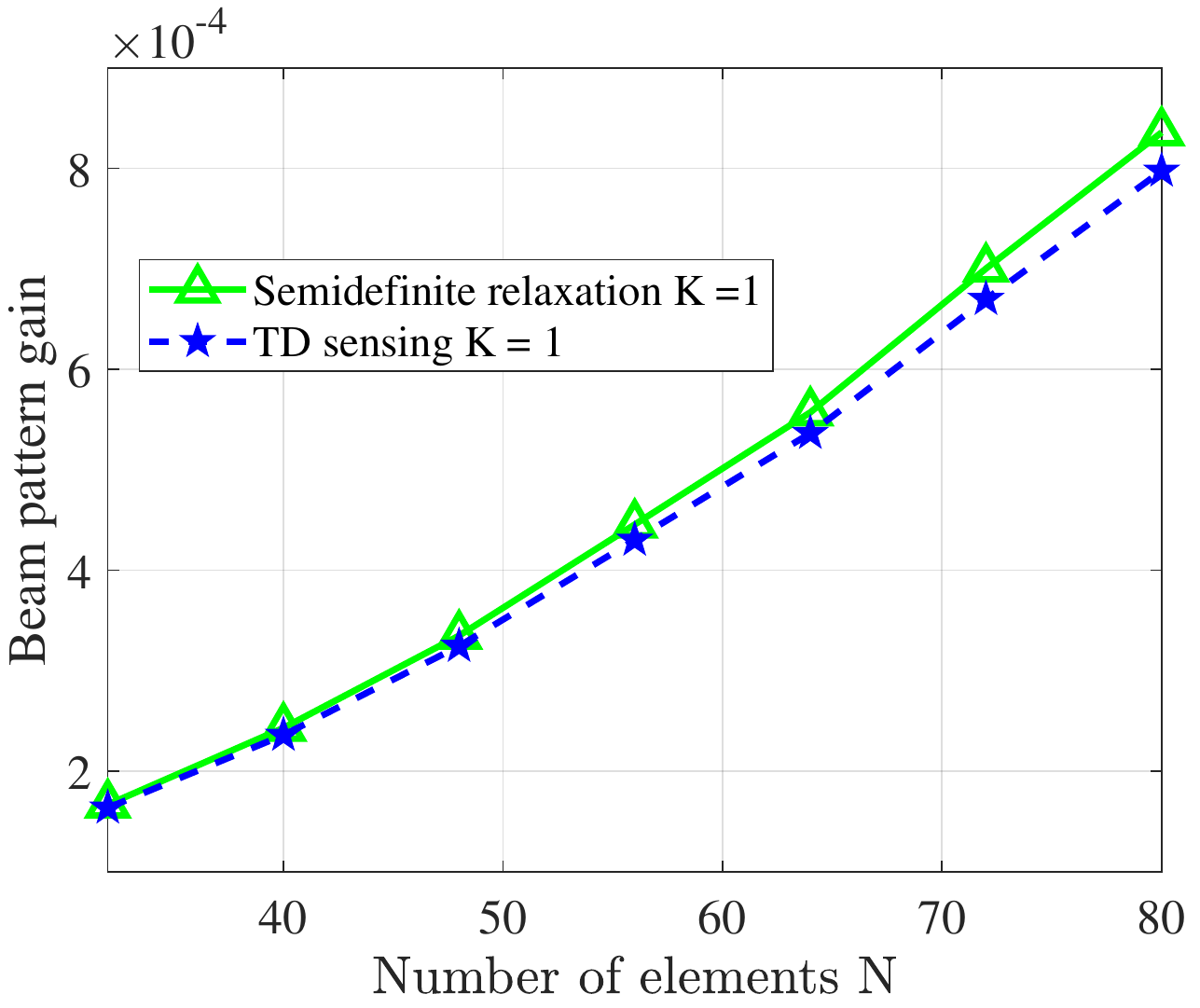}
		\caption{Comparison between the solution for TD sensing \\
			and an upper bound based on semidefinite relaxation.}
		\label{figure9a}
	\end{minipage}%
	\begin{minipage}[t]{0.5\linewidth}
		\centering
		\includegraphics[width=6.7cm]{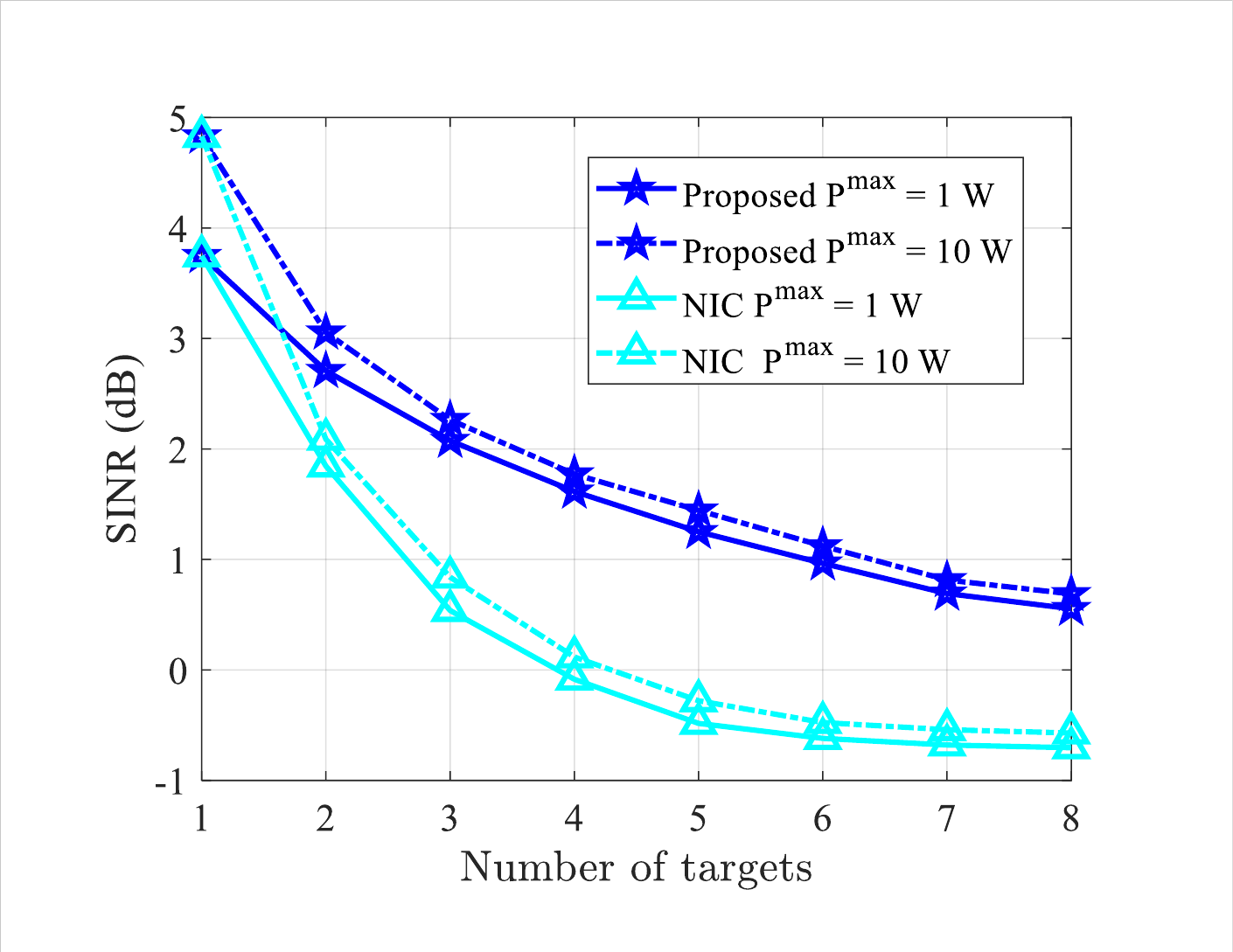}
		\caption{SINR comparison versus number of targets.}
		\label{figure9b}
	\end{minipage}
\end{figure}

\subsection{Evaluation of Relaxations and Approximations}
In this section, we evaluate the tightness of some of the adopted relaxations and approximations. In Fig.~\ref{figure8b}, we observe that for hybrid TD-SS sensing, the beam pattern gain achieved by Algorithm 2 almost coincides with the beam pattern gain obtained for the relaxed unit-modulus constraint in (\ref{UnitModule}) (i.e., the unit-modulus constraint is not enforced in the algorithm), i.e., the performance loss incurred by this relaxation is negligible. Furthermore, in Fig.~\ref{figure9a}, we compare the solution obtained for the TD sensing scheme and an upper bound obtained by solving problem (P1-TD.2) in the absence of the unit-modulus constraints. As can be observed in Fig.~\ref{figure9a}, the performance loss is negligible when the number of IRS elements is small, which means that the approximation in (\ref{RelaxingRankoneConstraintsB}) is tight in this case. When the number of IRS elements increases to 80, the performance loss compared to the upper bound is only 4.6\%.


\subsection{SINR Comparison}
To verify the effectiveness of limiting the power leakage for target detection, we compare the signal-to-interference-noise-ratio (SINR) of the reflected signals of the proposed SS sensing scheme to that of the benchmark NIC scheme. The SINR of target $k$ is ${\rm{SINR}}_k = \frac{{| 	 {\beta}_k|^2|{\bm{f}}_{k}{{\bm{P}}_k}{{\bm{w}}_k}|^2}}{{{{\bm{f}}_{k}^H\left( {{{\bm{R}}_k} + \sigma^2{{\bm{I}}_M}} \right){\bm{f}}_{k}}}}$. where ${\bm{P}}_k \!=\! {\bm{G}}^H {\bm{\Theta}}^H {\bm{a}}_k  {\bm{a}}^H_k {\bm{\Theta}} {\bm{G}}$ and ${\bm{R}}_k = \sum\nolimits_{k' \in {\cal{K}} \backslash k} | \beta_{k'}|^2 {\bm{P}}_{k'} {\bm{w}}_k {\bm{w}}^H_k {\bm{P}}^H_{k'}$. Fig.~\ref{figure9b} shows that the SINR decreases as the number of targets increases due to the decreased beam pattern gain and the increased interference. Besides, the SINR gain achieved by the proposed sensing scheme over the benchmark NIC scheme increases from about 1 dB to 1.67 dB when the number of targets increases from $K=2$ to $K=5$.

\section{Conclusions and Future Works}
\label{Conclusion}
In this paper, we investigated IRS-assisted multi-target sensing, and proposed corresponding \textit{TD}, \textit{SS}, \textit{hybrid TD-SS sensing} schemes. Accordingly, the transmit beamformer at the BS, the IRS phase shifts, and the target grouping were jointly optimized to maximize the beam pattern gain while satisfying interference constraints. For the two-target case, the optimal beamforming vector was derived in closed form to reduce the beamforming design complexity. We proved that rank-one beamforming matrices can always be constructed, and an inner approximation algorithm was presented to obtain a locally optimal solution for the proposed SS scheme. Furthermore, by dividing targets into several disjoint groups, a flexible trade-off between beam pattern gain and sensing frequency was achieved. The numerical results validated the efficiency of the proposed designs over three benchmark schemes. Our results also reveal that as the interference constraints become stricter and/or the number of targets in each group increases, the maximum beam pattern gain decreases while the power leakage in unintended directions increases. 

IRS-assisted integrated sensing and communication in environments with multiple users and multiple targets is an interesting topic for future research. In addition, efficient target grouping for multi-IRS collaborative sensing deserves further investigation.

\section*{Appendix A:  \textsc{Proof of Proposition \ref{RankOneCosensing}}} 

The obtained optimal solution $\{ \{{\bm{W}}^*_{k}\}, {\bm{V}}^*\}$ of (P1-SS-SDR) generally satisfies ${\rm{rank}}({\bm{W}}^*_{k})$  $\ge 1$, $\forall k$. As ${\rm{rank}}({\bm{V}}^*) = 1$, the optimal reflection-coefficient vector ${\bm{v}}^*$ can be recovered by performing eigenvalue decomposition over rank-one matrix ${\bm{V}}^*$. Accordingly, it can be readily proved that the solution $\{ \{\bar{\bm{W}}_{k}\}, \bar{\bm{V}} \}$ satisfies the transmit power constraint (\ref{P1-Co.1-d}). In the following, we prove that constraints ({\ref{P1-Co.1-b}}) and ({\ref{P1-Co.1-c}}) are also satisfied. First, the interference caused by the beam for target $k$ in other directions satisfies 
\vspace{-2mm}
\begin{equation}\label{ConditionCosensingVarepsilon}
	\begin{aligned}
		\sum\nolimits_{k'\in {\cal{K}} \backslash k }{\rm{Tr}}\left({\bar{\bm{W}}_{k}} {\bm{Q}}^H_{k'} \bar{\bm{V}} {\bm{Q}}_{k'}   \right) &{=}  \sum\nolimits_{k'\in {\cal{K}} \backslash k } {\rm{Tr}}\left({\bm{Q}}^H_{k'} {\bm{v}}^{*} {\bm{v}}^{*H} {\bm{Q}}_{k'}{\bm{W}}^*_{k} \frac{  {\bm{Q}}^H_k {\bm{v}}^{*}  {\bm{v}}^{H*} {\bm{Q}}_k {\bm{W}}^{*H}_{k} }{{\bm{v}}^{*H} {\bm{Q}}_k {\bm{W}}^*_{k} {\bm{Q}}^H_k {\bm{v}}^{*} } \right) \\
		&\overset{(b)}{\le}  \sum\nolimits_{k'\in {\cal{K}} \backslash k } {\rm{Tr}}\left({\bm{Q}}^H_{k'} {\bm{v}}^{*} {\bm{v}}^{*H} {\bm{Q}}_{k'} {\bm{W}}^*_{k} \right) {\rm{Tr}}\left(\frac{  {\bm{Q}}^H_k {\bm{v}}^{*}  {\bm{v}}^{H*} {\bm{Q}}_k {\bm{W}}^{*H}_{k} }{{\bm{v}}^{*H} {\bm{Q}}_k {\bm{W}}^*_{k} {\bm{Q}}^H_k {\bm{v}}^{*} } \right) \\
		& = \sum\nolimits_{k'\in {\cal{K}} \backslash k } {\rm{Tr}}\left({\bm{Q}}^H_{k'} {\bm{v}}^{*} {\bm{v}}^{*H} {\bm{Q}}_{k'} {\bm{W}}^*_{k} \right)  \le \varepsilon.
		\vspace{-2mm}	
	\end{aligned}
\end{equation} 
In (\ref{ConditionCosensingVarepsilon}), ($b$) holds for the following reasons. For notational convenience, define ${\bm{t}}_{k'} = {\bm{Q}}^H_{k'} {\bm{v}}^{*}$ and ${\bm{T}}_{k'} = {\bm{t}}_{k'}{\bm{t}}^H_{k'}$. First, since ${{\bm{T}}_{k'}}^{1/2} {\bm{W}}^*_{k} {{\bm{T}}_{k'}}^{1/2}$ is symmetric and positive semidefinite, and ${\bm{T}}_{k'} {\bm{W}}^*_{k}$ and ${{\bm{T}}_{k'}}^{1/2} {\bm{W}}^*_{k} {{\bm{T}}_{k'}}^{1/2}$ have the same non-negative eigenvalues, the eigenvalues of ${\bm{Q}}^H_k {\bm{v}}^{*}  {\bm{v}}^{H*} {\bm{Q}}_k {\bm{W}}^{*H}_{k}$ are non-negative. Second, letting ${\bm{A}} = {\bm{T}}_{k'} {\bm{W}}^*_{k}$ and ${\bm{B}} = {\bm{T}}_{k} {\bm{W}}^*_{k}$, we have ${\rm{Tr}}({\bm{A}}{\bm{B}}) \le {\rm{Tr}}({\bm{A}}){\rm{Tr}}({\bm{B}})$, due to the von Neumann's trace inequality ${\rm{Tr}}({\bm{A}}{\bm{B}}) \le \sum\nolimits_{n = 1}^N \sigma_n({\bm{A}})\sigma_n({\bm{B}})$ \cite{mirsky1975trace}. Similarly, it can be easily shown that 
\begin{equation}
		{\rm{Tr}}\left({\bar{\bm{W}}_{k}} {\bm{Q}}^H_k \bar{\bm{V}} {\bm{Q}}_k   \right) =   {\rm{Tr}}\left({\bm{v}}^{*H} {\bm{Q}}_k \bar{\bm{w}}_{k} \bar{\bm{w}}^H_{k} {\bm{Q}}^H_k {\bm{v}}^{*} \right)  \overset{(d)}{=}  {\rm{Tr}}\left({\bm{v}}^{*H} {\bm{Q}}_k {\bm{W}}^*_{k} {\bm{Q}}^H_k {\bm{v}}^{*}\right) \ge R,
\end{equation}
where ($d$) holds based on the constructed $\bar{\bm{w}}_{k}$ in (\ref{ConstructionRank1Solution}) and $\bar{\bm{W}}_{k} \succeq 0$ can be uniquely determined based on the constructed solution of problem (P1-SS), as provided in (\ref{ConstructionSolution}) and (\ref{ConstructionSolution2}).  
\par
Therefore, the constructed rank-one solution satisfies all constraints of problem (P1-SS). This completes the proof.

\section*{Appendix B: \textsc{Proof of Proposition \ref{TargetSensingStructure}}} 
Let the beamforming vector ${\bm{w}}_k$ have the following structure:
\begin{equation}\label{ConstructionBeamforming}
	{\bm{w}}_k = \rho_1 {\bm{h}}_k + \rho_2 \chi {\bm{H}}_{k'}^{\bot} {\bm{h}}_k + \sum\nolimits_{j \in {\cal{K}} \backslash \{k,k'\}} {\bm{g}}^{\bot}_j,
\end{equation}
where ${\bm{h}}_k = {\bm{G}}^H {\bm{\Theta}}^H_l {\bm{a}}_k $, ${\bm{g}}^{\bot}_j$ is the basis of the null space of $\{{\bm{h}}_k, {\bm{h}}_{k'}\}$, i.e., ${\bm{h}}^H_k {\bm{g}}^{\bot}_j = 0$ and ${\bm{h}}^H_{k'} {\bm{g}}^{\bot}_j = 0$, $\chi$ is a complex number, and $|\chi| = 1$. In (\ref{ConstructionBeamforming}), $\rho_1$ and $\rho_2$ are real numbers. According to the definition in (\ref{NullSpace}), ${\bm{H}}_{k'}^{\bot} {\bm{h}}_k$ lies in the space spanned by vectors ${\bm{h}}_k$ and ${\bm{h}}_{k'}$, but orthogonal to vector ${\bm{h}}_{k'}$. It is easy to verify that ${\bm{h}}_k$ and ${\bm{h}}_{k'}$ help improve the signal-to-interference-noise-ratio (SINR) of the beam pattern gain towards target $k$ while ${\bm{g}}^{\bot}_j$ does not. Hence, the component belonging to the basis of the null space of $\{{\bm{h}}_k, {\bm{h}}_{k'}\}$ can be removed without reducing the performance of the beam pattern gain. Hence, the optimal beamforming vector can be expressed as follows ${\bm{w}}_k^* = \rho_1 {\bm{h}}_k + \rho_2 \chi \frac{{\bm{H}}_{k'}^{\bot} {\bm{h}}_k}{\|{\bm{H}}_{k'}^{\bot} {\bm{h}}_k\|}$. 
First, if $ \frac{{{{p_{k}}|{\bm{h}}^H_{k'}{\bm{h}}_{k}|}}}{\|{\bm{h}}^H_{k}\|} \le  \varepsilon$, the optimal beamforming vector is MRT, i.e., ${\bm{w}}_k^* = \sqrt{p_k}\frac{{\bm{h}}_{k}}{\|{\bm{h}}_{k}\|}$; otherwise, according to constraint (\ref{P1-SS-b}), $\rho_1 = \frac{\sqrt{\varepsilon}}{|{\bm{h}}^H_{k'} {\bm{h}}_k|}$. Then, problem ({P1-SS}) can be reduced to
\begin{equation}\label{P1-TDProof}
	({\rm{P3}}) \quad  \mathop {\max }\limits_{{\bm{w}}_{k}} \quad  {{{\left| {{\bm{h}}_k^H\left( {\rho_1 {{\bm{h}}_k} + \rho_2 \chi {\bm{h}}_{k,k'}^{\bot} } \right)} \right|}^2}}, \quad
	\mbox{s.t.}\quad {{{\left\| {\rho_1 {\bm{h}}_k + \rho_2 \chi {\bm{h}}_{k,k'}^{\bot} } \right\|}^2} \le {p_k}}, 
\end{equation} 
where ${\bm{h}}_{k,k'}^{\bot} = {\bm{H}}_{k'}^{\bot} {\bm{h}}_k$. The objective function can be rewritten as follows:
\begin{equation}\label{ObjectiveTransformation}
	{{{\left| {{\bm{h}}_k^H\left( {\rho_1 {{\bm{h}}_k} + \rho_2 \chi {\bm{h}}_{k,k'}^{\bot} } \right)} \right|}^2}} = {\rho_1^2}{\left\| {{\bm{h}}_k^H} \right\|^4} + {\rho_2^2}{\left| {{\bm{h}}_k^H{\bm{h}}_{k,k'}^{\bot} } \right|^2} + 2\rho_1\rho_2{\left\| {{\bm{h}}_k^H} \right\|^2}{\mathop{\rm Re}\nolimits} \{ {\bm{h}}_k^H \chi {\bm{h}}_{k,k'}^{\bot} \}.
\end{equation}
It can be easily proved that the constraint in (P3) is met with equality for the optimal solution
since otherwise ${\bm{w}}_k$ can be always increased to improve the objective value until the constraint in becomes active. Hence, the constraint in (P3) can be transformed into 
\begin{equation}\label{PowerTransformation}
	{\rho_1^2}{\left\| {{\bm{h}}_k^H} \right\|^2}{{ + }}{\rho_2^2}{\left\| {{\bm{h}}_{k,k'}^{\bot} } \right\|^2} + 2\rho_1\rho_2{\mathop{\rm Re}\nolimits} \{ {\bm{h}}_k^H \chi {\bm{h}}_{k,k'}^{\bot} \}  = {p_k}.
\end{equation}
By plugging (\ref{PowerTransformation}) into (\ref{ObjectiveTransformation}), we can show
\begin{equation}
	\begin{aligned}
		&{\rho_1^2}{\left\| {{\bm{h}}_k^H} \right\|^4} + {\rho_2^2}{\left| {{\bm{h}}_k^H{\bm{h}}_{k,k'}^{\bot} } \right|^2} + 2\rho_1\rho_2{\left\| {{\bm{h}}_k^H} \right\|^2}{\mathop{\rm Re}\nolimits} \{ {\bm{h}}_k^H \chi {\bm{h}}_{k,k'}^{\bot} \} \\
		=& {\rho_1^2}{\left\| {{\bm{h}}_k^H} \right\|^4} + \frac{{{p_k} - {\rho_1^2}{{\left\| {{\bm{h}}_k^H} \right\|}^2}}}{{{{\left\| {{\bm{h}}_{k,k'}^{\bot} } \right\|}^2}}}{\left| {{\bm{h}}_k^H{\bm{h}}_{k,k'}^{\bot} } \right|^2} + 2\rho_1\rho_2  {\left( {{{\left\| {{\bm{h}}_k^H} \right\|}^2} - \frac{{{{\left| {{\bm{h}}_k^H{\bm{h}}_{k,k'}^{\bot} } \right|}^2}}}{{{{\left\| {{\bm{h}}_{k,k'}^{\bot} } \right\|}^2}}}} \right)}{\left\| {{\bm{h}}_k^H} \right\|^2}{\mathop{\rm Re}\nolimits} \{ {\bm{h}}_k^H\chi {\bm{h}}_{k,k'}^{\bot} \},
	\end{aligned}
\end{equation}
where ${{{\left\| {{\bm{h}}_k^H} \right\|}^2} - \frac{{{{\left| {{\bm{h}}_k^H{\bm{h}}_{k,k'}^{\bot} } \right|}^2}}}{{{{\left\| {{\bm{h}}_{k,k'}^{\bot} } \right\|}^2}}}} \ge 0 $. Hence, the objective function in (\ref{P1-TDProof}) achieves its maximum value when ${\bm{h}}_k^H\chi {\bm{h}}_{k,k'}^{\bot} = |\chi| |{\bm{h}}_k^H {\bm{h}}_{k,k'}^{\bot}|$, i.e.,  $\chi = \frac{\left({\bm{h}}^H_k{\bm{h}}_{k,k'}^{\bot}\right)^H}{|{\bm{h}}^H_k{\bm{h}}_{k,k'}^{\bot}|}$. Then, $\rho_2$ can be obtained by solving
\begin{equation}
	{\rho_1^2}{\left\| {{\bm{h}}_k^H} \right\|^2}{{ + }}{\rho_2^2}{\left\| {{\bm{h}}_{k,k'}^{\bot} } \right\|^2} + 2\rho_1\rho_2 | {\bm{h}}_k^H {\bm{h}}_{k,k'}^{\bot} |  = {p_k}.
\end{equation}
i.e., ${\bm{w}}_k^* = \rho_1  {\bm{h}}_k  + \rho_2 \frac{{\bm{h}}_{k,k'}^{\bot}}{\|{\bm{h}}_{k,k'}^{\bot}\|} {\cos \psi _{h_{k'}^{\bot},h_k}}$,
where $\arccos \angle \psi_{H_{k'}^{\bot},h_k} = \frac{\left({\bm{h}}^H_k{\bm{h}}_{k,k'}^{\bot}\right)^H}{|{\bm{h}}^H_k{\bm{h}}_{k,k'}^{\bot}|}$, $\rho_1 = \frac{\sqrt{\varepsilon}}{|{\bm{h}}^H_{k'} {\bm{h}}_k|}$, and $\rho_2 = \frac{{ - \rho_1\left| {{\bm{h}}_k^H{\bm{h}}_{k,k'}^{\bot} } \right| + \sqrt {{{\rho_1^2}}{{\left| {{\bm{h}}_k^H{\bm{h}}_{k,k'}^{\bot} } \right|}^2} - {{\left\| {{\bm{h}}_{k,k'}^{\bot} } \right\|}^2}\left( {{\rho_1^2}{{\left\| {{\bm{h}}_k^H} \right\|}^2} - p_k} \right)} }}{{{{\left\| {{\bm{h}}_{k,k'}^{\bot} } \right\|}^2}}}$.
By combing the above results, the proof is completed.

\footnotesize  	
\bibliography{mybibfile}

\begin{thebibliography}{10}
\providecommand{\url}[1]{#1}
\csname url@samestyle\endcsname
\providecommand{\newblock}{\relax}
\providecommand{\bibinfo}[2]{#2}
\providecommand{\BIBentrySTDinterwordspacing}{\spaceskip=0pt\relax}
\providecommand{\BIBentryALTinterwordstretchfactor}{4}
\providecommand{\BIBentryALTinterwordspacing}{\spaceskip=\fontdimen2\font plus
\BIBentryALTinterwordstretchfactor\fontdimen3\font minus
  \fontdimen4\font\relax}
\providecommand{\BIBforeignlanguage}[2]{{%
\expandafter\ifx\csname l@#1\endcsname\relax
\typeout{** WARNING: IEEEtran.bst: No hyphenation pattern has been}%
\typeout{** loaded for the language `#1'. Using the pattern for}%
\typeout{** the default language instead.}%
\else
\language=\csname l@#1\endcsname
\fi
#2}}
\providecommand{\BIBdecl}{\relax}
\BIBdecl

\bibitem{liu2021integrated}
F.~Liu, Y.~Cui, C.~Masouros, J.~Xu, T.~X. Han, Y.~C. Eldar, and S.~Buzzi,
  ``Integrated sensing and communications: Towards dual-functional wireless
  networks for {6G} and beyond,'' \emph{IEEE J. Select. Areas Commun.},
  vol.~40, no.~6, pp. 1728--1767, Jun. 2022.

\bibitem{Mu2022NOMA}
X.~Mu, Y.~Liu, L.~Guo, J.~Lin, and L.~Hanzo, ``{NOMA}-aided joint radar and
  multicast-unicast communication systems,'' \emph{IEEE J. Select. Areas
  Commun.}, vol.~40, no.~6, pp. 1978--1992, Jun. 2022.

\bibitem{Liu2020JointRadar}
F.~Liu, C.~Masouros, A.~P. Petropulu, H.~Griffiths, and L.~Hanzo, ``Joint radar
  and communication design: Applications, state-of-the-art, and the road
  ahead,'' \emph{IEEE Trans. Commun.}, vol.~68, no.~6, pp. 3834--3862, Jun.
  2020.

\bibitem{Cui2021Integrating}
Y.~Cui, F.~Liu, X.~Jing, and J.~Mu, ``Integrating sensing and communications
  for ubiquitous {IoT}: Applications, trends, and challenges,'' \emph{IEEE
  Netw.}, vol.~35, no.~5, pp. 158--167, Sep./Oct. 2021.

\bibitem{LiuX2020Joint}
X.~Liu, T.~Huang, N.~Shlezinger, Y.~Liu, J.~Zhou, and Y.~C. Eldar, ``Joint
  transmit beamforming for multiuser {MIMO} communications and {MIMO} radar,''
  \emph{IEEE Trans. Signal Process.}, vol.~68, pp. 3929--3944, 2020.

\bibitem{Meng2022UAV}
K.~Meng, Q.~Wu, S.~Ma, W.~Chen, and T.~Q.~S. Quek, ``{UAV} trajectory and
  beamforming optimization for integrated periodic sensing and communication,''
  \emph{IEEE Wireless Commun. Lett.}, 2022.

\bibitem{TargetDetectionLocalization}
I.~Bekkerman and J.~Tabrikian, ``Target detection and localization using {MIMO}
  radars and sonars,'' \emph{IEEE Trans. Signal Process.}, vol.~54, no.~10, pp.
  3873--3883, 2006.

\bibitem{sume2011radar}
A.~Sume, M.~Gustafsson, M.~Herberthson, A.~Janis, S.~Nilsson, J.~Rahm, and
  A.~Orbom, ``Radar detection of moving targets behind corners,'' \emph{IEEE
  Geosci. Remote. Sens. Lett.}, vol.~49, no.~6, pp. 2259--2267, Jun. 2011.

\bibitem{Solomitckii2021Radar}
D.~Solomitckii, M.~Heino, S.~Buddappagari, M.~A. Hein, and M.~Valkama, ``Radar
  scheme with raised reflector for {NLOS} vehicle detection,'' \emph{IEEE
  Trans. Intell. Transp. Syst.}, pp. 1--9, 2021.

\bibitem{liu2020matrix}
H.~Liu, X.~Yuan, and Y.-J.~A. Zhang, ``Matrix-calibration-based cascaded
  channel estimation for reconfigurable intelligent surface assisted multiuser
  {MIMO},'' \emph{IEEE J. Select. Areas Commun.}, vol.~38, no.~11, pp.
  2621--2636, Nov. 2020.

\bibitem{wu2019beamformingDiscrete}
Q.~Wu and R.~Zhang, ``Beamforming optimization for wireless network aided by
  intelligent reflecting surface with discrete phase shifts,'' \emph{IEEE
  Trans. Commun.}, vol.~68, no.~3, pp. 1838--1851, Mar. 2019.

\bibitem{you2020channel}
C.~You, B.~Zheng, and R.~Zhang, ``Channel estimation and passive beamforming
  for intelligent reflecting surface: Discrete phase shift and progressive
  refinement,'' \emph{IEEE J. Select. Areas Commun.}, vol.~38, no.~11, pp.
  2604--2620, Nov. 2020.

\bibitem{Hua2022Novel}
M.~Hua, Q.~Wu, L.~Yang, R.~Schober, and H.~V. Poor, ``A novel wireless
  communication paradigm for intelligent reflecting surface based symbiotic
  radio systems,'' \emph{IEEE Trans. Signal Process.}, vol.~70, pp. 550--565,
  2022.

\bibitem{chen2021irs}
G.~Chen, Q.~Wu, W.~Chen, D.~W.~K. Ng, and L.~Hanzo, ``{IRS}-aided wireless
  powered {MEC} systems: {TDMA} or {NOMA} for computation offloading?''
  \emph{arXiv preprint arXiv:2108.06120}, 2021.

\bibitem{Abeywickrama2020Intelligent}
S.~Abeywickrama, R.~Zhang, Q.~Wu, and C.~Yuen, ``Intelligent reflecting
  surface: Practical phase shift model and beamforming optimization,''
  \emph{IEEE Trans. Commun.}, vol.~68, no.~9, pp. 5849--5863, Sep. 2020.

\bibitem{hua2022joint}
M.~Hua, Q.~Wu, C.~He, S.~Ma, and W.~Chen, ``Joint active and passive
  beamforming design for {IRS}-aided radar-communication,'' \emph{arXiv
  preprint arXiv:2203.14532}, 2022.

\bibitem{Fang2021SINR}
S.~Fang, G.~Chen, P.~Xu, J.~Tang, and J.~A. Chambers, ``{SINR} maximization for
  {RIS}-assisted secure dual-function radar communication systems,'' in
  \emph{Proc. IEEE GLOBECOM}, Dec. 2021.

\bibitem{zheng2021double}
B.~Zheng, C.~You, and R.~Zhang, ``Double-{IRS} assisted multi-user {MIMO}:
  Cooperative passive beamforming design,'' \emph{IEEE Trans. Wireless
  Commun.}, vol.~20, no.~7, pp. 4513--4526, Jul. 2021.

\bibitem{Lu2021intelligent}
W.~Lu, B.~Deng, Q.~Fang, X.~Wen, and S.~Peng, ``Intelligent reflecting
  surface-enhanced target detection in {MIMO} radar,'' \emph{IEEE Sens. J.},
  vol.~5, no.~2, pp. 1--4, Feb. 2021.

\bibitem{Lu2021Target}
W.~Lu, Q.~Lin, N.~Song, Q.~Fang, X.~Hua, and B.~Deng, ``Target detection in
  intelligent reflecting surface aided distributed {MIMO} radar systems,''
  \emph{IEEE Sens. J.}, vol.~5, no.~3, pp. 1--4, Mar. 2021.

\bibitem{Zhang2021MetaLocalization}
H.~Zhang, H.~Zhang, B.~Di, K.~Bian, Z.~Han, and L.~Song, ``{MetaLocalization}:
  Reconfigurable intelligent surface aided multi-user wireless indoor
  localization,'' \emph{IEEE Trans. Wireless Commun.}, vol.~20, no.~12, pp.
  7743--7757, Dec. 2021.

\bibitem{buzzi2021foundations}
S.~Buzzi, E.~Grossi, M.~Lops, and L.~Venturino, ``Foundations of {MIMO} radar
  detection aided by reconfigurable intelligent surfaces,'' \emph{IEEE Trans.
  Signal Process.}, vol.~70, pp. 1749--1763, 2022.

\bibitem{buzzi2021radar}
------, ``Radar target detection aided by reconfigurable intelligent
  surfaces,'' \emph{IEEE Signal Process. Lett.}, vol.~28, pp. 1315--1319, 2021.

\bibitem{Prasobh2021Joint}
R.~S. Prasobh~Sankar, B.~Deepak, and S.~P. Chepuri, ``Joint communication and
  radar sensing with reconfigurable intelligent surfaces,'' in \emph{Proc. IEEE
  SPAWC}, Seq. 2021, pp. 471--475.

\bibitem{zhang2022metaradar}
H.~Zhang, H.~Zhang, B.~Di, K.~Bian, Z.~Han, and L.~Song, ``{MetaRadar}:
  Multi-target detection for reconfigurable intelligent surface aided radar
  systems,'' \emph{IEEE Trans. Wireless Commun.}, 2022.

\bibitem{Aubry2021Reconfigurable}
A.~Aubry, A.~De~Maio, and M.~Rosamilia, ``Reconfigurable intelligent surfaces
  for {N}-{LOS} radar surveillance,'' \emph{IEEE Trans. Veh. Technol.},
  vol.~70, no.~10, pp. 10\,735--10\,749, Oct. 2021.

\bibitem{Rambach2013Colocated}
K.~Rambach and B.~Yang, ``Colocated {MIMO} radar: Cramer-rao bound and optimal
  time division multiplexing for {DOA} estimation of moving targets,'' in
  \emph{Proc. IEEE ICASSP}, May 2013, pp. 4006--4010.

\bibitem{Hügler2018Radar}
P.~Hügler, F.~Roos, M.~Schartel, M.~Geiger, and C.~Waldschmidt, ``Radar taking
  off: New capabilities for {UAVs},'' \emph{IEEE Microw. Mag.}, vol.~19, no.~7,
  pp. 43--53, Nov.-Dec. 2018.

\bibitem{Hakobyan2019High}
G.~Hakobyan and B.~Yang, ``High-performance automotive radar: A review of
  signal processing algorithms and modulation schemes,'' \emph{IEEE Signal
  Process. Mag.}, vol.~36, no.~5, pp. 32--44, 2019.

\bibitem{song2021joint}
X.~Song, D.~Zhao, H.~Hua, T.~X. Han, X.~Yang, and J.~Xu, ``Joint transmit and
  reflective beamforming for {IRS}-assisted integrated sensing and
  communication,'' \emph{arXiv preprint arXiv:2111.13511}, 2021.

\bibitem{aubry2021ris}
A.~Aubry, A.~De~Maio, and M.~Rosamilia, ``{RIS-Aided} radar sensing in {N-LOS}
  environment,'' in \emph{Proc. IEEE MetroAeroSpace}, Jun. 2021, pp. 277--282.

\bibitem{Luzhou2006Radar}
L.~Xu, J.~Li, and P.~Stoica, ``Radar imaging via adaptive {MIMO} techniques,''
  in \emph{Proc. EUSIPCO}, Sep. 2006, pp. 1--5.

\bibitem{Stoica2007Probing}
P.~Stoica, J.~Li, and Y.~Xie, ``On probing signal design for {MIMO} radar,''
  \emph{IEEE Trans. Signal Process.}, vol.~55, no.~8, pp. 4151--4161, 2007.

\bibitem{Hu2021Semi}
X.~Hu, R.~Zhang, and C.~Zhong, ``Semi-passive elements assisted channel
  estimation for intelligent reflecting surface-aided communications,''
  \emph{IEEE Trans. Wireless Commun.}, vol.~21, no.~2, pp. 1132--1142, Feb.
  2022.

\bibitem{Zheng2020Intelligent}
B.~Zheng and R.~Zhang, ``Intelligent reflecting surface-enhanced {OFDM}:
  Channel estimation and reflection optimization,'' \emph{IEEE Wireless Commun.
  Lett.}, vol.~9, no.~4, pp. 518--522, Apr. 2020.

\bibitem{Wei2021Sum}
Z.~Wei, Y.~Cai, Z.~Sun, D.~W.~K. Ng, J.~Yuan, M.~Zhou, and L.~Sun, ``Sum-rate
  maximization for {IRS}-assisted {UAV} {OFDMA} communication systems,''
  \emph{IEEE Trans. Wireless Commun.}, vol.~20, no.~4, pp. 2530--2550, Apr.
  2021.

\bibitem{Xiao2022Waveform}
Z.~Xiao and Y.~Zeng, ``Waveform design and performance analysis for full-duplex
  integrated sensing and communication,'' \emph{IEEE J. Select. Areas Commun.},
  vol.~40, no.~6, pp. 1823--1837, Jun. 2022.

\bibitem{Blunt2016Overview}
S.~D. Blunt and E.~L. Mokole, ``Overview of radar waveform diversity,''
  \emph{IEEE Aerosp. Electron. Syst. Mag.}, vol.~31, no.~11, pp. 2--42, Nov.
  2016.

\bibitem{wei2006uplink}
H.~Wei and L.~Hanzo, ``On the uplink performance of {LAS-CDMA},'' \emph{IEEE
  Trans. Wireless Commun.}, vol.~5, no.~5, pp. 1187--1196, May 2006.

\bibitem{richards2014fundamentals}
M.~A. Richards, \emph{Fundamentals of radar signal processing}.\hskip 1em plus
  0.5em minus 0.4em\relax McGraw-Hill Education, 2014.

\bibitem{Fishler2006Spatial}
E.~Fishler, A.~Haimovich, R.~Blum, L.~Cimini, D.~Chizhik, and R.~Valenzuela,
  ``Spatial diversity in radars—models and detection performance,''
  \emph{IEEE Trans. Signal Process.}, vol.~54, no.~3, pp. 823--838, Mar. 2006.

\bibitem{Lu2021Aerial}
H.~Lu, Y.~Zeng, S.~Jin, and R.~Zhang, ``Aerial intelligent reflecting surface:
  Joint placement and passive beamforming design with {3D} beam flattening,''
  \emph{IEEE Trans. Wireless Commun.}, vol.~20, no.~7, pp. 4128--4143, Jul.
  2021.

\bibitem{Yang2020Federated}
K.~Yang, T.~Jiang, Y.~Shi, and Z.~Ding, ``Federated learning via over-the-air
  computation,'' \emph{IEEE Trans. Wireless Commun.}, vol.~19, no.~3, pp.
  2022--2035, Mar. 2020.

\bibitem{bertsekas1997nonlinear}
D.~P. Bertsekas, ``Nonlinear programming,'' \emph{J. Oper. Res. Soc.}, vol.~48,
  no.~3, pp. 334--334, 1997.

\bibitem{Michael2014cvx}
M.~Grant and S.~Boyd, ``{CVX}: Matlab software for disciplined convex
  programming, version 2.1,'' \url{http://cvxr.com/cvx}, Mar. 2014.

\bibitem{marks1978general}
B.~R. Marks and G.~P. Wright, ``A general inner approximation algorithm for
  nonconvex mathematical programs,'' \emph{Oper. Res.}, vol.~26, no.~4, pp.
  681--683, Aug. 1978.

\bibitem{tao1997convex}
P.~D. Tao and L.~H. An, ``Convex analysis approach to {DC} programming: theory,
  algorithms and applications,'' \emph{Acta mathematica vietnamica}, vol.~22,
  no.~1, pp. 289--355, 1997.

\bibitem{griva2009linear}
I.~Griva, S.~G. Nash, and A.~Sofer, \emph{Linear and nonlinear
  optimization}.\hskip 1em plus 0.5em minus 0.4em\relax SIAM, 2009, vol. 108.

\bibitem{Sun2018Robust}
Y.~Sun, D.~W.~K. Ng, J.~Zhu, and R.~Schober, ``Robust and secure resource
  allocation for full-duplex {MISO} multicarrier {NOMA} systems,'' \emph{IEEE
  Trans. Commun.}, vol.~66, no.~9, pp. 4119--4137, Sep. 2018.

\bibitem{Cai2017Joint}
Y.~Cai, Q.~Shi, B.~Champagne, and G.~Y. Li, ``Joint transceiver design for
  secure downlink communications over an amplify-and-forward {MIMO} relay,''
  \emph{IEEE Trans. Commun.}, vol.~65, no.~9, pp. 3691--3704, Sep. 2017.

\bibitem{zhang2019securing}
G.~Zhang, Q.~Wu, M.~Cui, and R.~Zhang, ``Securing {UAV} communications via
  joint trajectory and power control,'' \emph{IEEE Trans. Wireless Commun.},
  vol.~18, no.~2, pp. 1376--1389, Feb. 2019.

\bibitem{mirsky1975trace}
L.~Mirsky, ``A trace inequality of {John von Neumann},'' \emph{Monatshefte
  f{\"u}r Mathematik}, vol.~79, no.~4, pp. 303--306, 1975.

\end{thebibliography}
\bibliographystyle{IEEEtran}

\end{spacing}
\end{document}